\documentclass[11pt,letterpaper]{article}
\usepackage{mathtools}
\usepackage[english]{babel}
\usepackage[T1]{fontenc}
\usepackage[utf8]{inputenc}
\usepackage{lmodern}
\usepackage{ulem} 
\usepackage{amsmath}
\usepackage{amssymb}
\usepackage{amsthm}
\usepackage{amsfonts}
\usepackage{mathdots}
\usepackage{braket}
\usepackage{enumitem}
\usepackage{graphicx}
\usepackage{xcolor}
\usepackage{url}
\usepackage{soul}
\usepackage{hyperref}
\usepackage{arydshln}
\usepackage[top=2cm, bottom=2cm, left=2cm, right=2cm, columnsep=20pt]{geometry}
\usepackage{tikz}
\usetikzlibrary{calc}
\newcommand{\blk}[2]{\tikz[remember picture,baseline=(#1.base)]{\node[inner sep=1pt,outer sep=0pt] (#1) {$#2$};}}

\newtheorem{proposition}{Proposition}
\newtheorem{theorem}{Theorem}
\newtheorem{remark}{Remark}

\begin{document}

\title{\bf Para families of orthogonal polynomials}

\author{
St\'ephane Z.\ Beaulac\textsuperscript{$1$}\footnote{E-mail: stephane.jr.beaulac@umontreal.ca}~,
Nicolas Cramp\'e\textsuperscript{$1,2$}\footnote{E-mail: nicolas.crampe@cnrs.fr}~,
Quentin Labriet\textsuperscript{$1$}\footnote{E-mail: quentin.labriet@umontreal.ca}~,\\
Lucia Morey\textsuperscript{$1$}\footnote{E-mail: lucia.morey@umontreal.ca}~,
Rafael I.\ Nepomechie\textsuperscript{$3$}\footnote{E-mail: nepomechie@miami.edu}~,
Luc Vinet\textsuperscript{$1$}\footnote{E-mail: luc.vinet@umontreal.ca}~.
\vspace{0.5cm}\\
\textsuperscript{$1$}
\small Centre de Recherches Math\'ematiques, Universit\'e de Montr\'eal, P.O.\ Box 6128,\\
\small Centre-ville Station, Montr\'eal (Qu\'ebec), H3C 3J7, Canada.\vspace{0.2cm}\\
\textsuperscript{$2$}
\small Laboratoire d'Annecy de Physique Th\'eorique, 9 Chemin de Bellevue,\\
\small BP 110, Annecy-le-Vieux, F-74941 Annecy Cedex, France.\vspace{0.2cm}\\
\textsuperscript{$3$}
\small Department of Physics, University of Miami, P.O.\ Box 248046,\\
\small Coral Gables, FL 33124, USA.\vspace{0.2cm}\\
}
\date{}
\maketitle

\begin{center}
\begin{minipage}{13cm}
\begin{center}
{\bf Abstract}\\
\end{center}
\small
The para-Krawtchouk polynomials arose in the search for spin chains with
perfect state transfer; the para-Racah, $q$-para-Racah and para-Bannai--Ito
polynomials followed, obtained through singular truncations of families
of the Askey scheme and used in turn to design spin chains.  All are
orthogonal on bi-lattices; the prefix goes back to the para-Krawtchouk case,
whose spectrum is that of the parabose oscillator.  The representations that
these polynomials provide of the corresponding algebras --- Hahn, Racah,
$q$-Racah, Bannai--Ito and complementary Bannai--Ito --- are shown to be
reducible.  Each decomposes into a direct sum of two irreducible modules of
the same algebra, one attached to each of the two interlaced sub-lattices,
and the parameters of the submodules are given.  The para polynomials are
thereby expressed in terms of the classical polynomials that the submodules
carry, and the pairing $\lambda_n=\lambda_{N-n}$ of their eigenvalues is
accounted for; the three features of the bi-lattice setting on which the
reduction rests are identified.  The properties of the para families are then
presented in a unified way in the light of this decomposition, together with
the constructions through which these families arise and the settings in
which they have been put to  use.
\end{minipage}
\end{center}

\vspace{0.5cm}

\section{Introduction}
\label{intro}

The transport of a quantum state from one end of a chain of coupled spins to
the other can sometimes be effected by the dynamics of the chain itself, without external
control.  In the XX model with inhomogeneous nearest-neighbour couplings, the
one-excitation dynamics is governed by a Jacobi matrix $J$, and the transfer
is perfect when the differences between neighbouring eigenvalues of $J$ are
odd integers up to a common factor \cite{kay,PST_construct}.  Designing such a
chain is therefore an inverse spectral problem: one posits a spectrum meeting
that condition and reconstructs the couplings.  Since a Jacobi matrix carries
a three-term recurrence relation, the reconstruction is performed by the
orthogonal polynomials attached to it, and the choice of spectrum is the
choice of a grid.

The simplest grid is uniform.  It yields the binomial distribution, the
Krawtchouk polynomials, and the model of Albanese, Christandl, Datta and
Ekert \cite{ACDE}.  Another natural case is a bi-lattice, that is, the union of
two uniform lattices shifted with respect to one another. Orthogonal
polynomials on such grids were considered in \cite{SVA}; the grid
that meets the transfer condition produced a family of orthogonal polynomials whose recurrence
coefficients are explicit, and which were identified in
\cite{parakrawtchouk}. It was called para-Krawtchouk, the prefix recording
that its spectrum is that of the parabose oscillator \cite{rosenblum}.%
\footnote{The prefix is unrelated to the para-orthogonal polynomials on the
unit circle of \cite{JNT}, which are the combinations
$\Phi_n(z)+\tau\Phi_n^{*}(z)$, $|\tau|=1$, used in Szeg\H{o} quadrature.}

Three further families followed, obtained the other way round --- first as
polynomials and were then used to construct chains.  Each arises from a truncation of a family of the
Askey scheme that terminates the recurrence relation without terminating the
hypergeometric series in the usual manner: the para-Racah polynomials from the
Wilson polynomials \cite{pararacah}, the $q$-para-Racah polynomials from the
Askey--Wilson polynomials \cite{qPR}, and the para-Bannai--Ito polynomials
from the Bannai--Ito and complementary Bannai--Ito polynomials \cite{paraBI}.
All are orthogonal on bi-lattices, quadratic or $q$-quadratic in place of
linear.  These four families are what we shall call the para families. They have since occurred in two kinds of settings.
\paragraph{A. Quantum spin chains}
Beyond perfect state transfer, the para-Krawtchouk polynomials proved
revealing in the study of fractional revival, where the excitation is split
coherently between the two ends of the chain and later reconstituted.
Couplings realizing this effect had been obtained numerically \cite{BCB}; the
recurrence coefficients of the para-Krawtchouk polynomials supplied an
analytic model, with two parameters and a revival time that does not depend
on the length of the chain \cite{revival}.  Fractional revival on the
para-Krawtchouk chain has in turn been put to use: combined with dual-rail
encoding, it yields a protocol transferring a state perfectly in an average
time below the bound that holds for a single chain \cite{XKT}.  The other para
families supply
chains of the same kind on their respective grids, and the same circle of
ideas bears on almost perfect state transfer \cite{APST}.

\paragraph{B. Quadratic algebras}
Consider the difference operators $S=A_1T_++A_2T_-$ on the linear grid, with
$T_{\pm}f(x)=f(x\pm1)$, that raise the degree of a polynomial by at most one:
\begin{equation}
SP_n(x)=\widetilde{P}_{n+1}(x),
\qquad n=0,1,2,\ldots
\label{SHeundef}
\end{equation}
These are the S-Heun operators \cite{sklyanin}, second-order difference
operators without diagonal term subject to a degree-raising condition.  They
span a five-dimensional space, and a basis of it consists of one operator
that lowers the degree, two that leave it unchanged and two that raise it.
The interest of this small set is that everything usually attached to a
family of orthogonal polynomials is built from it by taking products of two operators:
the difference equation and the multiplication by the variable, that is the
two bispectral operators, as well as the forward, backward and contiguity
operators.  The para-Krawtchouk polynomials arise here in a way that will be
made precise in Subsection \ref{skl}: the S-Heun operators realize a
degenerate Sklyanin algebra $Skl_4$, that algebra has a finite-dimensional
representation exactly when a certain truncation condition holds, and the
para-Krawtchouk polynomials are the basis of that representation.  The
condition in question is the one of Section \ref{generalprop}, seen from the
algebraic side.  The para-Krawtchouk polynomials appear again among the
tridiagonal representations of the deformed Jordan plane \cite{jordan}.  More
generally, each para family provides a finite-dimensional representation of
the algebra of the family it is truncated from --- Hahn \cite{mutual}, Racah,
$q$-Racah, Bannai--Ito \cite{BI} and complementary Bannai--Ito \cite{CBI}. 

These representations are, however, reducible: they decompose naturally according to the two sub-lattices of the underlying grid. The difference operator that
each para family diagonalizes shifts within one sub-lattice and never across,
and its coefficients vanish at the two ends of each sub-lattice.  Reordering
the grid by parity therefore splits the operator into two blocks, while the
multiplication operator, being diagonal, preserves the splitting.  The module
carried by a para family is thus reducible, and we shall show that it is the
direct sum of two irreducible modules of the same algebra, whose parameters we
identify.  The para polynomials are then expressed in terms of the classical
polynomials that the two submodules carry, and the pairing
$\lambda_n=\lambda_{N-n}$ of the eigenvalues, which the para families have
exhibited from the outset, is accounted for.  That non-trivial sub-lattices
arise in this fashion is not without precedent: periodic reductions of the
factorization chain were seen to produce the Hahn algebra together with
sub-lattices of the spectral points \cite{SVZ}.  The present decomposition
gives a common ground on which the properties of the four families can be set
out together, and we take the occasion to bring together the work in which
they have appeared, which is dispersed over a number of papers.

The paper is organized as follows.  Section \ref{generalprop} obtains the
para-Krawtchouk polynomials from the continuous Hahn polynomials by a
non standard truncation and collects their properties.  Section
\ref{Hahn algebra} recalls the Hahn algebra they realize.  Section
\ref{reducibility} establishes the decomposition, in the spectral and in the
recurrence representation.  Section \ref{GM}
isolates the three features of the bi-lattice setting on which the
decomposition rests and carries it out for the para-Racah, $q$-para-Racah and
para-Bannai--Ito polynomials.  Section \ref{BIreal} treats the realization of
the para-Krawtchouk polynomials inside the complementary Bannai--Ito and
Bannai--Ito families, whose modules are irreducible for generic values of the
parameters: the values at which the para-Krawtchouk polynomials are recovered
meet a truncation condition, and the module ceases to be irreducible there.
Section
\ref{occurrences} sets out the constructions through which the para families
arise and the settings in which they have occurred.  Concluding remarks and
open problems follow.

\paragraph{Notations}
We recall the definitions of the generalized ($q$-)hypergeometric functions
used in this paper.  The generalized hypergeometric functions \cite{koekoek}
are defined by
\begin{align}
{}_{r+1}F_r\left({{-n,\;a_1,\; a_2,\; \cdots,\; a_{r}  }\atop
{b_1,\; b_2,\; \cdots,\; b_{r} }}\;\Bigg\vert \; z\right)=\sum_{k=0}^{n}
\frac{(-n,a_1,a_2,\cdots,a_r)_k}{k!(b_1,b_2,\cdots,b_r)_k}z^k\,,
\end{align}
for $r,n$ non-negative integers, where the Pochhammer symbols are
\begin{align}
(b_1,b_2,\cdots,b_r)_k=(b_1)_k(b_2)_k\cdots (b_r)_k\,,\qquad
(b)_k=b(b+1)\cdots (b+k-1)\,.
\end{align}
The $q$-hypergeometric functions \cite{koekoek} are defined by
\begin{equation}
{}_{r+1}\phi_r
\left(
\begin{matrix}
q^{-n}, a_1, a_2, \ldots, a_r\\
b_1, b_2, \ldots, b_r
\end{matrix}
\,;\, q, z
\right)
=
\sum_{k=0}^{n}
\frac{(q^{-n},a_1,a_2,\ldots,a_r;q)_k}
{(q,b_1,b_2,\ldots,b_r;q)_k}
z^k,
\label{qhypgeom}
\end{equation}
for $r,n$ non-negative integers, where the $q$-Pochhammer symbols are
\begin{equation}
(a_1,a_2,\ldots,a_r;q)_k
=
(a_1;q)_k(a_2;q)_k\cdots(a_r;q)_k,
\qquad
(a;q)_k
=
\prod_{j=0}^{k-1}(1-aq^j).
\label{qpoch}
\end{equation}

\section{The para-Krawtchouk polynomials}
\label{generalprop}

Finite families of orthogonal polynomials can be obtained from infinite ones
by imposing a termination condition on the three-term recurrence relation.
The para-Krawtchouk polynomials arise in this way from the continuous Hahn
polynomials \cite{sklyanin} through a non-standard truncation.  We
review the procedure and collect the properties of the polynomials it
produces.  The continuous Hahn polynomials, which depend on four parameters $a,b,c,d$, are recalled in Appendix
\ref{app:chahn}.

\paragraph{Truncation condition.}
The continuous Hahn polynomials $p_n(x;a,b,c,d)$ defined in \eqref{hyperc_hahn}, have recurrence coefficients $A_n, C_n$ given by \eqref{AC}. The recurrence relation \eqref{eq:3t-contHahn} terminates after a finite number of steps when
\begin{equation}
A_N C_{N+1}=0,
\end{equation}
Besides the standard truncation that leads to the Hahn polynomials ($a+d=-N\in\mathbb{Z}_{>0})$, this can be achieved through the parametrization
\begin{equation}
1-(a+b+c+d)=N\in\mathbb{Z}_{>0}.
\label{truncation}
\end{equation}

\paragraph{Parametrization.}
Let $N=2j+p$, where $p=0,1$ according as $N$ is even or odd, respectively, and $j$ is a
positive integer. While truncating the numerator, the above condition  \eqref{truncation} produces a singularity in the expression. It is thus necessary to take the following limit to avoid this problem
\begin{equation}
c=-a-j+e_1t , \quad  b=-d-j+1-p+e_1t\,,\quad t\to0\,,
\label{parametrization}
\end{equation}
where $e_1\neq 0$ is an arbitrary constant.

\paragraph{Recurrence relation.}
Substituting \eqref{parametrization} into the recurrence coefficients
$A_n,C_n$ of the continuous Hahn polynomials \eqref{AC} and taking the limit $t\to0$, one obtains \cite{sklyanin}
\begin{equation}
\begin{aligned}
A_n
&=
-\frac{1}{2}
\frac{(2j+p-n)(2j+2p-2-2n+\gamma)}
{2(2n-2j-2p+1)},
\\[1em]
C_n
&=
-\frac{1}{2}
\frac{n(2j+2-2n-\gamma)}
{2(2n-2j-1)},
\end{aligned}
\end{equation}
where $
    \gamma=(b+c)-(a+d)+1-p= 2 - 2 a - 2 d - 2 j - 2 p$
    \label{gamma}
    , which depends on $p$.
These coefficients coincide with the recurrence coefficients of the
para-Krawtchouk polynomials in the variable $-x/2$ \cite{parakrawtchouk}.
Setting
\begin{equation}
U_n = 4A_{n-1}C_n\,,\qquad
B_n = 2(A_n + C_n)\,,
\end{equation}
one obtains
\begin{align}
B_n
&=-\frac{(2j+p-n)(2j+2p-2-2n+\gamma)}{2(2n-2j-2p+1)}-\frac{n(2j+2-2n-\gamma)}{2(2n-2j-1)}\,,\label{recurrenceB}\\
U_n
&=\frac{n\left(2 j +p -n +1\right) \left(2 j +2 p -2 n +\gamma \right)  \left(2 j +2-2 n -\gamma \right)}{4 \left(2 n -1-2 j -2 p \right) \left(2 n -2 j -1\right)}\,.\label{recurrenceU}
\end{align}
The monic para-Krawtchouk polynomials $P_n(x)$ thus satisfy
\begin{equation}
xP_n(x)=P_{n+1}(x)+B_nP_n(x)+U_nP_{n-1}(x),
\qquad n=0,1,\ldots,N,
\label{monicPKrecurrence}
\end{equation}
with $P_{-1}(x)=0$ and $P_0(x)=1$, where $P_{N+1}(x)$ is defined by \eqref{monicPKrecurrence} with $n=N$. See \eqref{PK1}-\eqref{PK3} for explicit expressions of $P_n(x)$. 

\paragraph{Discrete orthogonality.}
The para-Krawtchouk polynomials satisfy the discrete orthogonality relation
\begin{equation}
\sum_{s=0}^{N}
P_n(x_s)\,P_m(x_s)\,w_s
=
h_n\,\delta_{nm},
\qquad
h_n = U_1 U_2 \cdots U_n,
\end{equation}
where the $x_s$ are the zeros of $P_{N+1}$ and the $w_s$ the corresponding
weights.  The zeros are simple and form the finite bi-lattice
\begin{align}
&x_{2s}=2s,\qquad
s=0,\ldots,j,
\label{evenspec}\\
&x_{2s+1}=2s+\gamma,
\qquad
s=0,\ldots,j-1+p.
\label{oddspec}
\end{align}
For the two sub-lattices to interlace without colliding, $\gamma$ must lie in the range 
\begin{equation}
\gamma \in (0,2).
\label{gammarange}
\end{equation} 
This restriction will be used again in Subsection \ref{irred}. Up to a factor fixed by the normalization condition $\sum_{s=0}^N w_s =1$,
the weights are
\begin{equation}
w_s=\frac{1}{\left|P_{N+1}'(x_s)\right|},
\end{equation}
and for $N=2j+1$ they read \cite{parakrawtchouk}, according to the parity of $s$,
\begin{equation}
w_{2s}
=
\frac{2^{-N}(1-\gamma/2)_j}{(1/2)_j}
\,
\frac{(-j)_s(-\gamma/2-j)_s}
{s!(1-\gamma/2)_s},
\qquad
s=0,1,\ldots,j,
\label{weight_even}
\end{equation}
\begin{equation}
w_{2s+1}
=
\frac{2^{-N}(1+\gamma/2)_j}{(1/2)_j}
\,
\frac{(-j)_s(\gamma/2-j)_s}
{s!(1+\gamma/2)_s},
\qquad
s=0,1,\ldots,j.
\label{weight_odd}
\end{equation}
The weights for the even case can be derived through the Christoffel transform, explained below.
\paragraph{Spectral transformation.}
Fix $j$ and $\gamma$. Here $P_n(x;p,j,\gamma)$ denotes the para-Krawtchouk polynomial with $N=2j+p$
for the given
value of $p,j$ and $\gamma$. The family with $p=0$ is obtained from the
family with $p=1$ by the Christoffel transform that removes the last spectral point $x_N=2j+\gamma$. Notice that every quantity attached to the para-Krawtchouk polynomials depends on $a,d$ only through $\gamma$. The transform therefore
requires only that the two families share the same $j$ and $\gamma$.
\begin{equation}
P_n(x;0,j,\gamma)
=
\frac{P_{n+1}(x;1,j,\gamma)-D_nP_n(x;1,j,\gamma)}{x-x_N},
\qquad
D_n=\frac{P_{n+1}(x_N;1,j,\gamma)}{P_n(x_N;1,j,\gamma)},
\end{equation}
The weights transform as $w_{s;0,j,\gamma}=\left(x_N-x_s\right)w_{s;1,j,\gamma}$, the subscripts corresponding to the parameters in $P_n(x;p,j,\gamma)$.

\paragraph{Difference equation.}
The limit \eqref{parametrization} applied to the difference equation \eqref{diffc_hahn}, after
rescaling of the spectral variable  $x\to i(\frac{x}{2}+a)$ , gives
\begin{equation}
\lambda_nP_n(x)=E(x)P_n(x+2)-
\bigl(E(x)+F(x)\bigr)P_n(x)
+
F(x)P_n(x-2),
\label{diffPK}
\end{equation}
with
\begin{equation}
E(x)
=\frac12
(x-2j)(x-2j-2p+2-\gamma),
\qquad
F(x)
=\frac12x(x-\gamma),\qquad \lambda_n=2n(n-N)\,.
\label{EF}
\end{equation}
The coefficient $E(x)$ vanishes at the last two spectral points and $F(x)$ at
the first two.  The eigenvalues of the difference operator are moreover paired:
\begin{equation}
\lambda_n=\lambda_{N-n}.
\label{pairing}
\end{equation}
Equation \eqref{diffPK} is of fourth order in $x$, where the coefficients associated to the nearest shift $x+1$ and $x-1$ correspond to $0$. With respect to the bi-lattice \eqref{evenspec}--\eqref{oddspec}, the shift $x\to x+2$ maps each spectral point to the next point on the same sublattice, and likewise 
$ x\to x-2$ maps it to the preceding one, thus making the equation of fourth order \cite{parakrawtchouk}.
\paragraph{Hypergeometric representation.}
The explicit expression follows from the  limit specified in \eqref{parametrization} taken in
\eqref{hyperc_hahn}, in the variable $-x/2$. The resulting expression splits according to the
degree.  For $n\leq j$,
\begin{equation}
P_n(x)=\kappa_n\,
{}_3F_2\!\left(
\begin{matrix}
-n,\; n-2j-p,\; -x/2 \\[0.2cm]
-j,\; -j+1-p-\gamma/2
\end{matrix}
\,;1
\right),
\label{PK1}
\end{equation}
whereas for $n>j$, it involves two hypergeometric functions, one of degree at most $j$ and one of degree greater than $j$:
\begin{equation}
\begin{aligned}
P_n(x)
={}&
\kappa_n\,
{}_3F_2\!\left(
\begin{matrix}
-n,\; n-2j-p,\; -x/2 \\[0.2cm]
-j,\; -j+1-p-\gamma/2
\end{matrix}
\,;1
\right)
\\[0.5em]
&+
\kappa_n \frac{
2(-n)_{j+1}(n-2j-p)_{2j+p-n}(n-j-p)!
\left(-\dfrac{x}{2}\right)_{j+1}
}{
(j+1)!(-j)_j
\left(-j+1-p-\gamma/2\right)_{j+1}
}
\\
&\times
{}_3F_2\!\left(
\begin{matrix}
-n+j+1,\; n-j+1-p,\; -x/2+j+1\\
j+2,\; 2-p-\gamma/2
\end{matrix}
\,;1
\right).
\end{aligned}
\label{PK2}
\end{equation}
The normalization factor is
\begin{equation}
\kappa_n=
\begin{cases}
\dfrac{2^n(-j)_n\left(-j+1-p-\gamma/2\right)_n}
{(n-2j-p)_n},
& n\le j,\\[2ex]
\dfrac{(-1)^{\,n-j-p}2^{\,n-1}
\left(-j+1-p-\gamma/2\right)_n\,j!}
{(n-j)_{\,n-j-p}(2j+p-n)!},
& n>j.
\end{cases}
\label{PK3}
\end{equation}

\section{The Hahn algebra}
\label{Hahn algebra}

The connection between the para-Krawtchouk polynomials and the Hahn algebra can be seen directly at the level of their operators. The multiplication operator by the variable and the difference operator associated with the para-Krawtchouk polynomials satisfy closed commutation relations involving quadratic combinations of the two operators. These relations are those of the Hahn algebra, with the corresponding structure constants taking different values according to the parity of $N$. We give these commutation relations explicitly, together with the corresponding Casimir operator. Let $X$ be the operator of multiplication by the variable,
\begin{equation}
Xf(x)=xf(x),
\label{X}
\end{equation}
and $Y$ the difference operator in \eqref{diffPK},
\begin{equation}
Yf(x)
=
E(x)f(x+2)
+
F(x)f(x-2)
-
\bigl(E(x)+F(x)\bigr)f(x),
\label{Y}
\end{equation}
with $E$ and $F$ given by \eqref{EF}. Defining the commutator $[X,Y]=XY-YX$ and the anti-commutator $\{X,Y\}=XY+YX$, these operators satisfy
the relations of the Hahn algebra \cite{mutual}:
\begin{equation}
\begin{aligned}
[Y,[X,Y]] &= -4\{X,Y\}+4(1-N^2)X+4(N+\gamma-1)Y+M_1,\\[1ex]
[[X,Y],X] &= -4X^2-4Y+4(N+\gamma-1)X+M_2,
\end{aligned}
\label{hahnalgebra}
\end{equation}
where
\begin{equation}
\begin{aligned}
M_1 &=
\begin{cases}
2(N+\gamma-1)(N^2-1) & N \text{ odd},\\
2N(N+1)(N+\gamma-2) & N \text{ even},
\end{cases}
\qquad
M_2 =
\begin{cases}
2(1-N)(N+\gamma-1) & N \text{ odd},\\
2N(2-N-\gamma) & N \text{ even}.
\end{cases}
\end{aligned}
\label{csts}
\end{equation}
Notice that $M_1=-(N+1)M_2$. The Casimir element is given by
\begin{equation}
Q
=
[X,Y]^2
-4\{X^2,Y\}
+(20-4N^2)X^2
-4Y^2
+4(N+\gamma-1)\{X,Y\}
+(2M_1-16(N+\gamma-1))X
+2M_2Y.
\end{equation}
In the realization \eqref{X}--\eqref{Y}, it
reduces to
\begin{equation}
q=(N-1)(N+\gamma-1)(N^2-2N+\gamma(N+3)-7).
\end{equation}
for $N$ odd and
\begin{equation}
q
=
N(N+\gamma-2)
\left[
N(N+\gamma-2)+4\gamma-8
\right].
\end{equation}
for $N$ even.
\section{The parity decomposition}
\label{reducibility}

We now turn to the decomposition of the module carried by the para-Krawtchouk
polynomials, carried out first in the spectral representation, where the
block structure is made plain, and then in the recurrence representation,
where the two submodules are identified with Hahn modules and the pairing
\eqref{pairing} of the eigenvalues is accounted for.  Subsection \ref{irred}
settles the irreducibility of the submodules. The decomposition depends on the parity of $N$. In particular, the dimensions of the two submodules differ between the odd and even cases. We therefore consider the two cases separately.

\subsection{$N$ odd}

\subsubsection{Spectral representation}
\label{decomp: Spectral representation}
We first consider the spectral representation, where the decomposition into invariant subspaces is particularly transparent. Let $\mathcal{P}_N$ denote the space of polynomials of degree at most $N$, on
which $X$ acts by multiplication by the variable and $Y$ acts as the difference operator
\eqref{Y}.  A matrix realization is obtained through the weighted evaluation
map
\begin{equation}
\mathcal{U}:\mathcal{P}_N\longrightarrow\mathbb{C}^{N+1},
\qquad
\mathcal{U}f
=
\sum_{s=0}^{N}\sqrt{w_s}\,f(x_s)\ket{e_s},
\end{equation}
where $\{\ket{e_s}\}_{s=0}^{N}$ is the standard basis of $\mathbb{C}^{N+1}$. The polynomial $P_n$ is represented by the vector
\begin{equation}
\ket{n}
\coloneqq 
\mathcal{U}P_n
=
\sum_{s=0}^{N}
\sqrt{w_s}\,P_n(x_s)\ket{e_s}.
\end{equation}
With $W=\operatorname{diag}(w_0,w_1,\ldots,w_N)$, the operators $X$ and $Y$ are
represented by
\begin{equation}
X=
\begin{pmatrix}
x_0 \\
& x_1 \\
&& \ddots \\
&&& x_N
\end{pmatrix},
\qquad
Y=W^{1/2}
\begin{pmatrix}
A_0 & 0 & E_0 \\
0 & A_1 & 0 & E_1 \\
F_2 & 0 & A_2 & 0 & E_2 \\
& F_3 & 0 & A_3 & 0 & \ddots \\
&& F_4 & 0 & \ddots & \ddots \\
&&& \ddots & \ddots & 0 & E_{N-2} \\
&&&& F_N & 0 & A_N
\end{pmatrix}W^{-1/2},
\end{equation}
where $E_n=E(x_n)$, $F_n=F(x_n)$ and $A_n=-(E_n+F_n)$.  One has
\begin{align}
X\ket{n}
&=
\ket{n+1}
+
B_n\ket{n}
+
U_n\ket{n-1},
\\
Y\ket{n}
&=
\lambda_n\ket{n}.
\end{align}

The matrix of $Y$ connects only basis vectors of the same parity. 
Reordering the basis so that the even sites come first,
\begin{equation}
\ket{\nu}
=
\bigl(
\sqrt{w_0}P_n(x_0),\;
\sqrt{w_2}P_n(x_2),\;
\ldots,\;
\sqrt{w_{N-1}}P_n(x_{N-1}),\;
\sqrt{w_1}P_n(x_1),\;
\ldots,\;
\sqrt{w_N}P_n(x_N)
\bigr)^T,
\end{equation}
is equivalent to conjugating by the permutation matrix 
\begin{equation}
\Phi=
\sum_{r=1}^{j+1}
E_{r,\,2r-1}
+
\sum_{r=1}^{j+1}
E_{j+1+r,\,2r},
\label{iso}
\end{equation}
where $E_{r,s}$ is an elementary matrix of size
$(N+1)\times(N+1)$, with a single $1$ in row $r$,
column $s$, and zeros everywhere else:
\[
(E_{r,s})_{kl}=\delta_{kr}\delta_{ls}.
\]
$\Phi$ sends $(x_0,x_1,\ldots,{x_{N}})^T$
to $(x_0,x_2,\ldots,x_{N-1},
x_1,x_3,\ldots,x_{N})^T$.  Writing
$\widetilde X=\Phi X\Phi^{-1}$ and $\widetilde Y=\Phi Y\Phi^{-1}$,
 the relations
\begin{equation}
\widetilde{X}\ket{\nu}
=
\ket{\nu+1}
+
B_n\ket{\nu}
+
U_n\ket{\nu-1},
\qquad
\widetilde{Y}\ket{\nu}
=
\lambda_n\ket{\nu}\,,
\end{equation}
are unchanged.
Since $E_n$ vanishes for $n=N-1,N$ and $F_n$ for $n=0,1$, the reordering
produces zeros in the off-diagonal entries of $\widetilde{Y}$ at the junction
of its two halves, so that $\widetilde Y$ is block diagonal with two
tridiagonal blocks.  As $\widetilde X$ is diagonal it preserves this
decomposition.  One gets $\widetilde{W}=\operatorname{diag}(w_0,w_2,\ldots,w_{N-1},w_1,\ldots,w_N)$ and
\begingroup
\renewcommand{\arraystretch}{1.05}
\setlength{\arraycolsep}{3pt}
\begin{equation}
\widetilde X=
\begin{pmatrix}
\blk{xa}{x_0} \\
& x_2 \\
&& \ddots \\
&&& \blk{xb}{x_{N-1}}\hspace{2pt} \\[4pt]
&&&& \hspace{2pt}\blk{xc}{x_1} \\
&&&&& x_3 \\
&&&&&& \ddots \\
&&&&&&& \blk{xd}{x_N}
\end{pmatrix},
\label{tildeX}
\end{equation}
\begin{tikzpicture}[remember picture, overlay]
\coordinate (xmid) at ($(xb)!0.5!(xc)$);
\draw[dashed, thick] (xa.west -| xmid) -- (xd.east -| xmid);
\draw[dashed, thick] (xmid -| xa.north) -- (xmid -| xd.south);
\coordinate (xm1) at ($(xa)!0.5!(xb)$);
\coordinate (xm2) at ($(xc)!0.5!(xd)$);
\node at (xm1 -| xm2) {\Large $0$};
\node at (xm2 -| xm1) {\Large $0$};
\end{tikzpicture}
\endgroup

\renewcommand{\arraystretch}{1.35}
\setlength{\arraycolsep}{7pt}
\begin{equation}
\widetilde Y=\widetilde{W}^{1/2}
\begin{pmatrix}
\blk{ya}{A_0} & E_0 \\
F_2 & A_2 & E_2 \\
& \ddots & \ddots & \ddots \\
&& F_{N-1} & \blk{yb}{A_{N-1}} \\
&&&& \blk{yc}{A_1} & E_1 \\
&&&& F_3 & A_3 & E_3 \\
&&&&& \ddots & \ddots & \ddots \\
&&&&&& F_N & \blk{yd}{A_N}
\end{pmatrix}\widetilde{W}^{-1/2},
\end{equation}
\begin{tikzpicture}[remember picture, overlay]
\coordinate (ymid) at ($(yb)!0.5!(yc)$);
\draw[dashed, thick] (ya.west -| ymid) -- (yd.east -| ymid);
\draw[dashed, thick] (ymid -| ya.north) -- (ymid -| yd.south);
\coordinate (ym1) at ($(ya)!0.5!(yb)$);
\coordinate (ym2) at ($(yc)!0.5!(yd)$);
\node at (ym1 -| ym2) {\Large $0$};
\node at (ym2 -| ym1) {\Large $0$};
\end{tikzpicture}
The even and odd subspaces are therefore invariant under both
$\widetilde{X}$ and $\widetilde{Y}$ and the module carried by the
para-Krawtchouk polynomials is reducible. To identify the two blocks, set
\begin{subequations}
\begin{align}
P_n^e(s)
&=
P_n(x_{2s}),
\qquad
P_n^o(s)
=
P_n(x_{2s+1}),
\label{eq:transformed-polynomials-a}
\\[4pt]
E^e(s)
&=
\frac{E(x_{2s})}{2}
=
(s-j)\left(s-j-\frac{\gamma}{2}\right),
\qquad
F^e(s)
=
\frac{F(x_{2s})}{2}
=
s\left(s-\frac{\gamma}{2}\right),
\label{eq:transformed-polynomials-c}
\\[4pt]
E^o(s)
&=
\frac{E(x_{2s+\gamma})}{2}
=
(s-j)\left(s-j+\frac{\gamma}{2}\right),
\qquad
F^o(s)
=
\frac{F(x_{2s+\gamma})}{2}
=
s\left(s+\frac{\gamma}{2}\right).
\label{eq:transformed-polynomials-d}
\end{align}
\end{subequations}
The difference equation \eqref{diffPK}, restricted to either sub-lattice and divided by
two, becomes
\begin{equation}
n(n-N)P_n^e(s)
=
E^e(s)\,P_n^e(s+1)
+
F^e(s)\,P_n^e(s-1)
-
\bigl(E^e(s)+F^e(x)\bigr)P_n^e(s),\label{eq:transformed-diff-pkrawe}
\end{equation}
\begin{equation}
n(n-N)P_n^o(x)
=
E^o(x)\,P_n^o(x+1)
+
F^o(x)\,P_n^o(x-1)
-
\bigl(E^o(x)+F^o(x)\bigr)P_n^o(x)\label{eq:transformed-diff-pkrawo},
\end{equation}
both of which are the difference equation of the Hahn polynomials \eqref{hahndiffeq}. Comparison of
\eqref{eq:transformed-polynomials-c} and
\eqref{eq:transformed-polynomials-d} with \eqref{hahndiffeq} gives $K=j$ and
\begin{equation}
\alpha_e=-(j+1)-\frac{\gamma}{2},
\qquad
\beta_e=-(j+1)+\frac{\gamma}{2},
\label{oparame}
\end{equation}
\begin{equation}
\alpha_o=-(j+1)+\frac{\gamma}{2},
\qquad
\beta_o=-(j+1)-\frac{\gamma}{2}.
\label{oparamo}
\end{equation}
In both cases $\alpha+\beta+1=-N$, so that the Hahn eigenvalue 
$n(n+\alpha+\beta+1)$ of the difference equation \eqref{hahndiffeq} is $n(n-N)$ matching \eqref{eq:transformed-diff-pkrawe}-- \eqref{eq:transformed-diff-pkrawo}.

\subsubsection{Recurrence representation}
\label{Recrep}

While the spectral picture reveals the block structure of the system, the
recurrence picture provides a more explicit description of the direct sum
decomposition and, in particular, explains the pairing \eqref{pairing} between
the eigenvalues. To make this connection precise, let
\begin{equation}
   Q^e_n(s)=Q_n(s;\alpha_e,\beta_e,j),
   \qquad
Q^o_n(s)=Q_n(s;\alpha_o,\beta_o,j), \qquad s=0,1,\ldots,j
\end{equation}
where $Q_n$ is the standard Hahn polynomial \eqref{eq:Hahn-pol-dfn}, and the parameters $\alpha_e, \beta_e, \alpha_o,\beta_o$ are defined in \eqref{oparame} and \eqref{oparamo}. We write 
$p^i_n(s)=\kappa_n^{-1}P^i_n(s)$ for the normalized para-Krawtchouk polynomials \eqref{eq:transformed-polynomials-a}. The Sears transformation formula (see relation (III.15) in \cite{GasperRahman2004}) takes the following form for a terminating ${}_3F_2$:
\begin{equation}
{}_3F_2\left(
\begin{matrix}
-m,\; a,\; b\\
c,\; d
\end{matrix}
;1
\right)
=
\frac{(c-a)_m}{(c)_m}
{}_3F_2\left(
\begin{matrix}
-m,\; a,\; d-b\\
a-c-m+1,\; d
\end{matrix}
\,\,;1
\right).
\label{eq:Sears3F2}
\end{equation}
\begin{proposition}
\label{eq:sim-antisim-parakraw}
The following symmetric and antisymmetric combinations hold:
\begin{align}
    Q^e_n(s)&=\tfrac{1}{2}\bigl(p^e_n(s)+p^e_{N-n}(s)\bigr),
\label{evenhahn}\\
    Q^o_n(s)&=\frac{(-j-\gamma/2)_n}{2(-j+\gamma/2)_n}
    \bigl(p^o_n(s)-p^o_{N-n}(s)\bigr).
    \label{oddhahn}
\end{align}    
\end{proposition}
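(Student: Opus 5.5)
The plan is to avoid manipulating the hypergeometric expressions \eqref{PK1}--\eqref{PK2} term by term. Instead I would use the fact that both sides of \eqref{evenhahn} and \eqref{oddhahn} are eigenvectors of the same Hahn operator with a simple eigenvalue. Then everything reduces to fixing one normalisation constant. With the standard Hahn polynomial $Q_n(s;\alpha,\beta,K)={}_3F_2(-n,n+\alpha+\beta+1,-s;\alpha+1,-K;1)$ and \eqref{oparame}--\eqref{oparamo}, one has
\[
Q^e_n(s)={}_3F_2\!\left(\begin{matrix}-n,\; n-N,\; -s\\ -j-\gamma/2,\; -j\end{matrix};1\right),\qquad
Q^o_n(s)={}_3F_2\!\left(\begin{matrix}-n,\; n-N,\; -s\\ -j+\gamma/2,\; -j\end{matrix};1\right).
\]
Both are visibly invariant under $n\mapsto N-n$.

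\emph{Even sub-lattice.} Here I would argue directly. On $x_{2s}=2s$ one has $-x/2=-s$, and for $p=1$ the lower parameter $-j+1-p-\gamma/2$ equals $-j-\gamma/2$. Hence the first ${}_3F_2$ in \eqref{PK1}--\eqref{PK2} is exactly $Q^e_n(s)$. The series is cut off at $k\le s\le j$ by $(-s)_k$, so the lower parameter $-j$ causes no trouble. The second term in \eqref{PK2} carries the factor $(-s)_{j+1}$, which vanishes for $s=0,\dots,j$. Therefore $p^e_n(s)=Q^e_n(s)$ for every $n=0,\dots,N$. The symmetry $Q^e_n=Q^e_{N-n}$ then gives \eqref{evenhahn} at once.

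\emph{Odd sub-lattice.} Here the argument goes in three steps.
\begin{enumerate}
\item By \eqref{eq:transformed-diff-pkrawo}, both $p^o_n$ and $p^o_{N-n}$ are eigenvectors, on the $j+1$ points $s=0,\dots,j$, of the Hahn operator with parameters $(\alpha_o,\beta_o,j)$. They share the eigenvalue $n(n-N)=(N-n)(-n)$.
\item That operator is a $(j+1)\times(j+1)$ tridiagonal matrix. Its off-diagonal entries $E^o(s)$ ($s<j$) and $F^o(s)$ ($s>0$) are nonzero because $\gamma\in(0,2)$ by \eqref{gammarange}. It therefore has simple spectrum $\{m(m-N)\}_{m=0}^{j}$, and these values are distinct because $m\le j<N/2$.
\item Hence, for $n\le j$, both $p^o_n$ and $p^o_{N-n}$ are proportional to $Q^o_n$. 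It remains to show that $p^o_n-p^o_{N-n}$ is not identically zero and to fix its constant. Since $Q^o_n(0)=1$, it suffices to prove
\[
p^o_n(0)-p^o_{N-n}(0)=\frac{2(-j+\gamma/2)_n}{(-j-\gamma/2)_n}.
\]
\end{enumerate}
The case $n>j$ follows by exchanging $n$ and $N-n$, provided the prefactor in \eqref{oddhahn} changes sign under $n\mapsto N-n$. That identity between Pochhammer ratios has to be checked separately.

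The main obstacle is this evaluation at $s=0$, i.e.\ at $x=\gamma$. There $p^o_n(0)$ is the first ${}_3F_2$ of \eqref{PK2} with numerator $-\gamma/2$, and, whenever the index exceeds $j$, it also contains the second term with the nonvanishing factor $(-\gamma/2)_{j+1}$. I would first apply the Sears transformation \eqref{eq:Sears3F2} with $m=n$, $a=n-N$, $b=-\gamma/2$, $c=-j-\gamma/2$, $d=-j$. This produces the prefactor $(-j-\gamma/2-n+N)_n/(-j-\gamma/2)_n=(j+1-\gamma/2)_n/(-j-\gamma/2)_n$ and a transformed ${}_3F_2$. I would then evaluate the result by Pfaff--Saalschütz, after checking balance. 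The second term of $p^o_{N-n}(0)$ should be evaluated the same way. Showing that the two contributions combine into $2(-j+\gamma/2)_n/(-j-\gamma/2)_n$ is the step I expect to require the most care.

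If the hypergeometric bookkeeping becomes unwieldy, the fallback is to fix the constant from the top coefficient instead of from $s=0$. One would compare the coefficient of $s^n$ in $Q^o_n$ with the degree-$n$ part of $p^o_n(s)$, using that $p^o_{N-n}$ has degree $N-n>n$ in $x$. Alternatively, one could check the orthogonality norms against the weights \eqref{weight_odd}.
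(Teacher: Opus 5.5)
Your argument is correct. On the odd sub-lattice it follows a genuinely different route from the paper's.

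The even case coincides with the paper's argument. At $x=2s$ the factor $(-s)_{j+1}$ kills the second term of \eqref{PK2}, leaving $Q^e_n(s)$, which is symmetric under $n\mapsto N-n$.

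For \eqref{oddhahn}, the paper observes that $p_n-p_{N-n}$ is, as a polynomial, minus the second term of \eqref{PK2} for $p_{N-n}$. It then turns $Q^o_n(s)$ into that term at $x=2s+\gamma$ by four successive Sears transformations, valid pointwise for every $s\in\{0,\ldots,j\}$. You instead use the restricted difference equation \eqref{eq:transformed-diff-pkrawo}. The $n(n-N)$-eigenspace of the $(j+1)\times(j+1)$ Hahn matrix is one-dimensional, already because $Q^o_0,\ldots,Q^o_j$ give $j+1$ distinct eigenvalues. So only one constant must be computed, and at $s=0$ both series are balanced.

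The computation you flagged does close:
\begin{itemize}
\item Your Sears prefactor should read $(c-a)_n=(j+1-n-\gamma/2)_n=(-1)^n(-j+\gamma/2)_n$, not $(j+1-\gamma/2)_n$.
\item The transformed series collapses to ${}_2F_1(-n,n-N;-j;1)=(-1)^n$ by Chu--Vandermonde. Alternatively, apply Pfaff--Saalsch\"utz directly to the untransformed series, which is already balanced. Either way $p^o_n(0)=(-j+\gamma/2)_n/(-j-\gamma/2)_n$.
\item Pfaff--Saalsch\"utz applied to ${}_3F_2(n-j,\,j+1-n,\,j+1-\gamma/2;\,j+2,\,1-\gamma/2;1)$ shows that the second term of $p_{N-n}$ at $x=\gamma$ equals $-2p^o_n(0)$. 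Hence $p^o_{N-n}(0)=-p^o_n(0)$, which is exactly what you need.
\item The sign identity you need for $n>j$ also holds. With $r_n=(-j-\gamma/2)_n/(-j+\gamma/2)_n$ one has $r_{N-n}/r_n=\prod_{t=n-j}^{j-n}(t-\gamma/2)/(t+\gamma/2)=-1$: the factors for $\pm t$ cancel and $t=0$ contributes $-1$.
\end{itemize}
Your leading-coefficient fallback would fix only the constant for $p^o_n$, not for $p^o_{N-n}$, whose degree $N-n$ exceeds $j$. The evaluation at $s=0$ is the right choice.

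As for what each route buys: yours makes the eigenvalue pairing the actual engine of the proof, whereas the paper invokes it only in words. It yields \eqref{parityvanish} as a by-product and needs only a one-point summation. It should therefore carry over to the para-Racah and $q$-para-Racah cases with much less hypergeometric bookkeeping. The paper's route, by contrast, uses only the explicit formulas \eqref{PK1}--\eqref{PK2}. It does not depend on the difference equation \eqref{diffPK} or on the vanishing of its coefficients at the ends of the sub-lattice.
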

\begin{proof}
The pairing of eigenvalues is the key mechanism underlying these combinations: $p_n$ and $p_{N-n}$ correspond to the same eigenvalue of $Y$, and can therefore be combined within the corresponding eigenspace. For \eqref{evenhahn}, inputting the even spectral points directly cancels the part of the hypergeometric expression corresponding to degrees higher than $j$, leaving two identical hypergeometric expressions whose sum gives twice the even Hahn polynomial. For \eqref{oddhahn}, the subtraction removes the part of the hypergeometric expression corresponding to degrees lower than $j$, leaving $-1$ times the second half of \eqref{PK2}. Working backwards from the associated odd Hahn polynomial, four successive applications of the Sears transformation \eqref{eq:Sears3F2} yield the desired result, with
\begin{align}
    &\text{Iteration 1:}\quad  m=n,\quad a=n-N,\quad b=-s,\quad c=-j+\frac{\gamma}{2},\quad d=-j,\\
    &\text{Iteration 2:}\quad   m=N-n,\quad a=s-j,\quad b=-n,\quad c=-j-\frac{\gamma}{2},\quad d=-j,\\
    &\text{Iteration 3:} \quad  m=j-s,\quad a=n-j,\quad b=n-N,\quad c=s-2j+n+\frac{\gamma}{2},\quad d=-j,\\
    &\text{Iteration 4:}\quad m=j-n,\quad a=j+1-n,\quad b=s-j,\quad c=-j,\quad d=1-\frac{\gamma}{2}.
\end{align}
Adjusting the coefficients then gives the remaining half of the expression for the para-Krawtchouk polynomials with $n>j$.
\end{proof}
The
combinations \eqref{evenhahn}--\eqref{oddhahn} separate the sub-lattices in a strong sense, each vanishing
identically on the sub-lattice the other lives on:
\begin{equation}
p^o_n(s)+p^o_{N-n}(s)=0,
\qquad
p^e_n(s)-p^e_{N-n}(s)=0,
\label{parityvanish}
\end{equation}
for all $n$ and $s$.  This is what makes \eqref{evenhahn} and
\eqref{oddhahn} possible on one sub-lattice at a time.
Next we write the combinations \eqref{evenhahn} and \eqref{oddhahn} in matrix form. 
Introduce the transition matrices
\begin{align}
    &T^{PK}=
\begin{pmatrix}
p^e_0(0) & \cdots & p^e_0(j) &
p^o_0(0) & \cdots & p^o_0(j) \\
\vdots & \ddots & \vdots &
\vdots & \ddots & \vdots \\
p^e_N(0) & \cdots & p^e_N(j) &
p^o_N(0) & \cdots & p^o_N(j)
\end{pmatrix}\,,\quad T^{\mathcal{Q}^{i}}
=
\begin{pmatrix}
Q^{i}_0(0) & \cdots & Q^{i}_0(j) \\
\vdots & \ddots & \vdots \\
Q^{i}_j(0) & \cdots & Q^{i}_j(j)
\end{pmatrix}\,.
\end{align}
The first $j+1$ columns of $T^{PK}$ correspond to the even sub-lattice, while the
last $j+1$ columns correspond to the odd sub-lattice. The index $i\in\{e,o\}$ in $T^{\mathcal{Q}^{i}}$ labels the
even and odd cases, respectively. To implement the relations \eqref{evenhahn} and \eqref{oddhahn}, we introduce the matrix
\begin{equation}
\mathcal{C}
=\frac12
\begin{pmatrix}
I_{j+1} & R \\
DR & -D
\end{pmatrix},
\end{equation}
where  $I_{j+1}$ denotes the identity matrix of dimension $j+1$,  $D$ is a diagonal matrix, and $R$ is a reflection matrix, defined as
follows:
\begin{equation}
D=
\operatorname{diag}\left(
\frac{\left(-j-\frac{\gamma}{2}\right)_j}{\left(-j+\frac{\gamma}{2}\right)_j},
\ldots,
\frac{\left(-j-\frac{\gamma}{2}\right)_0}{\left(-j+\frac{\gamma}{2}\right)_0}
\right),
\qquad
R=
\begin{pmatrix}
 & & 1\\
& \iddots & \\
1 & &
\end{pmatrix}.
\label{reflexmatrix}
\end{equation}
The symmetric and antisymmetric combinations \eqref{evenhahn}--\eqref{oddhahn} can be written simultaneously, using \eqref{parityvanish}, as
\begin{equation}
    \mathcal{C}T^{PK}=
\begin{pmatrix}
T^{\mathcal{Q}^{e}}&0\\0 &
RT^{\mathcal{Q}^{o}}
\end{pmatrix}:=T^\mathcal{Q}.
\label{transitionhahn}
\end{equation}
\paragraph{Block decomposition of the operators.}
Let $\widehat{J}^{PK}$ denote the matrix representing multiplication by $x$
in the basis $\{p_n\}$, so that 
$\widehat{J}^{PK}T^{PK}=T^{PK}\widetilde{X}$
where $\widetilde{X}$ is the diagonal matrix given by \eqref{tildeX}.
It can be written as 
\begin{equation}
\widetilde{X}
=\begin{pmatrix}
2\Lambda & \\
& 2\Lambda+\gamma I_{j+1}
\end{pmatrix},
\qquad
\Lambda=\operatorname{diag}(0,1,\ldots,j),
\label{Lambdatilde}
\end{equation}
Here, the diagonal entries of $\Lambda$ are precisely the points of the Hahn grid $s=0,\ldots,j$.  Hence, the two sub-lattices are affine images of
the Hahn grid: the even sub-lattice is obtained by a dilation by $2$, while
the odd sub-lattice is obtained by the same dilation followed by a shift by
$\gamma$.  Writing
$\widehat{J}^{Q^e}$ and $\widehat{J}^{Q^o}$ for the matrices of the
multiplication by $s$ in the bases $\{Q^e_n\}$ and $\{Q^o_n\}$, so that
$\widehat{J}^{Q^i}T^{\mathcal{Q}^{i}}=T^{\mathcal{Q}^{i}}\Lambda$, we get from
\eqref{transitionhahn}
\begin{equation}
\mathcal{C}\,\widehat{J}^{PK}\mathcal{C}^{-1}=\mathcal{C}T^{PK}\widetilde{X}(T^{PK})^{-1}\mathcal{C}^{-1}
=T^{\mathcal{Q}}\widetilde{X}\left(T^{\mathcal{Q}}\right)^{-1}
=\begin{pmatrix}
2\widehat{J}^{Q^e} & 0\\
0& 2R\widehat{J}^{Q^o}R+\gamma I
\end{pmatrix}.
\label{Jconj}
\end{equation}
Equation \eqref{Jconj} is the recurrence-representation counterpart of the block decomposition found for $\widetilde{Y}$ in Subsection \ref{decomp: Spectral representation}. 
The difference operator $Y$ is represented by $
Y=\operatorname{diag}(\lambda_0,,\ldots,\lambda_j,\lambda_{j+1},\ldots,\lambda_N).
$ By the eigenvalue pairing $\lambda_n=\lambda_{N-n}$, $Y=\operatorname{diag}(\lambda_0,,\ldots,\lambda_j,\lambda_{j},\ldots,\lambda_0) $.
Conjugation by $\mathcal{C}$ leaves this matrix
unchanged $
\mathcal{C}\,Y\,\mathcal{C}^{-1}=Y.$

Inverting \eqref{evenhahn} and \eqref{oddhahn} expresses the para-Krawtchouk
polynomials through the polynomials of the two submodules:
\begin{align}
p^e_n(s)=
\begin{cases}
Q_n^e(s), & n\le j,\\[0.8em]
Q_{N-n}^e(s), & n>j,
\end{cases}\qquad
p^o_n(s)=
\begin{cases}
\dfrac{\left(-j+\gamma/2\right)_n}
{\left(-j-\gamma/2\right)_n}
Q_n^o(s), & n\le j,\\[1.8em]
-\dfrac{\left(-j+\gamma/2\right)_{N-n}}
{\left(-j-\gamma/2\right)_{N-n}}
Q_{N-n}^o(s), & n>j.
\end{cases}
\end{align}

\subsection{$N$ even}
For $N=2j$ the argument runs as before.  The coefficients $E(x)$ and $F(x)$ vanish at the last two and the first two grid points respectively. Moreover,  reordering the basis by parity makes the difference operator block
diagonal, while the diagonal operator $X$ preserves the two blocks. In this case, however, the two submodules no longer
have the same dimension: the even one is of dimension $j+1$ and the odd one of
dimension $j$.  The corresponding parameters are
\begin{align}
&\widetilde{\alpha}_e=-j-\gamma/2,\qquad
\widetilde{\beta}_e=-(j+1)+\gamma/2,
\qquad K=j,
\label{eparame}\\
&\widetilde{\alpha}_o=-(j+1)+\gamma/2,\qquad
\widetilde{\beta}_o=-j-\gamma/2,
\qquad K=j-1,
\label{eparamo}
\end{align}
and the corresponding analogues of \eqref{evenhahn} and \eqref{oddhahn} are
\begin{align}
    &\widetilde{Q}^e_n(s)=\tfrac{1}{2}
    \bigl(\widetilde{p}^e_n(s)+\widetilde{p}^e_{N-n}(s)\bigr),
\label{evenhahneven}\\
   & \widetilde{Q}^o_n(s)=
    \frac{j\left(-j+1-\gamma/2\right)_n}
    {2(j-n)\left(-j+\gamma/2\right)_n}
    \bigl(\widetilde{p}^o_n(s)-\widetilde{p}^o_{N-n}(s)\bigr),
    \label{oddhahneven}
\end{align}
where
$\widetilde{Q}^e_n=Q_n(x;\widetilde{\alpha}_e,\widetilde{\beta}_e,j)$,
$\widetilde{Q}^o_n=Q_n(x;\widetilde{\alpha}_o,\widetilde{\beta}_o,j-1)$
and $\widetilde{p}_n(x)=\kappa_n^{-1}\widetilde{P}_n(x)$, the tilde used to distinguish even from odd.  Inverting these relations expresses the para-Krawtchouk polynomials in terms of the polynomials of the two submodules:
\begin{align}
\widetilde{p}^e_n(s)=
\begin{cases}
\widetilde{Q}_n^e(s), & n\le j,\\[0.8em]
\widetilde{Q}_{N-n}^e(s), & n>j,
\end{cases}\qquad
\widetilde{p}^o_n(s)=
\begin{cases}
\dfrac{(j-n)\left(-j+\gamma/2\right)_n}
{j\left(-j+1-\gamma/2\right)_n}
\widetilde{Q}_n^o(s), & n\le j,\\[1.8em]
-\dfrac{(n-j)\left(-j+\gamma/2\right)_{N-n}}
{j\left(-j+1-\gamma/2\right)_{N-n}}
\widetilde{Q}_{N-n}^o(s), & n>j.
\end{cases}\label{eq:p_n-x2sp1}
\end{align}
\begin{remark}
    For even $N$, the zeros of the polynomial $p_j$ form one half of the grid. Indeed, \eqref{eq:p_n-x2sp1} shows that, when $n=j$, the zeros of $p_j$ coincide with the odd grid points $x_{2s+1}$. This phenomenon does not occur when $N$ is odd. It can also be verified directly from \eqref{PK1}: the corresponding ${}_3F_2$ reduces to a ${}_2F_1$
\begin{equation}
P_j\!\left(x_{2s+1}\right)
=
\kappa_j\,
{}_2F_1\!\left(
\begin{matrix}
-j,\; -x_{2s+1}/2\\
-j+1-\gamma/2
\end{matrix}\,\,
;1
\right).
\end{equation}
Applying the Chu-Vandermonde transformation
\begin{equation}
{}_2F_1\!\left(
\begin{matrix}
-j,\; b\\
c
\end{matrix}
;1
\right)
=
\frac{(c-b)_j}{(c)_j},
\end{equation}
we obtain
\begin{equation}
P_j(x_{2s+1})
=
\kappa_j\,
\frac{(s-j+1)_j}
{\left(-j+1-\gamma/2\right)_j}.
\end{equation}
With $s< j$, the expression becomes 0. 
\end{remark}

\subsection{Irreducibility of the submodules}
\label{irred}

It remains to check that the two summands cannot be reduced further.  For the
Hahn algebra this is settled by a direct criterion.

\begin{proposition}
\label{prop:hahnirred}
Let the Hahn algebra act on the functions on the grid $y_s=s$,
$s=0,1,\ldots,K$, with $X$ diagonal and $Y$ the difference operator
\eqref{hahndiffeq} with coefficients \eqref{hahndiffcoeff}.  The module so obtained is
irreducible if and only if
\begin{equation}
\alpha\notin\{-1,-2,\ldots,-K\}
\qquad\text{and}\qquad
\beta\notin\{-1,-2,\ldots,-K\}.
\label{irredcond}
\end{equation}
\end{proposition}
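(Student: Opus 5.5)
The plan is to use the standard criterion for a Leonard-pair-type module: $X$ is diagonal with distinct eigenvalues $y_s=s$, and $Y$ is tridiagonal in the basis $\{\ket{e_s}\}$. The Hahn coefficients \eqref{hahndiffcoeff} should read, consistently with \eqref{eq:transformed-polynomials-c}, as $E(s)=(s-K)(s+\alpha+1)$ and $F(s)=s(s-\beta-K-1)$, possibly with $\alpha,\beta$ interchanged according to the convention of \eqref{hahndiffeq}.

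The argument has three steps.

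\textbf{Step 1.} Since the eigenvalues $0,1,\ldots,K$ of $X$ are distinct, every submodule $V$ is $X$-invariant and therefore spanned by the basis vectors $\ket{e_s}$ it contains. Indeed, the projector onto $\ket{e_s}$ is a polynomial in $X$ (Lagrange interpolation), so $V=\bigoplus_{s\in S}\mathbb{C}\ket{e_s}$ for some subset $S\subseteq\{0,\ldots,K\}$.

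\textbf{Step 2.} Invariance under $Y$ means that if $s\in S$, then $s+1\in S$ whenever the entry $E(s)$ (coupling $\ket{e_{s}}$ to $\ket{e_{s+1}}$, up to transpose convention) is nonzero, and $s-1\in S$ whenever $F(s)$ is nonzero. The conjugation by $W^{1/2}$ does not affect which entries vanish. Hence the module is irreducible iff the graph on $\{0,\ldots,K\}$ with directed edges given by the nonzero off-diagonal entries admits no proper nonempty closed subset. For a path graph this happens iff every edge is present in both directions. Otherwise, if $E(s_0)=0$ for some $s_0<K$, the set $\{0,\ldots,s_0\}$ is closed. Symmetrically, if $F(s_0)=0$ for some $s_0\ge1$, the set $\{s_0,\ldots,K\}$ is closed.

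\textbf{Step 3.} Translate this into conditions on the parameters. $E(s)$ vanishes for some $s\in\{0,\ldots,K-1\}$ iff $s+\alpha+1=0$ for such $s$, i.e.\ $\alpha\in\{-1,\ldots,-K\}$. $F(s)$ vanishes for some $s\in\{1,\ldots,K\}$ iff $s=\beta+K+1$ for such $s$, i.e.\ $\beta\in\{-K,\ldots,-1\}$. This yields \eqref{irredcond}. The forced zeros $E(K)=0$ and $F(0)=0$ are harmless, since they are boundary terms.

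The main obstacle is bookkeeping: matching exactly which of $\alpha,\beta$ enters $E$ and which enters $F$ in \eqref{hahndiffcoeff}, and checking that the shift in $F$ produces the range $\{-1,\ldots,-K\}$ rather than an off-by-one variant. I would verify this against the parameters \eqref{oparame}, where $\alpha_e+1=-j-\gamma/2$ does reproduce $E^e(s)=(s-j)(s-j-\gamma/2)$. I would also verify that with $\gamma\in(0,2)$ from \eqref{gammarange} these parameters are non-integers, so the criterion applies to the submodules.

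A minor point concerns the ``only if'' direction. The closed subsets exhibited above give genuine submodules of the algebra generated by $X$ and $Y$. They are then submodules of the Hahn algebra representation, since that representation is generated by $X$ and $Y$ (with $[X,Y]$ included).
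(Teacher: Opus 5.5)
Your proposal is correct and follows the paper's proof essentially step for step: distinct eigenvalues of $X$ force every submodule to be a coordinate subspace, tridiagonality of $Y$ reduces invariance to the vanishing of an off-diagonal coefficient strictly inside the grid, and the interior zeros of $B(s)=(s+\alpha+1)(s-K)$ on $\{0,\ldots,K-1\}$ and of $D(s)=s(s-\beta-K-1)$ on $\{1,\ldots,K\}$ give exactly \eqref{irredcond}. Your Lagrange-interpolation justification of Step 1 and your caveat on the transpose convention are small points the paper leaves implicit; the convention only decides whether $\{0,\ldots,s_0\}$ or $\{s_0+1,\ldots,K\}$ is the closed set, not the criterion itself.
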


\begin{proof}
The operator $X$ is diagonal with the $K+1$ distinct eigenvalues $y_s=s$, so
any invariant subspace is spanned by a subset $S$ of the basis vectors
$\ket{e_s}$.  In that basis $Y$ is tridiagonal, with entries $B(y_s)$ above
the diagonal and $D(y_s)$ below it.  The subspace spanned by $S$ is invariant
under $Y$ if and only if $s\in S$ and $B(y_s)\neq0$ imply $s+1\in S$, and
$s\in S$ and $D(y_s)\neq0$ imply $s-1\in S$.  A proper invariant subspace
therefore exists if and only if one of these coefficients vanishes strictly
inside the grid.  Now $B(y_s)=(s+\alpha+1)(s-K)$ vanishes for some
$s\in\{0,\ldots,K-1\}$ exactly when $\alpha\in\{-1,\ldots,-K\}$, and
$D(y_s)=s(s-\beta-K-1)$ vanishes for some $s\in\{1,\ldots,K\}$ exactly when
$\beta\in\{-K,\ldots,-1\}$.  The two boundary zeros $B(y_K)=0$ and $D(y_0)=0$
are those that make the grid finite and do not split the module.
\end{proof}

Each of the four parameters \eqref{oparame}, \eqref{oparamo}, \eqref{eparame} and \eqref{eparamo} governing the submodules is of the form $m\pm \frac{\gamma}{2}$ for some integer $m$. The restriction $\gamma \in (0,2)$, imposed in \eqref{gammarange} so that the two sub-lattice interlace without colliding, gives $\frac{\gamma}{2}\in(0,1)$ so that none of these parameters is ever an integer: condition \eqref{irredcond} is met for every admissible $\gamma$, and Proposition \ref{prop:hahnirred} shows that the two summands are irreducible.  Collecting the results of this section:

\begin{theorem}
\label{thm:PK}
Let $N = 2j+p$ with $p=0,1$ according to the parity of $N$, and let
$\gamma \in (0,2)$.  The $(N+1)$-dimensional module of the Hahn algebra
carried by the para-Krawtchouk polynomials $V_{N+1}^{\mathrm{PK}}$, is the direct sum of two
irreducible Hahn modules,
\begin{equation}
  V_{N+1}^{\mathrm{PK}} \cong Q(\alpha_e, \beta_e, j) \oplus
  Q(\alpha_o, \beta_o, j),   \qquad p=1,
\end{equation}
\begin{equation}
  V_{N+1}^{\mathrm{PK}} \cong Q(\widetilde{\alpha}_e, \widetilde{\beta}_e, j)
  \oplus Q(\widetilde{\alpha}_o, \widetilde{\beta}_o, j-1),   \qquad p=0,
\end{equation}
where $Q(\alpha, \beta, K)$ denotes the $(K+1)$-dimensional Hahn module with
parameters $\alpha$ and $\beta$, and the parameters are those of
\eqref{oparame}, \eqref{oparamo}, \eqref{eparame} and \eqref{eparamo}.
\end{theorem}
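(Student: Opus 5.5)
The plan is to assemble the theorem from three ingredients already established in this section: the block structure in the spectral representation, the identification of each block with a Hahn module, and the irreducibility criterion of Proposition~\ref{prop:hahnirred}.

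\textbf{Step 1: invariant subspaces.} Work in the spectral representation through the weighted evaluation map $\mathcal{U}$. Since the zeros $x_s$ are simple, $\mathcal{U}$ is an isomorphism of $\mathcal{P}_N$ onto $\mathbb{C}^{N+1}$, so $V^{\mathrm{PK}}_{N+1}$ is realized on $\mathbb{C}^{N+1}$ with $X$ diagonal and $Y$ as displayed. Conjugate by the permutation $\Phi$ of \eqref{iso}, or its analogue for $N=2j$, which puts the $j+1$ even sites first and the $j+p$ odd sites after them.

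\textbf{Step 2: block diagonality.} The shifts $x\to x\pm2$ in \eqref{diffPK} preserve each sub-lattice. At the junction between the two sub-lattices, $E$ vanishes at the last two points $x_{N-1},x_N$ and $F$ vanishes at the first two points $x_0,x_1$, as seen from \eqref{EF} together with \eqref{evenspec}--\eqref{oddspec}. Hence $\widetilde Y$ is block diagonal. Since $\widetilde X$ is diagonal, both operators preserve the even span $V_e$ and the odd span $V_o$, giving $V^{\mathrm{PK}}_{N+1}=V_e\oplus V_o$ as modules of the algebra \eqref{hahnalgebra}.

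\textbf{Step 3: identification with Hahn modules.} Use the affine reparametrizations $x_{2s}=2s$ and $x_{2s+1}=2s+\gamma$ from \eqref{Lambdatilde}. On $V_e$, the generators act as $X=2\Lambda$ and $Y=2Y_{\mathrm{Hahn}}$, where $Y_{\mathrm{Hahn}}$ is the Hahn difference operator with parameters \eqref{oparame} (respectively \eqref{eparame}) and $K=j$. This is exactly the comparison \eqref{eq:transformed-diff-pkrawe}. The odd block is treated the same way: $X=2\Lambda+\gamma$ and the parameters are \eqref{oparamo} or \eqref{eparamo}, with $K=j$ or $K=j-1$.

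Since $\alpha+\beta+1=-N$ on each block, the Hahn eigenvalues $n(n-N)$ agree with $\lambda_n/2$. Moreover, affine changes $X\mapsto 2X+c$ and $Y\mapsto 2Y$ carry one module of the Hahn algebra (with suitably transformed structure constants) to another. The point to check here is that this rescaling maps the Hahn algebra relations for the $Q(\alpha,\beta,K)$ module onto \eqref{hahnalgebra} with the constants \eqref{csts}. That is a direct substitution, so each block is isomorphic to the stated $Q(\alpha,\beta,K)$, meaning the Hahn module up to this affine identification of generators.

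As a cross-check in the recurrence picture, Proposition~\ref{eq:sim-antisim-parakraw} and \eqref{transitionhahn}--\eqref{Jconj} exhibit the explicit change of basis $\mathcal{C}$. It sends the para-Krawtchouk basis to the union of the two Hahn bases and conjugates $\widehat J^{PK}$ and $Y$ into the same block form.

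\textbf{Step 4: irreducibility.} Every parameter in \eqref{oparame}--\eqref{eparamo} has the form $m\pm\gamma/2$ with $m\in\mathbb{Z}$. Since $\gamma\in(0,2)$, none of them is an integer, so condition \eqref{irredcond} holds and Proposition~\ref{prop:hahnirred} makes each summand irreducible.

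The main obstacle I expect is Step~3: stating precisely in what sense a block "is" the module $Q(\alpha,\beta,K)$. The generators differ by the affine maps above, so the structure constants of \eqref{hahnalgebra} must be matched after rescaling. I would first verify this for $N$ odd by checking that $M_1$ and $M_2$ transform correctly, then repeat the check for $N$ even, where the two blocks have unequal dimensions $j+1$ and $j$.
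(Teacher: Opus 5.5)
Your proposal is correct and follows essentially the same route as the paper: reordering by parity makes $\widetilde Y$ block diagonal (using the zeros of $E$ and $F$ at the ends of the sub-lattices), the restricted difference equations are the Hahn equations with parameters \eqref{oparame}--\eqref{eparamo}, and Proposition~\ref{prop:hahnirred} gives irreducibility because every parameter has the form $m\pm\gamma/2$ with $\gamma/2\in(0,1)$. The structure-constant matching you flag as the main obstacle in Step~3 is automatic: the restrictions of $X$ and $Y$ to an invariant block satisfy \eqref{hahnalgebra} because the full operators do, and these restrictions are exactly the affinely rescaled Hahn operators, up to conjugation by the diagonal weight matrix.
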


\begin{proof}
Subsections \ref{decomp: Spectral representation} and \ref{Recrep} establish
the decomposition and identify the parameters of the two summands. That each summand is irreducible then follows from Proposition \ref{prop:hahnirred}, since the parameters \eqref{oparame}--\eqref{eparamo} satisfy \eqref{irredcond} whenever $\gamma\in(0,2).$
\end{proof}

\section{The mechanism}
\label{GM}

The properties established in Section \ref{reducibility} can be extended to all polynomials of the para family. Three features of the setting are responsible for the decomposition, and identifying them is enough to carry out the reduction in general. This section focuses on this mechanism and analyzes its presence in the para-Racah, $q$-para-Racah and para-Bannai--Ito polynomials.
\subsection{Three features}
\label{mechanism}

\begin{proposition}
\label{prop:mechanism}
Let $\{p_n\}_{n=0}^{N}$ be a finite family of polynomials orthogonal on a
grid $\{x_s\}_{s=0}^{N}$. Let $X$ be the operator of multiplication by the
variable and let $Y$ be the difference or Dunkl-difference operator 
that the family diagonalizes.  Suppose that:
\begin{enumerate}
    \item[(i)] the grid is the union of two interlaced sub-lattices,
    $\{x_{2s}\}$ and $\{x_{2s+1}\}$;
    \item[(ii)] $Y$ shifts within a sub-lattice, that is, it carries a
    function supported on one sub-lattice to a function supported on the same
    sub-lattice;
    \item[(iii)] the shift coefficients of $Y$ vanish at the two ends of each
    sub-lattice.
\end{enumerate}
Then each sub-lattice spans a submodule of the module generated by $X$ and
$Y$, and that module is the direct sum of the two.
\end{proposition}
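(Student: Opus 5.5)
The plan is to work in the spectral representation, as in Subsection \ref{decomp: Spectral representation}, and reduce the statement to a check on the matrices of $X$ and $Y$ in the basis of point evaluations. Realize the module on functions on the grid through the weighted evaluation map $\mathcal{U}$, so that $X$ becomes the diagonal matrix $\operatorname{diag}(x_0,\ldots,x_N)$ and $Y$ becomes a finite matrix. Its entries are given by the shift coefficients of $Y$ evaluated on the grid, conjugated by $W^{1/2}$. Let $V_e$ be the span of $\{\ket{e_{2s}}\}$ and $V_o$ the span of $\{\ket{e_{2s+1}}\}$. By (i), $\mathbb{C}^{N+1}=V_e\oplus V_o$ as vector spaces. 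What remains is to show that each summand is stable under $X$ and under $Y$. Stability under the algebra they generate, including all commutators and the Casimir, then follows automatically.

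Stability under $X$ is immediate, since $X$ is diagonal in the basis $\{\ket{e_s}\}$. For $Y$ I would proceed in two steps.

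\emph{First step.} Use (ii): in the given realization, $Y$ is a sum of diagonal terms and shift terms $c_k(x)T_k$. Each shift $T_k$ sends a point of one sub-lattice to the neighbouring point of the same sub-lattice (the shift by $2$ in the para-Krawtchouk case, and the corresponding shift on the quadratic or $q$-quadratic grid otherwise). Hence the only possible nonzero matrix entries $Y_{rs}$ with $r\not\equiv s \pmod 2$ come from a shift that would leave the sub-lattice.

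\emph{Second step.} Use (iii). A shift leaves the sub-lattice only at its two ends, where the image point either falls outside the grid or onto the other sub-lattice. At exactly those points the shift coefficient vanishes. So these would-be entries are zero, and the matrix of $Y$ has no entries between $V_e$ and $V_o$. The reflection or Dunkl part of a Dunkl-difference operator is treated the same way: (ii) asserts that it preserves support on a sub-lattice, and (iii) kills the boundary terms.

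With both steps done, conjugating by the parity permutation $\Phi$ of \eqref{iso} makes both $\widetilde X$ and $\widetilde Y$ block diagonal, exactly as in the para-Krawtchouk case. Therefore $V_e$ and $V_o$ are submodules and the module is $V_e\oplus V_o$. If one prefers the polynomial picture, transport back by $\mathcal{U}^{-1}$. This is legitimate because $\mathcal{U}$ is an isomorphism: the points $x_s$ are distinct and the weights are nonzero, so evaluation at the $N+1$ points is bijective on $\mathcal{P}_N$.

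The main obstacle is that the hypotheses are phrased loosely. The key point to make precise is that the operator $Y$ acting on $\mathcal{P}_N$ is faithfully represented by its action on grid functions. Concretely, $(Yf)(x_s)$ must depend only on the values of $f$ at grid points, and on no other points, once the vanishing coefficients in (iii) are taken into account. I would address this first. The step is to interpret (iii) as saying that every shift term whose target lies off the grid, or on the other sub-lattice, carries a coefficient vanishing at the source point. With that reading, $Y$ preserves the space of grid functions and the matrix description above is legitimate.
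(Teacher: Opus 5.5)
Your proposal is correct and follows the paper's proof. In both, (i) splits the point-evaluation basis by parity, (ii) rules out entries of $Y$ joining the two halves except at the junction, (iii) makes those junction entries vanish, and $X$, being diagonal, preserves the blocks; your extra check that $(Yf)(x_s)$ depends only on grid values, so that $\mathcal{U}$ gives a faithful matrix realization, is a precision the paper leaves implicit.
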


\begin{proof}
By (i) the basis of the module splits in two according to the parity of the
site.  By (ii) the matrix of $Y$ in that basis has no entry joining the two
halves, except possibly at the junction, and by (iii) those entries vanish.
After reordering the basis by parity, $Y$ is block
diagonal.  The operator $X$ is diagonal in the same basis and preserves the
blocks.  The two subspaces are therefore invariant under both generators, and
they span the module.
\end{proof}

The same argument settles the irreducibility of the two summands, once and
for all.

\begin{proposition}
\label{prop:irredgen}
Under the hypotheses of Proposition \ref{prop:mechanism}, the submodule
carried by a sub-lattice is irreducible if and only if the shift coefficients
of $Y$ vanish at no interior point of that sub-lattice.
\end{proposition}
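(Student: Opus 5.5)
The plan is to reuse the argument of Proposition \ref{prop:hahnirred} on a single sub-lattice, since by Proposition \ref{prop:mechanism} the restriction of the module to that sub-lattice is already an invariant piece. Fix one sub-lattice, say with points $z_0,z_1,\ldots,z_K$ listed in increasing order along the sub-lattice (so $z_k=x_{2k}$ or $z_k=x_{2k+1}$). By (ii), the restriction of $Y$ to the span of the corresponding basis vectors $\ket{e_{z_k}}$ is tridiagonal in this ordering. Its superdiagonal entries are the forward shift coefficients $E(z_k)$, its subdiagonal entries are the backward coefficients $F(z_k)$, and the conjugation by $W^{1/2}$ rescales entries without changing which of them vanish. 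By (iii), the entries $E(z_K)$ and $F(z_0)$ vanish; these are exactly the boundary zeros that close the block, as in the Hahn case.

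\textbf{Invariant subspaces are coordinate subspaces.} I first show that any submodule of the block is spanned by a subset $S$ of the basis vectors. $X$ is diagonal with eigenvalues $z_k$, and these are pairwise distinct: they are distinct points of the grid on which the family is orthogonal, since the zeros are simple. An $X$-invariant subspace is therefore a sum of eigenlines. One can make this explicit by Lagrange interpolation: a polynomial in $X$ projects any vector onto each coordinate that appears in it.

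\textbf{Invariance under $Y$.} A coordinate subspace spanned by $S$ is $Y$-invariant exactly when:
\begin{itemize}
\item $k\in S$ and $E(z_k)\neq0$ imply $k+1\in S$;
\item $k\in S$ and $F(z_k)\neq0$ imply $k-1\in S$.
\end{itemize}
If no shift coefficient vanishes at an interior point, that is, $E(z_k)\neq0$ for $k<K$ and $F(z_k)\neq0$ for $k>0$, then any nonempty $S$ propagates step by step to all of $\{0,\ldots,K\}$. The block is then irreducible.

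\textbf{Converse.} Suppose instead that $E(z_m)=0$ for some $m<K$. Then $S=\{0,\ldots,m\}$ is a proper invariant set, provided $F(z_{k})$ causes no leakage upward; it cannot, because $F$ only moves downward. Symmetrically, if $F(z_m)=0$ for some $m>0$, then $S=\{m,\ldots,K\}$ is a proper invariant set.

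\textbf{Expected obstacle.} The point needing care is the precise meaning of ``interior''. It must mean the zeros of the forward coefficient at $z_0,\ldots,z_{K-1}$ and of the backward coefficient at $z_1,\ldots,z_K$, excluding the two boundary zeros guaranteed by (iii). I would state this convention explicitly. For the Dunkl-type operator (para-Bannai--Ito), I would also check that the ``shift within a sub-lattice'' still yields a tridiagonal block after suitable ordering. If the reflection part couples points non-adjacently, I would first reorder the sub-lattice so that $Y$ becomes tridiagonal, as is standard for Bannai--Ito grids, and then run the same argument.
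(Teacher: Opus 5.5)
Your route is the paper's: distinct eigenvalues of the diagonal $X$ force every submodule to be a coordinate subspace, and tridiagonality of $Y$ reduces invariance to a cut in the chain of couplings. The irreducibility direction is fine, because its hypothesis is symmetric. Your invariance criterion, however, is transposed, and the converse inherits the error. With $E(z_k)$ in position $(k,k+1)$ and $F(z_k)$ in position $(k,k-1)$, the vector $Ye_k$ is the $k$-th column,
\[
Ye_k=E(z_{k-1})\,e_{k-1}+A(z_k)\,e_k+F(z_{k+1})\,e_{k+1}.
\]
So $\operatorname{span}\{e_k:k\in S\}$ is invariant if and only if $k\in S$ and $E(z_{k-1})\neq0$ imply $k-1\in S$, and $k\in S$ and $F(z_{k+1})\neq0$ imply $k+1\in S$. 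The conditions you wrote characterize instead the invariance of $\operatorname{span}\{e_k:k\notin S\}$, the functions vanishing on $S$. Hence if $E(z_m)=0$, the subspace $\operatorname{span}\{e_0,\ldots,e_m\}$ is in general not invariant: $Ye_m$ has the component $F(z_{m+1})e_{m+1}$. Your claim that ``$F$ only moves downward'' holds for the values $(Yf)(z_k)$, not for basis vectors. What $E(z_m)=0$ actually gives is that $(Yf)(z_m)$ no longer depends on $f(z_{m+1})$, so $\operatorname{span}\{e_{m+1},\ldots,e_K\}$ is invariant. Likewise $F(z_m)=0$ makes $\operatorname{span}\{e_0,\ldots,e_{m-1}\}$ invariant. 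Replacing your two sets by their complements repairs the proof. Proposition~\ref{prop:hahnirred} states the same transposed criterion; the paper's proof of the present statement avoids the slip only because it never names the sets.

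Your fallback for the Dunkl case also needs revision. For $N$ odd nothing is required: $R$ and $T^{+}R$ in \eqref{PBIdunkl} move the sub-lattice index by $\pm1$, so the block is already tridiagonal. For $N$ even, $\mathcal{D}_\beta$ also contains $T^{\pm}$, which move the index by $\pm2$. Each interior point is then coupled to four others on its sub-lattice (the entries $A$, $D$, $C'$, $B$ around $F$ in the displayed block), and no reordering makes such a block tridiagonal. The ``if'' direction still holds there: with every coupling nonzero, any nonempty $S$ spreads over the whole sub-lattice. But a single interior zero no longer forces a splitting, since the other shifts bridge the cut. The ``only if'' must then be read as the existence of a proper subset closed under the nonzero couplings. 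The paper's proof assumes tridiagonality tacitly, so that hypothesis should be stated explicitly rather than obtained by reordering.
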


\begin{proof}
On the sub-lattice, $X$ is diagonal with distinct eigenvalues, so any
invariant subspace is spanned by a subset of the basis vectors attached to
the spectral points.  After the reordering of Proposition
\ref{prop:mechanism}, $Y$ is tridiagonal on that basis, and such a subset is
invariant under $Y$ precisely when a shift coefficient vanishes between two
of its points.  The coefficients that vanish at the two ends are those of
condition (iii); they make the sub-lattice finite and do not split it.
\end{proof}

Proposition \ref{prop:hahnirred} above is this criterion written out for the
Hahn algebra, where it takes the form of a condition on $\alpha$ and $\beta$.
The same is done for the Racah, $q$-Racah and Bannai-Ito cases in Subsection \ref{irredPR}, \ref{irredqPR} and \ref{irredPBI}.
\begin{remark}
    Proposition \ref{prop:mechanism} gives the reduction, not the pairing of the eigenvalues.
The spectrum of $Y$ satisfies $\lambda_n=\lambda_{N-n}$ because of the truncation condition applied to the polynomials of the para family: in the para-Krawtchouk case, for instance, the truncation condition appears exactly in the expression of the continuous Hahn eigenvalue:
\begin{equation}
    \lambda_n=n(n+a+b+c+d-1),
\end{equation}
and imposing \eqref{truncation} yields the pairing \eqref{pairing}.  The same mechanism recurs for each of the other para families discussed bellow. The truncation condition of the parent infinite family always forces this same degeneracy in the difference or Dunkl-difference operator. This provides a general explanation as to why the polynomials of the para family are all reducible: the truncation always forces a degeneracy in the spectrum, allowing symmetric and antisymmetric combinations of $p_n$ and $p_{N-n}$ to be formed, and it is precisely these combinations that separate the two sub-lattices, as in \eqref{evenhahn} and \eqref{oddhahn}, thereby explaining the reduction of the recurrence picture.
\end{remark}

The remainder of this section carries the argument out for the three other para families.  Since the para-Krawtchouk polynomials are a limiting case of the para-Racah polynomials, we begin by making that limit explicit.

\subsection{The para-Racah polynomials}
\label{PRsec}
The para-Racah polynomials arise through a truncation process applied to the Wilson polynomials \cite{pararacah}. We review the procedure and collect the properties of the polynomials it produces. The Wilson polynomials, which depend on four parameters $
a,b,c,d$, are recalled in Appendix \ref{app:wilson} 
\paragraph{Truncation condition.}
The Wilson polynomials defined in $\widetilde W_n(x^2;a,b,c,d)$ \eqref{wilsondef}, have recurrence coefficients $A_n,C_n$ given by \eqref{wilsonA}--\eqref{wilsonC}. The recurrence relation \eqref{wilsonrec} terminates after a finite number of steps when 
\begin{equation}
    A_NC_{N+1}=0\,.
\end{equation}
Besides the standard truncation, which leads to the Racah polynomials, the family also truncates to a finite set of polynomials when
\begin{equation}
1-(a+b+c+d)=N\in\mathbb{Z}_{>0}.
\label{racahtrunc}
\end{equation}
It is this choice of parameters that gives rise to the para-Racah polynomials.
\paragraph{Parametrization.}
Let $N=2j+p$, where $p=0,1$, according as $N$ is even or odd, respectevely, and $j$ a positive integer. The condition \eqref{racahtrunc} is then implemented  by 
\begin{equation}
b=-a-j+e_1t,
\quad
d=(1-p)-c-j+e_2t,
\quad t\to0\,,\label{racahparam}
\end{equation}
where $e_1,e_2$ are deformation parameters. The deformation parameters $e_1,e_2$ that occur in \eqref{racahparam} are not independent and lead to a single deformation parameter $\sigma$, defined by
\begin{equation}
\sigma=\frac{e_1}{e_1+e_2},
\qquad
1-\sigma=\frac{e_2}{e_1+e_2}.
\label{sigmadef}
\end{equation}
The deformation parameter is written
$\sigma$ here and $\alpha$ in \cite{pararacah,qPR,paraBI}.
\paragraph{Recurrence relation.}
Substituting \eqref{racahparam} into the  recurrence coefficients of the Wilson polynomials \eqref{wilsonA}--\eqref{wilsonC}, and letting $t\to0$, we obtain the recurrence relation for 
the monic para-Racah polynomials $R_n(x^2)$
\begin{equation}
x^2R_n(x^2)=R_{n+1}(x^2)+b_nR_n(x^2)+u_nR_{n-1}(x^2),
\qquad n=0,1,\ldots,N,
\label{racahrecurrence}
\end{equation}
with $R_{-1}(x^2)=0$ and $R_0(x^2)=1$, with coefficients given by
\begin{align}
b_n
={}&
-\frac{p}{2}
\left[
a(a+j)+c(c+j)+n(2j+p-n)
\right]
\notag\\
&+(1-p)\Bigg[
\frac{(n-2j-p)(n+a+c)(n+a-c-j+1)}
     {2(2n+1-2j-p)}
\notag\\
&\hspace{2.5cm}
+\frac{n(n-2j-p-a-c)(n-j-1+c-a)}
     {2(2n-1-2j-p)}
-a^2
\Bigg],\label{eq:bn-pararacah}
\\
u_n
={}&
p\,
\frac{
n(2j+p+1-n)(2j+p-n+a+c)(n-1+a+c)
\left[(n-j-1)^2-(a-c)^2\right]
}{
4(2j+p-2n)(2j+p-2n+2)
}
\notag\\
&+(1-p)\,
\frac{
n(2j-n+1)(a+c+n-1)(a-c+j-n+1)
(-a+c+j-n)(a+c+2j-n)
}{
4(2j-2n+1)^2
}.\label{eq:un-pararacah}
\end{align}
For values $(p_0,n_0)$ of $(p,n)$ for which \eqref{eq:bn-pararacah} or \eqref{eq:un-pararacah} are indeterminate, one should take the limit $p \rightarrow p_0$ followed by $n \rightarrow n_0$.
The positivity of the recurrence coefficients, $u_n>0$ for $n=1,\ldots,N$,
 restricts $a$, $c$ and $\sigma$ to one of
\begin{equation}
\left\{
\begin{array}{l}
-j-p(j+1)<a+c<-j+1,\\[2pt]
c-a>j \ \text{or}\ c-a<-j+1-p,\\[2pt]
0\leq \sigma\leq 1,
\end{array}
\right.
\qquad \text{or} \qquad
\left\{
\begin{array}{l}
a+c<-N+1 \ \text{or}\ 0<a+c,\\[2pt]
-p<c-a<1,\\[2pt]
0\leq \sigma\leq 1,
\end{array}
\right.
\label{conditions}
\end{equation}

\paragraph{Discrete orthogonality.} 
The para-Racah polynomials satisfy the discrete orthogonality
\begin{equation}
\sum_{s=0}^NR_n(x_s^2)R_m(x_s^2)w_s=h_n\delta_{nm}\,,\qquad h_n=u_1u_2\ldots u_n\,,
\end{equation}
where $x_s^2$ are zeros of $R_{N+1}(x^2)$, which define
the quadratic bi-lattice
 \begin{align}   
&x_{2s}^{\,2}=-(s+a)^2,\qquad
s=0,\ldots,j,\label{racahgrid-even}
\\
&x_{2s+1}^{\,2}=-(s+c)^2,
\qquad
s=0,\ldots,j-1+p.\label{racahgrid-odd}
\end{align}
To ensure that there are no collisions between the two sub-lattices, these additional conditions have to be met:
\[
a+c\notin\{-j,-j-1,\ldots,-N+1\}
\qquad\text{or}\qquad
a\neq c.
\]
Explicit expressions for the weights are given in \cite[eq.~(3.16)]{pararacah} for $N=2j+1$ and \cite[eq.~(4.13)]{pararacah} for $N=2j$.

\paragraph{Difference equation.} 
By inserting the parametrization \eqref{racahparam} in the diﬀerence equation of the Wilson polynomials \eqref{wilsondiff}, it is
easily seen that the limit $t\to 0 $ is trivial since there are no parameters in the denominator. Hence, the
para-Racah polynomials obey the same diﬀerence equation as the Wilson polynomials:
\begin{equation}
-n(N-n)R_n(x^2)
=
\overline{D}(x)\,R_n\!\left((x+i)^2\right)
-
\bigl(\overline{D}(x)+D(x)\bigr)R_n(x^2)
+
D(x)\,R_n\!\left((x-i)^2\right),
\label{Y_para}
\end{equation}
with
\begin{equation}
D(x)
=
\frac{
(a+ix)(-a-j+ix)(c+ix)(1-p-c-j+ix)
}{
(2ix)(2ix+1)
}
\end{equation}
and $\overline{D}(x)=D(-x)$ its complex conjugate.
As in the para-Krawtchouk case, the spectrum is doubly degenerate: $R_n$ and $R_{N-n}$ share the eigenvalue $-n(N-n)$ (cf.\ Proposition~\ref{prop:mechanism}).

\paragraph{Hypergeometric representation.}
Their explicit expression splits according to the degree.  In the case $n\leq j$, we have
\begin{equation}
R_n(x^2)
=
\eta_n\,
{}_4F_3\!\left(
\begin{matrix}
-n,\; n-2j-p,\; a-ix,\; a+ix\\
-j,\; a+c,\; a-c-j+1-p
\end{matrix}
;1
\right),
\end{equation}
whereas for $n>j$ we have
\begin{equation}
\begin{aligned}
R_n(x^2)
={}&
\eta_n\,
{}_4F_3\!\left(
\begin{matrix}
-n,\; n-2j-p,\; a-ix,\; a+ix\\
-j,\; a+c,\; a-c-j+1-p
\end{matrix}
;1
\right)
\\[4pt]
&+
\eta_n
\frac{
(-n)_{j+1}
(n-2j-p)_{2j+p-n}
(a-ix)_{j+1}
(a+ix)_{j+1}
(1)_{n-j-p}
}{
\sigma(1)_{j+1}
(-j)_j
(a+c)_{j+1}
(a-c-j+1-p)_{j+1}
}
\\[4pt]
&\times
{}_4F_3\!\left(
\begin{matrix}
-n+j+1,\; n-j+1-,\; a+j+1-ix,\; a+j+1+ix\\
j+2,\; a+c+j+1,\; a-c+2-p
\end{matrix}
;1
\right),
\end{aligned}
\end{equation}
where $\eta_n$ is a normalization factor given in \cite{pararacah} that makes the polynomials monic.

\subsubsection{The para-Krawtchouk polynomials as a limit}
The para-Racah  polynomials that we have introduced here are orthogonal on a quadratic bi-lattice given
by \eqref{racahgrid-even}--\eqref{racahgrid-odd}. It is possible to further deform these bi-lattices into a linear
bi-lattice. Set $\sigma=1/2$, for which the recurrence coefficients in \eqref{racahrecurrence} satisfy $b_n=b_{N-n}$. We can then reparametrize
\begin{equation}
a(\theta)=\frac{\theta-\frac{\gamma}{2}}{2},
\qquad
c(\theta)=\frac{\theta+\frac{\gamma}{2}}{2},
\end{equation}
the quadratic bi-lattice \eqref{racahgrid-even}--\eqref{racahgrid-odd} degenerates to the linear one:
\begin{equation}
\lim_{\theta\to\infty}
\frac{-2\left(x^2_{2s}+a(\theta)^2\right)}{\theta}
=2s,
\qquad
\lim_{\theta\to\infty}
\frac{-2\left(x^2_{2s+1}+a(\theta)^2\right)}{\theta}
=2s+\gamma.
\end{equation}
Introducing $P_n(y)$ related to $R_n(x^2; N, a, c, \sigma)$
through
\begin{equation}
x^2
=
-\frac{\theta}{2}y
-
\left(
\frac{\theta-\frac{\gamma}{2}}{2}
\right)^2,
\qquad
P_n(y)
=
\left(-\frac{\theta}{2}\right)^{-n}
R_n\!\left(
x^2;
N,
\frac{\theta-\frac{\gamma}{2}}{2},
\frac{\theta+\frac{\gamma}{2}}{2},
\frac{1}{2}
\right),
\end{equation}
the recurrence relation becomes
\begin{equation}
yP_n(y)
=
P_{n+1}(y)
-
\frac{2b_n+2a(\theta)^2}{\theta}\,P_n(y)
+
\frac{4u_n}{\theta^2}\,P_{n-1}(y).
\label{eq:P-recurrence}
\end{equation}
With \eqref{eq:bn-pararacah} and \eqref{eq:un-pararacah}, the limit $\theta\to\infty$ gives, for $N$ odd,
\begin{align}
&\lim_{\theta\to\infty}
-\frac{2b_n+2a(\theta)^2}{\theta}
=
-\frac{(2j+p-n)(2j+2p-2-2n+\gamma)}{2(2n-2j-2p+1)}-\frac{n(2j+2-2n-\gamma)}{2(2n-2j-1)}\,,
\\
&\lim_{\theta\to\infty}
\frac{4u_n}{\theta^2}
=
\frac{n\left(2 j +p -n +1\right) \left(2 j +2 p -2 n +\gamma \right)  \left(2 j +2-2 n -\gamma \right)}{4 \left(2 n -1-2 j -2 p \right) \left(2 n -2 j -1\right)}\,.
\end{align}
These are the para-Krawtchouk recurrence coefficients \eqref{recurrenceB}, \eqref{recurrenceU} for
the two parities. The para-Krawtchouk polynomials are thus a limiting case
of the para-Racah polynomials.

\subsubsection{The Racah algebra}
The connection between the para-Racah polynomials and the Racah algebra can be seen directly at the level of their operators. The multiplication operator by the variable and the difference operator associated with the para-Racah polynomials satisfy closed commutation relations involving quadratic combinations of the two operators. These relations are those of the Racah algebra, with the corresponding structure constants taking different values according to the parity of $N$. We give these commutation relations explicitly. Let $X$ be the operator multiplication by $x^2$ and $Y$ the difference operator on the r.h.s of \eqref{Y_para}. The commutation relations read 
\begin{equation}
\begin{aligned}
[Y,[X,Y]] &=
2Y^2
-2\{X,Y\}
+\mu Y
+\zeta X
+\xi, \\[1ex]
[[X,Y],X] &=
2\{X,Y\}
-2X^2
+ Y
+\mu X
+\kappa,
\end{aligned}
\end{equation}
with
\begin{align}
\mu
&=
-2(
a(a+j)+c(c+j+p-1)-j(j+p))
-1+p,
\\
\zeta
&=
-(2j+p-1)(2j+p+1)
=
1-N^2,
\\
\xi
&=
-(2j+p+1)
\left[
(j+p-1)a(a+j)
+jc(c+j+p-1)
\right],
\\
\kappa
&=
(j+p-1)a(a+j)
+jc(c+j+p-1)
-2ac(a+j)(c+j+p-1).\label{racsts}
\end{align}

\subsubsection{The para-Racah decomposition}
\label{sec:pR-decomp}

The decomposition of the module carried by the para-Racah polynomials
proceeds along the same lines as for the para-Krawtchouk case in
Subsection~\ref{decomp: Spectral representation}.  We omit some of the
details, as the argument is identical to the para-Krawtchouk case.

\label{sec:pr-decomposition}
\paragraph{Spectral representation.}
Substituting $x=iz$, the spectral points become
 \begin{align}   
&z_{2s}=s+a,\qquad
s=0,\ldots,j,
\label{racahzgrid-even}
\\
&z_{2s+1}=s+c,
\qquad
s=0,\ldots,j-1+p,
\label{racahzgrid-odd}
\end{align}
and equation \eqref{Y_para} reads
\begin{equation}
-n(N-n)R_n(-z^2)
=
\overline{D}(iz)\,R_n\!\left(-(z+1)^2\right)
-
\bigl(\overline{D}(iz)+D(iz)\bigr)R_n(-z^2)
+
D(iz)\,R_n\!\left(-(z-1)^2\right).
\label{new_diff}
\end{equation}
This is exactly the setting of Proposition \ref{prop:mechanism}: condition (i) is the bi-lattice \eqref{racahzgrid-even}--\eqref{racahzgrid-odd}, condition (ii) follows from \eqref{new_diff} together with the explicit form of the bi-lattice 
 \eqref{racahzgrid-even}--\eqref{racahzgrid-odd}, and condition (iii) holds because $D(iz)$
vanishes at the first two spectral points and $\overline{D}(iz)$ at the last
two.  As in Section~\ref{decomp: Spectral representation}, the module is
therefore reducible, and a permutation matrix displays the two blocks. 
Setting
\begin{align}
R^e_n(x)
&=
R_n\big(-(x+a)^2\big),
\qquad
R_n^o(x)
=
R_n\big(-(x+c)^2\big),
\label{transracah}
\\
D^e(x)
&=
D\big(i(x+a)\big),
\qquad
D^o(x)
=
D\big(i(x+c)\big),
\label{transdiff}
\end{align}
one obtains
\begin{align}
\label{eq:transformed-difference-equations}
n(n-N)R^e_n(x)
&=
D^e(x)R^e_n(x-1)-\left(
D^e(x)+\overline{D^e}(x)
\right)R^e_n(x)
+\overline{D^e}(x)R^e_n(x+1),
\\
n(n-N)R^o_n(x)
&=
D^o(x)R^o_n(x-1)-\left(
D^o(x)+\overline{D^o}(x)
\right)R^o_n(x)
+\overline{D^o}(x)R^o_n(x+1),\label{eq:transformed-difference-equations-2}
\end{align}
both of which are the difference equation of the Racah polynomials given by \eqref{racahdiff}. Comparing with \eqref{racahBD} identifies, for $N=2j+1$,
\begin{align}
\alpha_e=-j-1,
\qquad
\beta_e=-j-1,
\qquad
\delta_e=a-c,
\qquad
\gamma_e=a+c-1,
\label{racahcoeffe}\\
\alpha_o=-j-1,
\qquad
\beta_o=-j-1,
\qquad
\delta_o=c-a,
\qquad
\gamma_o=a+c-1.
\label{racahcoeffo}
\end{align}
In both cases $\alpha+\beta+1=-N$, so that the Racah eigenvalue
$n(n+\alpha+\beta+1)$ of the difference equation \eqref{racahdiff} is $n(n-N)$, matching
\eqref{eq:transformed-difference-equations}--\eqref{eq:transformed-difference-equations-2}. 

For $N=2j$, the argument is unchanged; only the dimensions of the two blocks differ, the
even one of dimension $j+1$ and the odd one of dimension $j$, exactly as in
the para-Krawtchouk case. The parameters of the two Racah submodules are
\begin{align}
&\widetilde{\alpha}_e=-j-1,
\qquad
\widetilde{\beta}_e=-j,
\qquad
\widetilde{\delta}_e=a-c,
\qquad
\widetilde{\gamma}_e=a+c-1,
\label{racahevcoeffe}\\
&\widetilde{\alpha}_o=-j,
\qquad
\widetilde{\beta}_o=-j-1,
\qquad
\widetilde{\delta}_o=c-a,
\qquad
\widetilde{\gamma}_o=a+c-1,
\label{racahevcoeffo}
\end{align}
again with $\alpha+\beta+1=-N$ in both cases.

\paragraph{Recurrence representation.}
Let us consider
\begin{align}
H^e_n=H_n(\lambda(x);\alpha_e,\beta_e,\gamma_e,\delta_e,j),
\qquad
H^o_n=H_n(\lambda(x);\alpha_o,\beta_o,\gamma_o,\delta_o,j),\qquad N=2j+1\,,\\
\widetilde{H}_n^e=H_n(\lambda(x);\widetilde{\alpha}_e,\widetilde{\beta}_e,
\widetilde{\gamma}_e,\widetilde{\delta}_e,j),
\qquad
\widetilde{H}_n^o=H_n(\lambda(x);\widetilde{\alpha}_o,\widetilde{\beta}_o,
\widetilde{\gamma}_o,\widetilde{\delta}_o,j-1),\qquad N=2j\,,
\end{align}
where $H_n$ is the standard Racah
polynomial \eqref{racahhyp}, with the last argument ($j,j-1$) referring to the size of the family, and $\lambda(x)=x(x+\gamma+\delta+1)$. Both families are considered on the grid $y_s=\lambda(s)$, $s=0,1,\ldots,j-1+p$. We write  $r_n(x)=\eta_n^{-1}R_n(x)$ for $p=1$ for the non-normalized polynomials. The Whipple transformation formula (see relation (2.10.5) in \cite{GasperRahman2004}) is given by
\begin{equation}
{}_4F_3\left(
\begin{matrix}
-n,\ a,\ b,\ c\\
d,\ e,\ f
\end{matrix}
;1
\right)
=
\frac{(e-a)_n(f-a)_n}{(e)_n(f)_n}
{}_4F_3\left(
\begin{matrix}
-n,\ a,\ d-b,\ d-c\\
d,\ a-e-n+1,\ a-f-n+1
\end{matrix}
;1
\right).
\label{whipple}
\end{equation}
As in Proposition \ref{eq:sim-antisim-parakraw} for the para-Krawtchouk case, three consecutive applications of this transformation yield, for $N$ odd
\begin{align}
   & H^e_n(y_s)=(1-\sigma)r_n(x_{2s})+\sigma r_{N-n}(x_{2s}),
    \label{subevenracah}\\
&H^o_n(y_{s})=\frac{\sigma(a-c-j)_n}{(c-a-j)_n}
    \bigl(r_n(x_{2s+1})-r_{N-n}(x_{2s+1})\bigr),
    \label{suboddracah}
\end{align}
and for $N$ even
\begin{align}
   & \widetilde{H}^e_n(y_{s})=(1-\sigma)\widetilde{r}_n(x_{2s})
    +\sigma\widetilde{r}_{N-n}(x_{2s}),\label{subevenracah-Neven}\\
   & \widetilde{H}^o_n(y_s)=\sigma \frac{j(a-c+1-j)_n}{(j-n)(c-a-j)_n}
    \bigl(\widetilde{r}_n(x_{2s+1})-\widetilde{r}_{N-n}(x_{2s+1})\bigr)\label{suboddracah-Neven},
\end{align}
where $\widetilde{r}_n(x)=\eta_n^{-1}R_n(x)$ for $p=0$ As explained in the remark following
Proposition~\ref{prop:mechanism}, the truncation condition already forces
$\lambda_n=\lambda_{N-n}$ for the para-Racah spectrum, so $r_n$ and $r_{N-n}$ (equivalently for $\widetilde{r}_n$)
share an eigenspace of $Y$ and may be combined within it. As in the para-Krawtchouk case, we have the converse relations
\begin{equation}
\frac{1-\sigma}{\sigma}r_n(x^2_{2s+1})+r_{N-n}(x^2_{2s+1})=0,
\qquad
r_n(x^2_{2s})-r_{N-n}(x^2_{2s})=0,
\label{racahparityvanish}
\end{equation}
\begin{equation}
\frac{(j-n)(\sigma-1)}{j\sigma}
\widetilde{r}_n(x_{2s+1}^2)
+\widetilde{r}_{N-n}(x_{2s+1}^2)=0,
\qquad
\widetilde{r}_n(x_{2s}^2)
-\widetilde{r}_{N-n}(x_{2s}^2)=0.
\end{equation}
The reduction in the recurrence representation follows as in
Subsection~\ref{Recrep}: \eqref{subevenracah}--\eqref{suboddracah} play the
role of \eqref{evenhahn}--\eqref{oddhahn} (and, for $N$ even, \eqref{subevenracah-Neven}--\eqref{suboddracah-Neven}) play the role of \eqref{evenhahn}--\eqref{oddhahn}.
Inverting the two relations gives, for $N=2j+1$
\begin{equation}
r_n(x_{2s})=
\begin{cases}
H_n^e(y_s), & n\le j,\\[0.8em]
H_{N-n}^e(y_s), & n>j,
\end{cases}
\qquad
r_n(x_{2s+1})=
\begin{cases}
\dfrac{\left(c-a-j\right)_n}
{\left(a-c-j\right)_n}
H_n^o(y_s), & n\le j,\\[1.8em]
\dfrac{(\sigma-1)\left(c-a-j\right)_{N-n}}
{\sigma\left(a-c-j\right)_{N-n}}
H_{N-n}^o(y_s), & n>j.
\end{cases}
\end{equation}
For $N=2j$, 
\begin{align}
&\widetilde{r}_n(x_{2s})=
\begin{cases}
\widetilde{H}_n^e(y_s), & n\le j,\\[0.8em]
\widetilde{H}_{N-n}^e(y_s), & n>j,
\end{cases}
\qquad
\widetilde{r}_n(x_{2s+1})=
\begin{cases}
\dfrac{\left(c-a-j\right)_n}
{\left(a-c+1-j\right)_n}
\widetilde{H}_n^o(y_s), & n\le j,\\[1.8em]
\dfrac{(j-n)(\sigma-1)\left(c-a-j\right)_{N-n}}
{j\sigma \left(a-c+1-j\right)_{N-n}}
\widetilde{H}_{N-n}^o(y_s), & n>j.
\end{cases}
\end{align}

\subsubsection{Irreducibility of the submodules}
\label{irredPR}

Here the situation differs from the para-Krawtchouk case, where the
irreducibility came for free.  The Racah parameters \eqref{racahcoeffe}, \eqref{racahcoeffo}, \eqref{racahevcoeffe} and \eqref{racahevcoeffo} are negative integers, and the coefficients
\eqref{racahBD} may vanish inside a sub-lattice.  Evaluating $D(iz)$ and
$\overline{D}(iz)$ on the spectrum, a zero occurs in addition to those at the
two ends when
\begin{equation}
\begin{aligned}
s \in \bigl\{&
a-c,\,
c-a,\,
j+a-c,\,
j+p-1+c-a,\,
-j-2a,\,
-j-a-c,
\\
&
1-p-j-a-c,\,
1-p-j-2c,\,
-2a,\,
-a-c,\,
-2c
\bigr\}.
\end{aligned}
\end{equation}
The positivity of the recurrence coefficients $u_n>0$ \eqref{conditions}, rules out all but three of these values, namely
\begin{equation}
s=-j-a-c,
\qquad
s=-2a,
\qquad
s=-2c.
\label{PRexcept}
\end{equation}
Thus, under \eqref{conditions}, the two Racah submodules of the para-Racah module
are irreducible if and only if none of $-j-a-c$, $-2a$ and $-2c$ is an
integer lying in $\{1,2,\ldots,j\}$.
The exceptional values \eqref{PRexcept} are of measure zero in the parameter
space, and we exclude them in what follows.  The para-Krawtchouk case is
recovered in the limit of Subsection \ref{PRsec}, where $a$ and $c$ grow
without bound and the three conditions are met automatically; this is the limiting form explained in Section \ref{irred}

\begin{theorem}
\label{thm:PR}
Let $N = 2j+p$ with $p=0,1$, and let $a$, $c$ and $\sigma$ satisfy
\eqref{conditions} and avoid \eqref{PRexcept}.  The $(N+1)$-dimensional module
of the Racah algebra carried by the para-Racah polynomials $V_{N+1}^{PR}$is the direct sum
of two irreducible Racah modules,
\begin{equation}
  V_{N+1}^{\mathrm{PR}} \cong H(\alpha_e, \beta_e,\gamma_e,\delta_e, j)
  \oplus H(\alpha_o, \beta_o,\gamma_o,\delta_o, j),   \qquad p=1,
\end{equation}
\begin{equation}
  V_{N+1}^{\mathrm{PR}} \cong
  H(\widetilde{\alpha}_e, \widetilde{\beta}_e,\widetilde{\gamma}_e,
  \widetilde{\delta}_e, j) \oplus
  H(\widetilde{\alpha}_o, \widetilde{\beta}_o,\widetilde{\gamma}_o,
  \widetilde{\delta}_o, j-1),   \qquad p=0,
\end{equation}
with the parameters given by \eqref{racahcoeffe}, \eqref{racahcoeffo},
\eqref{racahevcoeffe} and \eqref{racahevcoeffo}.  Each summand is irreducible if and only if non of $-j-a-c$, $-2a$ and $-2c$ is an integer lying in $\{1,2,\ldots,j\}$.
\end{theorem}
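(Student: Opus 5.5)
The argument follows the proof of Theorem~\ref{thm:PK}, using the general results of Subsection~\ref{mechanism}. It has three steps: the decomposition, the identification of the summands, and their irreducibility.

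\textbf{Decomposition.} We apply Proposition~\ref{prop:mechanism} in the variable $z=-ix$.
\begin{itemize}
\item Hypothesis (i) holds because the grid \eqref{racahzgrid-even}--\eqref{racahzgrid-odd} is the union of the two interlaced sub-lattices $\{s+a\}$ and $\{s+c\}$. These do not collide under the non-collision conditions stated after \eqref{racahgrid-odd}.
\item Hypothesis (ii) holds because \eqref{new_diff} shifts $z$ by $\pm1$, which maps $s+a$ to $s\pm1+a$ and $s+c$ to $s\pm1+c$.
\item For hypothesis (iii), we check directly from the numerator of $D$ that it vanishes at the first point of each sub-lattice. Indeed, $D(iz)$ contains the factors $(a-z)$ and $(c-z)$, which vanish at $z=a$ and $z=c$. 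Likewise $\overline{D}(iz)=D(-iz)$ contains the factors $(-a-j-z)$ and $(1-p-c-j-z)$ after the sign flip $z\to-z$. These must vanish at the last points $z=a+j$ and $z=c+j-1+p$. This should be rechecked with care, because the sign conventions of $z=\pm ix$ determine which factor is relevant.
\end{itemize}
The three hypotheses then give the direct sum of the even and odd submodules.

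\textbf{Identification of the summands.} Restricting \eqref{new_diff} to each sub-lattice gives \eqref{eq:transformed-difference-equations}--\eqref{eq:transformed-difference-equations-2}. We match $D^e,\overline{D^e}$ and $D^o,\overline{D^o}$ factor by factor with the Racah coefficients \eqref{racahBD}. This yields \eqref{racahcoeffe}--\eqref{racahcoeffo} for $p=1$ and \eqref{racahevcoeffe}--\eqref{racahevcoeffo} for $p=0$. The consistency check is $\alpha+\beta+1=-N$, which reproduces the eigenvalue $n(n-N)$.

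The module isomorphism, as opposed to the mere agreement of operators, comes from the recurrence-side relations \eqref{subevenracah}--\eqref{suboddracah-Neven} together with \eqref{racahparityvanish}. These assemble into a change-of-basis matrix $\mathcal{C}$, as in \eqref{transitionhahn}. Conjugating by $\mathcal{C}$ carries multiplication by $x^2$ in the basis $\{r_n\}$ to the block form
\[
\mathrm{diag}\bigl(-\text{(Racah grid)}-\text{shift},\;\ldots\bigr),
\]
exactly as in \eqref{Jconj}. It leaves $Y$ unchanged, since $\lambda_n=\lambda_{N-n}$.

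\textbf{Irreducibility.} We apply Proposition~\ref{prop:irredgen}: each summand is irreducible if and only if no shift coefficient vanishes at an interior point of its sub-lattice. This requires two things.
\begin{itemize}
\item First, list the zeros of $D(i(s+a))$, $\overline{D}(i(s+a))$, $D(i(s+c))$ and $\overline{D}(i(s+c))$ other than the boundary ones. Each linear factor of the numerator vanishes at one value of $s$. These values give the eleven candidates displayed before \eqref{PRexcept}.
\item Second, eliminate candidates using \eqref{conditions}. For example, when $c-a>j$ or $c-a<-j+1-p$, the values $a-c$, $c-a$, $j+a-c$ and $j+p-1+c-a$ fall outside $\{1,\ldots,j\}$. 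When $-p<c-a<1$, these values are non-integral or at the boundary, and the non-collision condition $a\neq c$ handles the remaining case. A similar argument applies to the values involving $a+c$.
\end{itemize}
What survives is $-j-a-c$, $-2a$ and $-2c$, which proves the stated criterion.

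The denominator $2ix(2ix+1)$ also needs care: poles could cancel zeros. I expect this case bookkeeping to be the main obstacle, so I would organize it as a table over the two regimes of \eqref{conditions} and the two parities $p$.
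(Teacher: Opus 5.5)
\paragraph{Same route as the paper.} Your route is the paper's own proof. You apply Proposition~\ref{prop:mechanism} on the $z$-bi-lattice, and your check of (iii) is right: $D(iz)$ carries the factors $(a-z)(c-z)$ and $\overline{D}(iz)$ carries $(z-a-j)(z-c-j+1-p)$. You then match the restricted equations factor by factor with \eqref{racahBD}, use \eqref{subevenracah}--\eqref{suboddracah-Neven} (which the paper gets from three Whipple transformations) for the recurrence picture, and apply Proposition~\ref{prop:irredgen} to the eleven candidate zeros. The decomposition and the parameter identification go through as you describe.

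\paragraph{The last step does not go through.} The claim ``what survives is $-j-a-c$, $-2a$, $-2c$, which proves the stated criterion'' fails, and the paper's text makes the same claim. First, a uniform test against $\{1,\ldots,j\}$ is the wrong test. Take a sub-lattice $s=0,\ldots,\ell$, with $\ell=j$ for the even one and $\ell=j-1+p$ for the odd one. A zero of the backward coefficient $D$ splits it only at $s\in\{1,\ldots,\ell\}$, and a zero of the forward coefficient $\overline{D}$ only at $s\in\{0,\ldots,\ell-1\}$. For example, take $p=0$, $j=2$, $a=-\tfrac94$, $c=-\tfrac74$, which satisfy \eqref{conditions} and give distinct grid points. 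Then $-j-a-c=2\in\{1,2\}$. But the corresponding zero of $D$ on the odd sub-lattice is at $s=2$, outside $\{0,1\}$, and both summands are irreducible. So the ``only if'' half cannot be proved.

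\paragraph{None of the eleven candidates can occur.} With the correct ranges, each candidate is exactly a coincidence of two grid values $x^2=-z^2$. The only exceptions sit at $z=0$, where the factor $z$ of the denominator cancels the zero. For instance:
\begin{itemize}
\item $s=-2a$ in $\overline{D}$ on the even sub-lattice means $a+s=-a$, i.e.\ $x_{2s}^2=x_0^2$;
\item $s=-j-a-c$ in $D$ on the odd sub-lattice means $c+s=-(a+j)$, i.e.\ $x_{2s+1}^2=x_{2j}^2$;
\item $s=c-a$ in $D$ on the even sub-lattice means $a+s=c$.
\end{itemize}
Since $u_n>0$ forces the $N+1$ zeros of $R_{N+1}$ to be distinct, none of the eleven can occur. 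Both Racah summands are therefore irreducible throughout the admissible region.

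\paragraph{Where the three ``survivors'' come from.} They come from the first branch of \eqref{conditions}, whose lower bound is misprinted for $p=1$. There, positivity requires $-j-1<a+c<-j+1$: for $-2j-1<a+c\le -j-1$ and $|c-a|>j$ one finds $u_j\le 0$. For $p=1$, the three values reach $\{1,\ldots,j\}$ only in that spurious range. For $p=0$ they reach it only at harmless boundary values like the one above. Carried out correctly, your bookkeeping proves the decomposition into two irreducible Racah modules with no exceptional set, not the stated ``if and only if''.
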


\subsection{The $q$-para-Racah polynomials}
\label{qPRsec}

The $q$-para-Racah polynomials $\Pi_n(x;a,c,\sigma,N)$ \cite{qPR} arise from
a non-conventional truncation of the Askey--Wilson polynomials. We review the procedure and collect the properties of the polynomials it produces. The Askey--Wilson polynomials, which depend on four parameters $a,b,c,d$, are recalled in Appendix \ref{app:askeywilson}
\paragraph{Truncation condition.}
The Askey--Wilson polynomials $W_n(x;a,b,c,d\,|\,q)$ defined in  \eqref{awdef}
have recurrence coefficients $A_n,C_n$ given by \eqref{awA}--\eqref{awC}. The
recurrence relation \eqref{awrec} terminates after a finite number of steps when
\begin{equation}
    A_NC_{N+1}=0
\end{equation}
Besides the standard truncation, which leads to the $q$-Racah polynomials, the family also
truncates to a finite set of polynomials under this non-conventional
condition
\begin{equation}
1-abcdq^{N-1}=0,\qquad N\in\mathbb{Z}_{>0}.
\label{qracahtrunc}
\end{equation}
It is this choice of parameters that gives rise to the
$q$-para-Racah polynomials.
\paragraph{Parametrization.}
Let $N=2j+1$. The condition \eqref{qracahtrunc} is
implemented by
\begin{equation}
b=a^{-1}q^{-j+e_1t},
\quad
d=c^{-1}q^{-j+e_2t},
\quad t\to0\,,
\label{qracahparam-odd}
\end{equation}
where $e_1,e_2$ are deformation parameters, not independent, leading to a
single deformation parameter $\sigma$,
\begin{equation}
\sigma=\frac{e_1}{e_1+e_2},
\qquad
1-\sigma=\frac{e_2}{e_1+e_2}.
\label{qsigmadef-odd}
\end{equation}
For $N=2j$, the condition \eqref{qracahtrunc} is implemented instead by
\begin{equation}
b=a^{-1}q^{-j+e_1t},
\quad
d=c^{-1}q^{-j+1+e_2t},
\quad t\to0\,,
\label{qracahparam-even}
\end{equation}
with $\sigma$ defined as in \eqref{qsigmadef-odd}. The deformation parameter
is written $\sigma$ here and $\alpha$ in \cite{qPR}.

\paragraph{Recurrence relation.}

Substituting \eqref{qracahparam-odd}--\eqref{qracahparam-even} into the recurrence coefficients of the Askey--Wilson
polynomials \eqref{awA}--\eqref{awC}, and letting $t\to0$, gives the recurrence coefficients of the
$q$-para-Racah polynomials.
For $N=2j+1$,
\begin{equation}
b_n=
\begin{cases}
\dfrac{(a+c)(q^{j+1}+1)q^{n}(acq^{j}+1)}{2ac(q^{j}+q^{n})(q^{j+1}+q^{n})},
& n\neq j,\,j+1,\\[10pt]
\dfrac{a+a^{-1}}{2}
+\dfrac{\sigma(c-a)(q^{j+1}-1)q^{-j}(acq^{j}-1)}{2ac(q-1)}
-\dfrac{(q^{j}-1)q^{-j}(c-aq)(acq^{j+1}-1)}{2ac(q^{2}-1)},
& n=j,\\[10pt]
\dfrac{a+a^{-1}}{2}
+\dfrac{(1-\sigma)(c-a)(q^{j+1}-1)q^{-j}(acq^{j}-1)}{2ac(q-1)}
-\dfrac{(q^{j}-1)q^{-j}(c-aq)(acq^{j+1}-1)}{2ac(q^{2}-1)},
& n=j+1,
\end{cases}
\label{eq:bn-qpararacah-odd}
\end{equation}
\begin{equation}
u_n=
\begin{cases}
\dfrac{
(q^{n}-1)(q^{n}-q^{2j+2})(acq^{n}-q)(q^{n}-acq^{2j+1})(aq^{n}-cq^{j+1})(cq^{n}-aq^{j+1})
}{
4a^2c^2(q^{j+1}+q^{n})^2(q^{2n}-q^{2j+1})(q^{2n}-q^{2j+3})
}, & n\neq j+1,\\[10pt]
\dfrac{(1-\sigma)\sigma(c-a)^2q^{-2j}(q^{j+1}-1)^2(acq^{j}-1)^2}{4a^2c^2(q-1)^2},
& n=j+1.
\end{cases}
\label{eq:un-qpararacah-odd}
\end{equation}
For $N=2j$,
\begin{equation}
b_n=
\dfrac{a+a^{-1}}{2}
+\dfrac{(q^{n}-1)(acq^{2j}-q^{n})(aq^{j+1}-cq^{n})}{2ac(q^{j}+q^{n})(q^{2j+1}-q^{2n})}
+\dfrac{(q^{2j}-q^{n})(acq^{n}-1)(cq^{j}-aq^{n+1})}{2ac(q^{j}+q^{n})(q^{2j}-q^{2n+1})},
\qquad \text{for all } n,
\label{eq:bn-qpararacah-even}
\end{equation}
\begin{equation}
u_n=
\begin{cases}
\dfrac{
(q^{n}-1)(q^{n}-q^{2j+1})(acq^{n}-q)(q^{n}-acq^{2j})(aq^{n}-cq^{j})(cq^{n}-aq^{j+1})
}{
4a^2c^2(q^{j}+q^{n})(q^{j+1}+q^{n})(q^{2j+1}-q^{2n})^2
}, & n\neq j,\,j+1,\\[10pt]
\dfrac{(1-\sigma)(c-a)q^{-2j}(q^{j}-1)(q^{j+1}-1)(aq-c)(acq^{j}-1)(acq^{j}-q)}{4a^2c^2(q-1)^2(q+1)},
& n=j,\\[10pt]
\dfrac{\sigma(c-a)q^{-2j}(q^{j}-1)(q^{j+1}-1)(aq-c)(acq^{j}-1)(acq^{j}-q)}{4a^2c^2(q-1)^2(q+1)},
& n=j+1.
\end{cases}
\label{eq:un-qpararacah-even}
\end{equation}
The monic $q$-para-Racah polynomials $\Pi_n(x)$ thus satisfy
\begin{equation}
x\Pi_n(x)=\Pi_{n+1}(x)+b_n\Pi_n(x)+u_n\Pi_{n-1}(x),
\qquad n=0,1,\ldots,N,
\end{equation}
with $\Pi_{-1}(x)=0$ and $\Pi_0(x)=1$. The positivity of the recurrence coefficients 
$u_n>0$ for all $n=1,\ldots,N$, restrict $a$, $c$ and $\sigma$ to
\begin{equation}
0<q<1,\qquad 0<\sigma<1,\qquad c\neq a,\qquad q<\frac{a}{c}<q^{-1},
\qquad ac<1\ \text{ or }\ ac>q^{1-N}.
\label{qconditions-odd}
\end{equation}

\paragraph{Discrete orthogonality.}
Let $N=2j+p$. The $q$-para-Racah polynomials satisfy the discrete
orthogonality
\begin{equation}
\sum_{s=0}^{N}w_s\Pi_n(x_s)\Pi_m(x_s)=h_n\delta_{nm}\,,
\qquad h_n=u_1u_2\cdots u_n\,,
\end{equation}
where $x_s$ is the $q$-quadratic bi-lattice
 \begin{align}   
&x_{2s}
=
\frac{1}{2}\left(a^{-1}q^{-s}+aq^s\right),\qquad
s=0,\ldots,j,\label{eq:bilattice-para-qracah-even}
\\
&x_{2s+1}
=
\frac{1}{2}\left(c^{-1}q^{-s}+cq^s\right),
\qquad
s=0,\ldots,j-1+p\,.
\label{eq:bilattice-para-qracah-odd}
\end{align}
Conditions \eqref{qconditions-odd} are enough to guarantee no collisions between the two sublattices.
Explicit expressions for the weights are given in
\cite[eqs.~(3.33)--(3.35),(4.19)--(4.21)]{qPR}.
 \paragraph{Difference equation.}
They satisfy the $q$-difference equation
\begin{equation}
\label{eq:diff-qpararacah}
q^{-n}(1-q^n)(1-q^{n-N})\Pi_n(x)
=
A(\theta)T_{+}\Pi_n(x)
-
\left[A(\theta)+\overline{A}(\theta)\right]\Pi_n(x)
+
\overline{A}(\theta)T_{-}\Pi_n(x),
\end{equation}
where $x=\cos\theta$, $T_\pm$ act on $e^{\pm i\theta}$ by $T_\pm e^{i\theta}=q^\pm e^{i\theta}$,
$T_\pm e^{-i\theta}=q^{\mp}e^{-i\theta}$, and
\begin{equation}
A(\theta)
=
\frac{
(1-ae^{i\theta})
(1-a^{-1}q^{-j}e^{i\theta})
(1-ce^{i\theta})
(1-c^{-1}q^{-j+1-p}e^{i\theta})
}{
(1-e^{2i\theta})(1-qe^{2i\theta})
},
\qquad
N=2j+p,
\label{qA}
\end{equation}
with $\overline{A}(\theta)$ the complex conjugate.
\paragraph{Hypergeometric representation.}
Their explicit expressions, as in the previous cases, splits according to the degree. If $n\leq j-p$,
\begin{equation}
\Pi_n(x)=\tau_n\,{}_4\phi_3
\left(
\begin{matrix}
q^{-n},\, q^{n-2j-p},\, ae^{i\theta},\, ae^{-i\theta}\\
q^{-j},\, ac,\, ac^{-1}q^{-j+1-p}
\end{matrix}
\,;\,q;q
\right).
\end{equation}

\noindent If $p=1$ and $n=j$,
\begin{equation}
\Pi_j(x)=\tau_j\sum_{k=0}^{j}
\frac{
(q^{-j-1},ae^{i\theta},ae^{-i\theta};q)_k q^k
}{
(q,ac,ac^{-1}q^{-j};q)_k
}.
\end{equation}

\noindent If $p=1$ and $n=j+1$,
\begin{equation}
\begin{aligned}
\Pi_{j+1}(x)
={}&\tau_{j+1}\sum_{k=0}^{j}
\frac{
(q^{-j-1},ae^{i\theta},ae^{-i\theta};q)_k q^k
}{
(q,ac,ac^{-1}q^{-j};q)_k
}\\
&+\tau_{j+1}
\frac{
(q^{-j-1},ae^{i\theta},ae^{-i\theta};q)_{j+1}
q^{j+1}
}{
\alpha(q,ac,ac^{-1}q^{-j};q)_{j+1}
}.
\end{aligned}
\end{equation}

\noindent If $j+1+p\leq n\leq N$,
\begin{equation}
\begin{aligned}
\Pi_n(x)
={}&\tau_n\,{}_4\phi_3
\left(
\begin{matrix}
q^{-n},\, q^{n-2j-p},\,
ae^{i\theta},\, ae^{-i\theta}\\
q^{-j},\, ac,\, ac^{-1}q^{-j+1-p}
\end{matrix}
\,;\,q;q
\right)\\[6pt]
&+\tau_n
\frac{
(q^{n-2j-p};q)_{2j+p-n}
(q^{-n},ae^{i\theta},ae^{-i\theta};q)_{j+1}
(q;q)_{n-j-1}q^{j+1}
}{
\alpha(q^{-j};q)_j
(q,ac,ac^{-1}q^{-j+1-p};q)_{j+1}
}\\[6pt]
&\qquad\times{}_4\phi_3
\left(
\begin{matrix}
q^{j+1-n},\, q^{n-j+1-p},\,
aq^{j+1}e^{i\theta},\,aq^{j+1}e^{-i\theta}\\
q^{j+2},\, acq^{j+1},\, ac^{-1}q^{2-p}
\end{matrix}
\,;\,q;q
\right).
\end{aligned}
\end{equation}
where $\tau_n$ is a normalization factor given in \cite{qPR} that makes the polynomials monic.

\subsubsection{The $q$-Racah algebra}
The connection between the $q$-para-Racah polynomials and the $q$-Racah algebra can be seen directly at the level of their operators.  The multiplication operator by the variable and the difference operator associated with the $q$-para-Racah polynomials satisfy the relations of the the $q$-Racah (Askey Wilson) algebra  (Definition (3.1) in \cite{HAW}). We give these commutation relations explicitly.

Let $X$ be the multiplication by $x$ and $Y$ the operator on the r.h.s of \eqref{eq:diff-qpararacah}. The commutation relations read
\begin{equation}
\begin{aligned}
{}[X,[X,Y]] &= \rho\, XYX + a_1X^2 + a_2\{X,Y\} + a_3X + a_4Y + a_5 I,\\[1ex]
[Y,[Y,X]] &= \rho\, YXY + a_1\{X,Y\} + a_2Y^2 + a_3Y + a_6X + a_7 I,
\end{aligned}
\label{AWalgebra}
\end{equation}
with $\rho=q^2+q^{-2}-2$ and scalars $a_1,\ldots,a_7$ fixed by the
parameters. This is the $q$-Racah, or Askey--Wilson, algebra.  As $q\to1$
one has $\rho\to0$ and \eqref{AWalgebra} becomes the quadratic Racah algebra
\cite{mutual}, in agreement with the limit of Subsection \ref{PRsec}.

\subsubsection{The $q$-para-Racah decomposition}
\label{sec:qpR-decomp}

The decomposition of the module carried by the para-Racah polynomials
proceeds along the same lines as for the para-Krawtchouk and para-Racah cases in
Subsections~\ref{decomp: Spectral representation} and \ref{sec:pR-decomp} .  We omit some of the
details, as the argument is identical to the para-Krawtchouk case.

\paragraph{Spectral representation.}
With $\mu(x)=q^{-x}+\gamma\delta q^{x+1}$, the grid \eqref{eq:bilattice-para-qracah-even}--\eqref{eq:bilattice-para-qracah-odd} gives $\gamma_e\delta_e=a^2q^{-1}$ and $\gamma_o\delta_o=c^2q^{-1}$, so that
\begin{equation}
\mu(s)=q^{-s}+a^2q^{s}=2a\,x_{2s}
\qquad\text{and}\qquad
\mu(s)=q^{-s}+c^2q^{s}=2c\,x_{2s+1}\,,
\label{qrescale}
\end{equation}
on the even and odd sub-lattices respectively: each is the $q$-Racah grid dilated, by $2a$ on one and by $2c$ on the other. This is the counterpart here of the substitution \eqref{eq:transformed-polynomials-a} of Section \ref{reducibility}. This is exactly the setting of Proposition \ref{prop:mechanism}: condition (i) is the bi-lattice \eqref{eq:bilattice-para-qracah-even}--\eqref{eq:bilattice-para-qracah-odd}, condition (ii) follows from the $q$-difference equation \eqref{eq:diff-qpararacah} together with \eqref{qrescale}, and condition (iii) holds because $A(\theta)$ vanishes at the last two spectral points and $\overline{A}(\theta)$ at the first two, for either parity of $N$. 

As in Sections \ref{decomp: Spectral representation} and \ref{sec:pr-decomposition}, the module is therefore reducible, and a permutation matrix displays the two blocks.
Upon comparing the coefficients of the $q$-difference equation \eqref{eq:diff-qpararacah}, evaluated at the spectral points, to those of the $q$-Racah difference equation \eqref{qracahdiff}, we identify the parameters of both submodules. For $N=2j+1$,
\begin{equation}
\alpha_e=q^{-j-1},
\qquad
\beta_e=q^{-j-1},
\qquad
\delta_e=ac^{-1},
\qquad
\gamma_e=acq^{-1},
\label{qracahcoeffe}
\end{equation}
\begin{equation}
\alpha_o=q^{-j-1},
\qquad
\beta_o=q^{-j-1},
\qquad
\delta_o=a^{-1}c,
\qquad
\gamma_o=acq^{-1},
\label{qracahcoeffo}
\end{equation}
and for $N=2j$,
\begin{equation}
\widetilde{\alpha}_e=q^{-j-1},
\qquad
\widetilde{\beta}_e=q^{-j},
\qquad
\widetilde{\delta}_e=ac^{-1},
\qquad
\widetilde{\gamma}_e=acq^{-1},
\label{qracahevcoeffe}
\end{equation}
\begin{equation}
\widetilde{\alpha}_o=q^{-j},
\qquad
\widetilde{\beta}_o=q^{-j-1},
\qquad
\widetilde{\delta}_o=a^{-1}c,
\qquad
\widetilde{\gamma}_o=acq^{-1}.
\label{qracahevcoeffo}
\end{equation}
In all cases $q\alpha\beta=q^{-N},$ so that the $q$-Racah eigenvalue is $q^{-n}(1-q^n)(1-q^{n-N})$, matching \eqref{eq:diff-qpararacah}.

\paragraph{Recurrence representation.}
Let 
\begin{align}
&\Xi^e_n=\Xi_n(\mu(x);\alpha_e,\beta_e,\gamma_e,\delta_e,j), \qquad \Xi^o_n=\Xi_n(\mu(x);\alpha_o,\beta_o,\gamma_o,\delta_o,j), \qquad \text{$N=2j+1$,}\\
&\Xi^e_n=\Xi_n(\mu(x);\widetilde{\alpha}_e,\widetilde{\beta}_e,\widetilde{\gamma}_e,\widetilde{\delta}_e,j), \qquad \Xi^o_n=\Xi_n(\mu(x);\widetilde{\alpha}_o,\widetilde{\beta}_o,\widetilde{\gamma}_o,\widetilde{\delta}_o,j-1), \qquad \text{$N=2j$,}
\end{align}
where $\Xi_n(\mu(x);\alpha,\beta,\gamma,\delta\mid q)$ is the standard
$q$-Racah polynomial \eqref{eq:qracah-defi} with $
\mu(x)=q^{-x}+\gamma\delta q^{x+1}$. Both families are considered on the grid $\zeta_s=\mu(s),\, s=0,1,\ldots,j$.  The Sears transformation formula (see relation (III.15) in \cite{GasperRahman2004}) is given by
\begin{equation}
{}_4\phi_3\left(
\begin{matrix}
q^{-n},\, a,\, b,\, c\\
d,\, e,\, f
\end{matrix}
; q, q
\right)
=
\frac{(a^{-1}e,a^{-1}f;q)_n}{(e,f;q)_n}\,a^n
{}_4\phi_3\left(
\begin{matrix}
q^{-n},\, a,\, b^{-1}d,\, c^{-1}d\\
d,\, ae^{-1}q^{1-n},\, af^{-1}q^{1-n}
\end{matrix}
; q, q
\right).
\end{equation}
Three consecutive applications of this transformation yield, for
$N$ odd,
\begin{align}
    &\Xi^e_n(\zeta_s)=(1-\sigma)\pi_n(x_{2s})+\sigma \pi_{N-n}(x_{2s}),
    \label{subevenqracah}\\
    &\Xi^o_n(\zeta_s)=\frac{\sigma a^n(a^{-1}cq^{j};q^{-1})_n}
    {c^n(ac^{-1}q^j;q^{-1})_n}
    \bigl(\pi_n(x_{2s+1})-\pi_{N-n}(x_{2s+1})\bigr),
    \label{suboddqracah}
\end{align}
and, for $N$ even,
\begin{align}
&\widetilde{\Xi}^e_n(\zeta_s)=(1-\sigma)\widetilde{\pi}_n(x_{2s})
    +\alpha \widetilde{\pi}_{N-n}(x_{2s}),\label{subevenqracahN}\\
&\widetilde{\Xi}^o_n(\zeta_s)=
    \frac{\sigma a^n(a^{-1}cq^{j-1};q^{-1})_n[j]_q}   {c^n(ac^{-1}q^j;q^{-1})_n[j-n]_q}
    \bigl(\widetilde{\pi}_n(x_{2s+1})-\widetilde{\pi}_{N-n}(x_{2s+1})\bigr)\label{suboddqracahN},
\end{align}
where $[a]_q=(q^a;q)_1/(q;q)_1$ and $\pi_n$ and $\widetilde{\pi}_n$ are the non-normalized
$q$-para-Racah polynomial for $N$ odd and even respectively. The reduction in the recurrence representation follows as in Subsection~\ref{Recrep}: \eqref{subevenqracah}--\eqref{suboddqracah} (and, for $N$ even, \eqref{subevenqracahN}--\eqref{suboddqracahN}) play the role of \eqref{evenhahn}--\eqref{oddhahn}.
Inverting these relations expresses the
$q$-para-Racah polynomials through the polynomials of the two submodules, as
before. For $N$ odd, we have
\begin{align}
&\pi_n(x_{2s})=
\begin{cases}
\Xi_n^e(\zeta_s), & \text{if } n\le j,\\[0.8em]
\Xi_{N-n}^e(\zeta_s), & \text{if } n>j.
\end{cases}\\
&\pi_n(x_{2s+1})=
\begin{cases}
\frac{c^n(ac^{-1}q^j;q^{-1})_n}{a^n(a^{-1}cq^j;q^{-1})_n}\Xi_n^o(\zeta_s), & \text{if } n\le j,\\[0.8em]
\frac{(\alpha-1)c^n(ac^{-1}q^j;q^{-1})_{N-n}}{\alpha a^n(a^{-1}cq^j;q^{-1})_{N-n}}\Xi_{N-n}^o(\zeta_s), & \text{if } n>j.
\end{cases}
\end{align}
When $N$ is even, we have
\begin{align}
&\widetilde{\pi}_n(x_{2s})=
\begin{cases}
\widetilde{\Xi}_n^e(\zeta_s), & \text{if } n\le j,\\[0.8em]
\widetilde{\Xi}_{N-n}^e(\zeta_s), & \text{if } n>j.
\end{cases}\\
&\widetilde{\pi}_n(x_{2s+1})=
\begin{cases}
\frac{c^n(ac^{-1}q^j;q^{-1})_n[j-n]_q}{a^n(a^{-1}cq^{j-1};q^{-1})_n[j]_q}\widetilde{\Xi}_n^o(\zeta_s), & \text{if } n\le j,\\[0.8em]
\frac{(\alpha-1)c^n(ac^{-1}q^j;q^{-1})_{N-n}[j-n]_q}{\alpha a^n(a^{-1}cq^{j-1};q^{-1})_{N-n}[j]_q}\widetilde{\Xi}_{N-n}^o(\zeta_s), & \text{if } n>j.
\end{cases}
\end{align}
\subsubsection{Irreducibility of the submodules}
\label{irredqPR}
As in the case of the para-Racah polynomials, we are not guaranteed the irreducibility. The $q$-Racah parameters \eqref{qracahcoeffe}, \eqref{qracahcoeffo}, \eqref{qracahevcoeffe}, and \eqref{qracahevcoeffo} can lead to cancelation of coefficients \eqref{qracahBD}. Evaluating $A(\theta)$ and $\overline{A}(\theta)$ on the spectrum, a zero occurs in addition to those at the two ends when
\begin{equation}
\begin{aligned}
s \in \biggl\{&
\log_q(a^{-2}),\,
\log_q\!\left((ac)^{-1}\right),\,
j-1+p+\log_q\!\left(\frac{c}{a}\right),\,
-j+\log_q(a^{-2}),
\\
&
\log_q\!\left(\frac{c}{a}\right),\,
1-p-j+\log_q\!\left((ac)^{-1}\right),\,
j+\log_q\!\left(\frac{a}{c}\right),
\\
&
\log_q(c^{-2}),\,
\log_q\!\left(\frac{a}{c}\right),\,
-j+\log_q\!\left((ac)^{-1}\right),\,
1-p-j+\log_q(c^{-2})
\biggr\}.
\end{aligned}
\end{equation}
However, the conditions required for the positivity of the recurrence coefficients $u_n>0$ \eqref{qconditions-odd} rule out all of the listed values, assuring that the expressions for $A(\theta)$ and $\overline{A}(\theta)$ only give 0 for the already identified spectral points. 

\begin{theorem}
\label{thm:qPR}
Let $N=2j+p$ with $p=0,1$, and let $a,c,\sigma,q$ satisfy \eqref{qconditions-odd}. The $(N+1)$-dimensional module of the $q$-Racah algebra carried by the $q$-para-Racah polynomials $V_{N+1}^{\mathrm{qPR}}$ is the direct sum of two irreducible $q$-Racah modules,
\begin{equation}
V_{N+1}^{\mathrm{qPR}} \cong \Xi(\alpha_e,\beta_e,\gamma_e,\delta_e,j)\oplus\Xi(\alpha_o,\beta_o,\gamma_o,\delta_o,j),
\qquad p=1,
\end{equation}
\begin{equation}
V_{N+1}^{\mathrm{qPR}} \cong \Xi(\widetilde{\alpha}_e,\widetilde{\beta}_e,\widetilde{\gamma}_e,\widetilde{\delta}_e,j)\oplus\Xi(\widetilde{\alpha}_o,\widetilde{\beta}_o,\widetilde{\gamma}_o,\widetilde{\delta}_o,j-1),
\qquad p=0,
\end{equation}
with the parameters given by \eqref{qracahcoeffe}, \eqref{qracahcoeffo}, \eqref{qracahevcoeffe} and \eqref{qracahevcoeffo}.
\end{theorem}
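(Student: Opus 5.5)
The proof follows the pattern of Theorems \ref{thm:PK} and \ref{thm:PR}, in three steps: decomposition, identification of the summands, and irreducibility.

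\emph{Decomposition.} We apply Proposition \ref{prop:mechanism} to $X$ (multiplication by $x$) and $Y$ (the operator on the right-hand side of \eqref{eq:diff-qpararacah}).
\begin{itemize}
\item Condition (i) is the $q$-quadratic bi-lattice \eqref{eq:bilattice-para-qracah-even}--\eqref{eq:bilattice-para-qracah-odd}. Its two halves do not collide under \eqref{qconditions-odd}.
\item Condition (ii) holds because $T_\pm$ sends $ae^{i\theta}\mapsto aq^{\pm1}e^{i\theta}$. With $ae^{i\theta}=q^{s}$ on the even points, $T_\pm$ moves $x_{2s}$ to $x_{2(s\pm1)}$, and likewise for $ce^{i\theta}=q^{s}$ on the odd points.
\item Condition (iii) is read off \eqref{qA}. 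The factors $(1-ae^{i\theta})$ and $(1-ce^{i\theta})$ kill the backward coefficient at $s=0$ on each sub-lattice. The factors $(1-a^{-1}q^{-j}e^{i\theta})$ and $(1-c^{-1}q^{-j+1-p}e^{i\theta})$, with $e^{i\theta}$ replaced by $e^{-i\theta}$ in $\overline A$, kill the forward coefficient at the last point, $s=j$ or $s=j-1+p$.
\end{itemize}
Reordering by parity then makes $Y$ block diagonal. Since $X$ is diagonal, $V_{N+1}^{\mathrm{qPR}}$ is the direct sum of the two sub-lattice modules.

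\emph{Identification of the summands.} Using \eqref{qrescale}, each sub-lattice is the affinely rescaled $q$-Racah grid $\mu(s)$. On each sub-lattice we write $\Pi^e_n(s)=\Pi_n(x_{2s})$ and $\Pi^o_n(s)=\Pi_n(x_{2s+1})$. We then check that the restricted difference equation coincides coefficient by coefficient with the $q$-Racah equation \eqref{qracahdiff}, with coefficients \eqref{qracahBD}, for the parameters \eqref{qracahcoeffe}--\eqref{qracahevcoeffo}.

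Concretely, we factor $A(\theta)$ at $ae^{-i\theta}=q^{s}$ and match it with $B(s)$. The factors should become $(1-q^{s-j})$, $(1-q^{s}\cdot acq^{-1}\cdot q)$-type terms and $(1-q^{s}\,ac^{-1}\cdot\cdots)$. The denominator $(1-e^{2i\theta})(1-qe^{2i\theta})$ should become $(1-\gamma\delta q^{2s+1})(1-\gamma\delta q^{2s+2})$ with $\gamma\delta=a^2q^{-1}$. The parallel computation with $a\leftrightarrow c$ gives the odd parameters. This matching, together with $q\alpha\beta=q^{-N}$ for the eigenvalues, identifies each block as the $q$-Racah module $\Xi(\cdot)$ of dimension $j+1$ or $j$.

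The recurrence-side relations \eqref{subevenqracah}--\eqref{suboddqracahN} are not needed for the isomorphism. They only give the explicit change of basis, analogous to \eqref{transitionhahn}, which intertwines $X$ and $Y$.

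\emph{Irreducibility.} We use Proposition \ref{prop:irredgen}. A summand splits only if a shift coefficient vanishes at an interior point of its sub-lattice. The zeros of $A$ and $\overline A$ on the spectrum, other than the endpoint ones, occur only at the eleven values of $s$ listed in Subsection \ref{irredqPR}. For each value we show that \eqref{qconditions-odd} prevents it from being an integer in the interior range:
\begin{itemize}
\item $q<a/c<q^{-1}$ gives $\log_q(a/c)\in(-1,1)$, and $c\neq a$ gives $\log_q(a/c)\neq0$. So every value of the form $m\pm\log_q(a/c)$ with $m\in\mathbb{Z}$ is a non-integer.
\item $ac<1$ or $ac>q^{1-N}$ gives $\log_q((ac)^{-1})<0$ or $\log_q((ac)^{-1})>N-1$. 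This puts those values outside the interior range.
\item The values $\log_q(a^{-2})$ and $\log_q(c^{-2})$ are the delicate ones, since \eqref{qconditions-odd} does not restrict $a$ and $c$ separately.
\end{itemize}

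I expect this last item to be the main obstacle. My first attempt would be to combine $a^{2}=(ac)(a/c)$ to bound $\log_q(a^{-2})$ by the two previous ranges. If that fails, I would check whether these zeros are in fact cancelled, by a matching factor in the denominator or because they correspond to collisions already excluded by the no-collision assumption. Failing both, I would add them as exceptional values, as was done in \eqref{PRexcept}.

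Once every listed value is excluded, Proposition \ref{prop:irredgen} gives irreducibility of both summands, which completes the theorem.
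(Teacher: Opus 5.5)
Your three steps follow the paper's own route: Proposition~\ref{prop:mechanism}, then coefficient matching through \eqref{qrescale}, then Proposition~\ref{prop:irredgen} applied to the eleven listed values. You are right that the Sears relations only supply the explicit change of basis.

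There is one slip in the parametrization. Since $x_{2s}=\tfrac12(a^{-1}q^{-s}+aq^{s})$, the even points are $e^{i\theta}=aq^{s}$ (odd points: $e^{i\theta}=cq^{s}$), not $ae^{i\theta}=q^{s}$, which gives $\tfrac12(a^{-1}q^{s}+aq^{-s})$. With the correct choice, $A$ is the forward coefficient and coincides with $B(s)$ of \eqref{qracahBD} factor by factor, and $\overline{A}$ coincides with $D(s)$, constants included.

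The genuine gap is the step you flag yourself. The proposal does not show that $\log_q(a^{-2})$, $-j+\log_q(a^{-2})$, $\log_q(c^{-2})$ and $1-p-j+\log_q(c^{-2})$ give no interior zero, so irreducibility is not established. The paper only asserts it. Your last fallback, adding exceptional values, would prove a weaker statement than the theorem.

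\textbf{Bounding step.} Your first attempt does most of the work. The identities $a^{2}=(ac)(a/c)$ and $c^{2}=(ac)/(a/c)$, together with \eqref{qconditions-odd}, give $\log_q(a^{-2}),\log_q(c^{-2})\in(-\infty,1)\cup(N-2,\infty)$. This leaves only two kinds of case:
\begin{enumerate}
\item[(a)] integers landing on an endpoint of a sub-lattice, namely $a^{2}=1$ or $q^{-2j}$, and $c^{2}=1$, or $q^{-2j}$ when $p=1$;
\item[(b)] for $p=0$ and $j\ge 2$, the value $a^{2}=q^{1-2j}$, which gives a backward zero at $s=j-1$ on the even sub-lattice.
\end{enumerate}

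\textbf{Closing both cases.} Your second attempt settles both once made precise. The condition $a^{2}=q^{-m}$ is exactly the condition $x_{2s_1}=x_{2s_2}$ for $s_1+s_2=m$. So a zero of this type strictly inside a sub-lattice is always a coincidence of two points of the \emph{same} sub-lattice; in (b) it is $x_{2j-2}=x_{2j}$. At an endpoint, it means that point is $x=\pm1$, i.e.\ $\theta\in\{0,\pi\}$. There the vanishing numerator factor cancels against $1-e^{\pm2i\theta}$, and $T_\pm$ land on the same neighbour. The surviving coefficient $A+\overline{A}$ is nonzero because $c\neq a$ and $q<a/c<q^{-1}$, and the same argument works for $c$.

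\textbf{What actually excludes case (b).} It is excluded neither by \eqref{qconditions-odd} as written (take $a=q^{1/2-j}$ and $a<c<a/q$), nor by the paper's remark on collisions \emph{between} sub-lattices. It is excluded by genuine positivity: $u_n>0$ makes the Jacobi spectrum simple. For $N$ even, $u_j$ has the sign of $(c-a)(aq-c)$, which forces $1<a/c<q^{-1}$. Reading the hypothesis as $u_n>0$ for all $n$, every listed value is ruled out and no exceptional set is needed.
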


\subsection{The para-Bannai--Ito polynomials}
The para-Bannai-Ito polynomials $\boldsymbol{P}_n^{(0,1)}(x;a,b,\sigma,N)$ \cite{paraBI} arise through a truncation of the general untruncated Bannai-Ito polynomials (BI) in the odd case, and of the general untruncated complementary Bannai-Ito polynomials (CBI) in the even case. We review the procedure and display the properties of the given polynomials. The BI and CBI, which depend on four parameters $\rho_1$, $\rho_2$,$r_1$,$r_2$ are recalled in \ref{app:BI}.  
\label{PBIsec}
\paragraph{Truncation condition.}
The BI polynomials $B_n(x;\rho_1,\rho_2,r_1,r_2)$  have recurrence coefficients $A_n,C_n$ given by \eqref{BIAn}--\eqref{BICn}. The CBI polynomials   $W_n(x;\rho_1,\rho_1,r_1,r_2)$ have recurrence relation \eqref{CBIrec} with coefficients \eqref{CBItaueven}--\eqref{CBItauodd} . Both recurrence relation terminate after a finite number of steps when
\begin{equation}
    A_NC_{N+1}=0\,.
\end{equation}
Besides the standard truncation, which leads to the usual truncated BI/CBI polynomials, the family also truncates to a finite set of polynomials when
\begin{equation}
  N+2g+1=0 \quad (N \text{ odd}),
  \qquad
  N+2g+2=0 \quad (N \text{ even}),
  \qquad
  g=\rho_1+\rho_2-r_1-r_2,
  \label{PBItruncation}
\end{equation}
applied to the Bannai--Ito polynomials when $N$ is odd and to the
complementary Bannai--Ito polynomials when $N$ is even.  Writing
$N=2j+p$, both conditions read
\begin{equation}
  j+g+1=0,
  \label{PBItrunc2}
\end{equation}
and the parametrization realizing them depends on the parity of $j$, not only
on that of $N$ \cite{paraBI}.  We treat $j$ even below; the case $j$ odd runs
along the same lines with the parametrizations given in \cite{paraBI}. 

\paragraph{Parametrization.}
The condition \eqref{PBItrunc2} can be implemented  by the following parametrization
\begin{equation}
\rho_1-r_1=-\frac{j+1}{2}+e_1t,
\qquad
\rho_2-r_2=-\frac{j+1}{2}+e_2t,
\quad t\to0\,,\label{PBIparam}
\end{equation}
where $e_1,e_2$ are deformation parameters. Using a change of parameters, we obtain 
\begin{equation}
\rho_1=\frac{b-j-1+a}{4},\qquad
\rho_2=\frac{b-j-1-a}{4},\qquad
\frac{e_1}{e_1+e_2}=\sigma,\qquad
\frac{e_2}{e_1+e_2}=1-\sigma.
\end{equation}
\begin{equation}
\sigma=\frac{e_1}{e_1+e_2},
\qquad
1-\sigma=\frac{e_2}{e_1+e_2}.
\label{sigmadef}
\end{equation}
\paragraph{Recurrence relation.}
Substituting \eqref{PBIparam} in the recurrence coefficients, we get the following recurrence relation
\begin{equation}
xP_n^{(p)}(x)
=
P_{n+1}^{(p)}(x)
+
\left(
\frac{b-j-1+a}{4}-A_n^p-C_n^p
\right)P_n^{(p)}(x)
+
A_{n-1}^p C_n^p P_{n-1}^{(p)}(x),
\end{equation}
with
\begin{equation}
A_n^p=
\begin{cases}
\dfrac{1}{4}\bigl(n-j+(2p-1)a\bigr),
& n\text{ even},\ n\neq j,\\[3mm]
\dfrac{
p(n-2j-1)(n-j+b)
-(1-p)(n+1)(n-j-b)
}{4(n-j)},
& n\text{ odd},\\[3mm]
\dfrac{a}{2}\bigl[p\alpha+(1-p)(1-\alpha)\bigr],
& n\text{ even},\ n=j,
\end{cases}
\label{AnPBI}
\end{equation}

\begin{equation}
C_n^p=
\begin{cases}
\dfrac{1-p}{4}(n-j+a)
-\dfrac{p\,n(n-j-1-b)}{4(n-j-1)},
& n\text{ even},\ n\neq j,\\[3mm]
\dfrac{(1-p)(n-2j-1)(n-j+b)}{4(n-j)}
-\dfrac{p}{4}(n-j-1-a),
& n\text{ odd},\ n\neq j+1,\\[3mm]
-\dfrac{(1-p)j(1+b)}{4}
+\dfrac{p(1-\alpha)a}{2},
& n\text{ odd},\ n=j+1,
\end{cases}
\label{CnPBI}
\end{equation}
with $\boldsymbol{P}_{-1}^{(p)}=0$ and $\boldsymbol{P}_{0}^{(p)}=1$. The positivity of the recurrence coefficients $A_{n-1}^p C_n^p>0$ for all $n=1,\ldots,N$ restricts $a$,$b$ and $\sigma$ to 
\begin{equation}
\begin{cases}
\begin{aligned}
&a\leq -j-1,\quad |b|\leq 1,\quad 0\leq\alpha\leq 1,\\
&\qquad\text{or}\\
&b\geq j,\quad |a+1|\leq 1,\quad 0\leq\alpha\leq 1,
\end{aligned}
& p=0,\\[4mm]
|a|\geq j+1,\quad |b|\leq 1,\quad 0\leq\alpha\leq 1,
& p=1.
\end{cases}
\end{equation}
\paragraph{Discrete orthogonality.} The para-Bannai-Ito polynomials satisfy the discrete orthogonality
\begin{equation}
\sum_{s=0}^{2j+p} w_s P_n^{(p)}(x_s)P_m^{(p)}(x_s)
= h_n\delta_{nm}.
\end{equation}
where $x_s$ is the linear bi-lattice:
\begin{subequations}
 \begin{align}   
&x_{2s}
=
-(-1)^s\left(\frac{2s-j-b+a}{4}\right)-\frac{1}{4},\qquad
s=0,\ldots,j,
\\[8pt]
&x_{2s+1}
=-(-1)^s\left(\frac{2s-j-b-a}{4}\right)-\frac{1}{4},
\qquad
s=0,\ldots,j-1+p.
\end{align}
\label{PBIgrid}
\end{subequations}
Explicit expressions for the weights are given in \cite{paraBI}.To ensure that there are no collisions between the two sublattices, we need to impose one of the following conditions:
\begin{equation}
\left\{
\begin{aligned}
&a\leq -j-1,\\
&|b|< 1,\\
\end{aligned}
\right.
\qquad \text{or} \qquad
\left\{
\begin{aligned}
&b\geq j,\\
&-2<a<0.
\end{aligned}
\right.
\end{equation}
\paragraph{Difference equation}
The polynomials $\boldsymbol{P}_n^{(1)}(x;a,b,\sigma,N)$ satisfy
$L\boldsymbol{P}_n^{(1)}=\lambda_n\boldsymbol{P}_n^{(1)}$ with
$\lambda_{2n}=n$, $\lambda_{2n+1}=j-n$, where
\begin{equation}
L=F(x)(I-R)+G(x)(T^{+}R-I),
\label{PBIdunkl}
\end{equation}
\begin{equation}
G(x)=\frac{(4x+1+a-b-j)(4x+1-a-b-j)}
{16(2x+1)},
\qquad
F(x)=\frac{(4x+1+j+a-b)(4x+1+j-a-b)}{32x},
\end{equation}
and $T^{\pm}f(x)=f(x\pm1)$, $Rf(x)=f(-x)$.

The polynomials $\boldsymbol{P}_n^{(0)}(x;a,b,\sigma,N)$ satisfy
$\mathcal{D}_{\beta}\boldsymbol{P}_n^{(0)}
=\Lambda_n^{(\beta)}\boldsymbol{P}_n^{(0)}$ with
$\Lambda_{2n}^{(\beta)}=n(n-j)$ and
$\Lambda_{2n+1}^{(\beta)}=n(n+1-j)+\beta$, where
\begin{equation}
\label{eq:diff-PBP0}
\mathcal{D}_{\beta}
=\mathcal{D}_0
+\beta\frac{x-\rho_2}{2x}(I-R),
\end{equation}
\begin{equation}
\mathcal{D}_0
=A(x)T^{+}+B(x)T^{-}+C(x)R+D(x)T^{+}R
-\bigl(A(x)+B(x)+C(x)+D(x)\bigr)I,
\end{equation}
and
\begin{equation}
A(x)=\frac{(x+\rho_1+1)(x+\rho_2+1)(2x-2\rho_1-j)(2x-2\rho_2-j)}
{8(x+1)(2x+1)},
\end{equation}
\begin{equation}
B(x)=\frac{(x-\rho_2)(x-\rho_1-1)(2x+2\rho_1+j)(2x+2\rho_2+j)}{8x(2x-1)},
\end{equation}
\begin{equation}
C(x)=\frac{(x-\rho_2)(4x^2+\omega)}{8x}
-\frac{(x-\rho_2)(x+\rho_1+1)(2x-2\rho_1-j)(2x-2\rho_2-j)}{8x(2x+1)}-B(x),
\end{equation}
\begin{equation}
D(x)=\frac{\rho_2(x+\rho_1+1)(2x-2\rho_1-j)(2x-2\rho_2-j)}{8x(x+1)(2x+1)},
\end{equation}
\begin{equation}
\omega=(2\rho_1+j)(2\rho_2+j)-4(1+\rho_1)(\rho_1+\rho_2+j),
\end{equation}
where $\rho_1$ and $\rho_2$ are given by \eqref{PBIparam}.
\paragraph{Hypergeometric representation.}
The PBI polynomials follow the pattern of the para family, having split representations depending on the degree $n$. For $N$ even, we have
\begin{equation}
P_{2n}^{(0)}(x)
=\kappa_n^{(1)}
\,{}_4F_3\left(
\begin{matrix}
-n,\;n-j,\;\frac{b-j-1-a}{4}+x,\;
\frac{b-j-1-a}{4}-x\\
\frac{1+b-j}{2},\;-\frac{j+a}{2},\;-\frac{j}{2}
\end{matrix}
;1
\right),
\end{equation}
if $n\leq \frac{j}{2}$, and
\begin{equation}
\begin{aligned}
P_{2n}^{(0)}(x)
={}&\kappa_n^{(1)}
\,{}_4F_3\left(
\begin{matrix}
-n,\;n-j,\;\frac{b-j-1-a}{4}+x,\;
\frac{b-j-1-a}{4}-x\\
\frac{1+b-j}{2},\;-\frac{j+a}{2},\;-\frac{j}{2}
\end{matrix}
;1
\right)\\[6pt]
&+\kappa_n^{(1)}
\frac{
(n-j)_{j-n}(-n)_{\frac{j}{2}+1}
\left(\frac{b-j-1-a}{4}+x\right)_{\frac{j}{2}+1}
\left(\frac{b-j-1-a}{4}-x\right)_{\frac{j}{2}+1}
(1)_{n-\frac{j}{2}}
}{
(1-\alpha)
\left(-\frac{j}{2}\right)_{\frac{j}{2}}
(1)_{\frac{j}{2}+1}
\left(\frac{1+b-j}{2}\right)_{\frac{j}{2}+1}
\left(-\frac{j+a}{2}\right)_{\frac{j}{2}+1}
}\\[6pt]
&\qquad\times{}_4F_3\left(
\begin{matrix}
\frac{j+2}{2}-n,\;n-\frac{j-2}{2},\;
\frac{b+j+3-a}{4}+x,\;
\frac{b+j+3-a}{4}-x\\
\frac{3+b}{2},\;\frac{2-a}{2},\;\frac{4+j}{2}
\end{matrix}
;1
\right).
\end{aligned}
\end{equation}
if $n>\frac{j}{2}$. For odd degrees, we have
\begin{equation}
\begin{aligned}
P_{2n+1}^{(0)}(x)
={}&\kappa_n^{(2)}
\left(x-\frac{b-j-1-a}{4}\right)\\
&\times{}_4F_3\left(
\begin{matrix}
-n,\;n+1-j,\;
\frac{b-j+3-a}{4}+x,\;
\frac{b-j+3-a}{4}-x\\
\frac{3+b-j}{2},\;\frac{2-j-a}{2},\;
\frac{2-j}{2}
\end{matrix}
;1
\right),
\end{aligned}
\end{equation}
if $n<\frac{j}{2}$, and
\begin{equation}
\begin{aligned}
P_{2n+1}^{(0)}(x)
={}&\kappa_n^{(2)}
\left(x-\frac{b-j-1-a}{4}\right)
\Bigg[
{}_4F_3\left(
\begin{matrix}
-n,\;n+1-j,\;
\frac{b-j+3-a}{4}+x,\;
\frac{b-j+3-a}{4}-x\\
\frac{3+b-j}{2},\;\frac{2-j-a}{2},\;
\frac{2-j}{2}
\end{matrix}
;1
\right)\\[6pt]
&+\frac{
(n+1-j)_{j-n-1}(-n)_{\frac{j}{2}}
\left(\frac{b-j+3-a}{4}+x\right)_{\frac{j}{2}}
\left(\frac{b-j+3-a}{4}-x\right)_{\frac{j}{2}}
(1)_{n-\frac{j}{2}}
}{
(1-\alpha)
\left(\frac{2-j}{2}\right)_{\frac{j}{2}-1}
(1)_{\frac{j}{2}}
\left(\frac{3+b-j}{2}\right)_{\frac{j}{2}}
\left(\frac{2-j-a}{2}\right)_{\frac{j}{2}}
}\\[6pt]
&\qquad\times{}_4F_3\left(
\begin{matrix}
\frac{j}{2}-n,\;n-\frac{j-2}{2},\;
\frac{b+j+3-a}{4}+x,\;
\frac{b+j+3-a}{4}-x\\
\frac{3+b}{2},\;\frac{2-a}{2},\;\frac{4+j}{2}
\end{matrix}
;1
\right)
\Bigg].
\end{aligned}
\end{equation}
if $n\geq\frac{j}{2}$. $\kappa^{(1,2)}$ are normalization factors to ensure that the polynomials are monic. Their explicit expressions are given in \cite{paraBI}.\\
For $N$ odd, we can obtain the hypergeometric representation using the even expression, by the fact that the odd PBI polynomials $\boldsymbol{P}_n^{(1)}(x)$ are obtained by performing a Geronimus transformation on the even PBI polynomials $\boldsymbol{P}_n^{(0)}(x)$. Their explicit expression is thus:
\begin{equation}
P_n^{(1)}(x)=P_n^{(0)}(x)-C_n^{1}P_{n-1}^{(0)}(x),
\end{equation}
where $C_n^1$ is given by \eqref{CnPBI}.
\subsubsection{The para-Bannai--Ito decomposition: $N$ odd case}
The polynomials $\mathbf{P}_n^{(1)}$ satisfy the difference equation \eqref{PBIdunkl}. On the grid \eqref{PBIgrid},
\begin{align}
T^{+}x_{2k+p}
&=
\begin{cases}
x_{2k+p-4}, & k \text{ even},\\
x_{2k+p+4}, & k \text{ odd},
\end{cases}
&
T^{-}x_{2k+p}
&=
\begin{cases}
x_{2k+p+4}, & k \text{ even},\\
x_{2k+p-4}, & k \text{ odd},
\end{cases}
\\[8pt]
Rx_{2k+p}
&=
\begin{cases}
x_{2k+p+2}, & k \text{ even},\\
x_{2k+p-2}, & k \text{ odd},
\end{cases}
&
T^{+}Rx_{2k+p}
&=
\begin{cases}
x_{2k+p-2}, & k \text{ even},\\
x_{2k+p+2}, & k \text{ odd}.
\end{cases}
\end{align}
All four operators shift within a sub-lattice, which is condition (ii);
$G(x)$ vanishes at the first two spectral points and $F(x)$ at the last two,
which is condition (iii).

The operator \eqref{PBIdunkl} is the Bannai--Ito Dunkl shift operator
\eqref{BIdunkl} of Appendix \ref{app:BI} at the parameter values
\begin{equation}
\rho_1=\frac{b-j-1+a}{4},
\quad
\rho_2=\frac{b-j-1-a}{4},
\quad
r_1=\frac{b+j+1+a}{4},
\quad
r_2=\frac{b+j+1-a}{4},
\label{PBIparam}
\end{equation}
as substitution in the expressions for for $F(x)$ and $G(x)$.  Restricting it to each
sub-lattice identifies the parameters of the two submodules:
\begin{equation}
    \rho^e_1=\frac{b-j+a-1}{4}, \qquad
    \rho^e_2=\frac{b-j-a-1}{4}, \qquad
    r^e_1=\frac{j+b-a+1}{4}, \qquad
    r^e_2=\frac{j+b+a+1}{4},
    \label{oBIe}
\end{equation}
\begin{equation}
    \rho^o_1=\frac{b-j+a-1}{4}, \qquad
    \rho^o_2=\frac{b-j-a-1}{4}, \qquad
    r^o_1=\frac{j+b+a+1}{4}, \qquad
    r^o_2=\frac{j+b-a+1}{4}.
    \label{oBIo}
\end{equation}
These are the parameters \eqref{PBIparam} of the parent family itself, the
two submodules differing only by the exchange $r_1\leftrightarrow r_2$.  The
structure constants $\omega_1$, $\omega_2$, $\omega_3$ of the Bannai--Ito
algebra are symmetric in $r_1$ and $r_2$, so that the two summands are
modules of one and the same Bannai--Ito algebra.  This is the counterpart of
what was observed in Subsection \ref{mechanism} for the Hahn case, where the
two summands carry exchanged parameters and a common eigenvalue parameter.
The para-Bannai--Ito polynomials being a $q\to-1$ limit of the
$q$-para-Racah polynomials \cite{paraBI}, obtained through
\begin{equation}
q=-e^{\epsilon},
\quad
a\to i\exp\left(\epsilon\left[\tfrac{a+b-j}{2}\right]\right),
\quad
c \to i\exp\left(\epsilon\left[\tfrac{-a+b-j}{2}\right]\right),
\quad
x\rightarrow 2i\epsilon\left(x+\tfrac{1}{4}\right)
\label{limitpbi}
\end{equation}
and $\epsilon\to0$, the same limit applied to \eqref{subevenqracah} and
\eqref{suboddqracah} yields
\begin{equation}
b_n^e(x_{2s})=(1-\sigma)\boldsymbol{P}_n^{(1)}(x_{2s})
+\alpha\boldsymbol{P}^{(1)}_{N-n}(x_{2s}),
\end{equation}
\begin{equation}
b_n^o(x_{2s+1})=\sigma\frac{\left(\frac{a}{2}\right)_{\left\lfloor n/2\right\rfloor}}
{\left(-\frac{a}{2}\right)_{\left\lfloor n/2\right\rfloor}}
\left[\boldsymbol{P}_n^{(1)}(x_{2s+1})-\boldsymbol{P}^{(1)}_{N-n}(x_{2s+1})
\right],
\end{equation}
with $b_n$ the non-monic Bannai--Ito polynomial and the parameters
\eqref{oBIe}, \eqref{oBIo}.  Inverting,
\begin{equation}
\boldsymbol{P}_n^{(1)}(x_{2s})
=\begin{cases}
b_n^e(x_{2s}), & n\leq j,\\[2mm]
b_{N-n}^e(x_{2s}), & n>j,
\end{cases}
\qquad
\boldsymbol{P}_n^{(1)}(x_{2s+1})
=\begin{cases}
\dfrac{
\left(-\frac{a}{2}\right)_{\left\lfloor n/2\right\rfloor}
}{
\left(\frac{a}{2}\right)_{\left\lfloor n/2\right\rfloor}
}
b_n^o(x_{2s+1}),
& n\leq j,\\[5mm]
\dfrac{\sigma-1}{\sigma}
\dfrac{
\left(-\frac{a}{2}\right)_{\left\lfloor (N-n)/2\right\rfloor}
}{
\left(\frac{a}{2}\right)_{\left\lfloor (N-n)/2\right\rfloor}
}
b_{N-n}^o(x_{2s+1}),
& n>j.
\end{cases}
\end{equation}

\subsubsection{The para-Bannai-Ito decomposition: $N$ even case}

The polynomials $\boldsymbol{P}_n^{(0)}(x;a,b,\sigma,N)$ satisfy the difference equation \eqref{eq:diff-PBP0} 
where $\rho_1$ and $\rho_2$ are given by \eqref{PBIparam}. The grid is \eqref{PBIgrid} with $N+1=2j+1$ points.  Writing
$C'(x)=C(x)-\beta(x-\rho_2)/(2x)$, one finds that $A$ vanishes at
$x_0,x_1,x_{2j-2},x_{2j-1}$, that $B$ vanishes at $x_2,x_3,x_{2j-2},x_{2j}$,
that $C'$ vanishes at $x_{2j}$ and that $D$ vanishes at
$x_0,x_1,x_{2j-1}$.  Applying a permutation matrix to
\begin{equation}
\mathcal{D}_{\beta}=
\setcounter{MaxMatrixCols}{11}W^{1/2}
\begin{pmatrix}
F_0    & 0      & C'_0   & 0      & B_0    &        &        &        &        &        &        \\
0      & F_1    & 0      & C'_1   & 0      & B_1    &        &        &        &        &        \\
C'_2   & 0      & F_2    & 0      & D_2    & 0      & A_2    &        &        &        &        \\
0      & C'_3   & 0      & F_3    & 0      & D_3    & 0      & A_3    &        &        &        \\
A_4    & 0      & D_4    & 0      & F_4    & 0      & C'_4   & 0      & B_4    &        &        \\
0      & A_5    & 0      & D_5    & 0      & F_5    & 0      & C'_5   & 0      & B_5    &        \\
        &        & \ddots &        & \ddots &        & \ddots &        & \ddots &        & \ddots
\end{pmatrix}W^{-1/2},
\end{equation}
where $F_n=-(A_n+B_n+C'_n+D_n)$, gives
\begin{equation}
\widetilde{\mathcal{D}_\beta}
=
W^{1/2}
\left(
\begin{array}{*{7}{c}:*{5}{c}}
F_0 & C'_0 & B_0 & & & & 
& \multicolumn{5}{c}{} \\

C'_2 & F_2 & D_2 & A_2 & & &
& \multicolumn{5}{c}{} \\

A_4 & D_4 & F_4 & C'_4 & B_4 & &
& \multicolumn{5}{c}{} \\

0 & B_6 & C'_6 & F_6 & D_6 & A_6 &
& \multicolumn{5}{c}{0} \\

& \ddots & \ddots & \ddots & \ddots & \ddots &
& \multicolumn{5}{c}{} \\

& & & B_{2j-2} & C'_{2j-2} & F_{2j-2} & D_{2j-2}
& \multicolumn{5}{c}{} \\

& & & 0 & A_{2j} & D_{2j} & F_{2j}
& \multicolumn{5}{c}{} \\

\hdashline

\multicolumn{7}{c:}{}
& F_1 & C'_1 & B_1 & & \\

\multicolumn{7}{c:}{}
& C'_3 & F_3 & D_3 & A_3 & \\

\multicolumn{7}{c:}{0}
& A_5 & D_5 & F_5 & C'_5 & B_5 \\

\multicolumn{7}{c:}{}
& \ddots & \ddots & \ddots & \ddots & \ddots

\end{array}
\right)
W^{-1/2}.
\end{equation}
which is block diagonal.  The parameters of the two complementary
Bannai--Ito submodules are
\begin{equation}
    \widetilde{\rho}^e_1=\frac{b-j+a-1}{4}, \qquad
    \widetilde{\rho}^e_2=\frac{b-j-a-1}{4}, \qquad
    \widetilde{r}^e_1=\frac{j+b-a+1}{4}, \qquad
    \widetilde{r}^e_2=\frac{j+b+a+1}{4},
    \label{eBIe}
\end{equation}
\begin{equation}
    \widetilde{\rho}^o_1=\frac{b-j+a-1}{4}, \qquad
    \widetilde{\rho}^o_2=\frac{b-j-a-1}{4}, \qquad
    \widetilde{r}^o_1=\frac{j+b+a+1}{4}, \qquad
    \widetilde{r}^o_2=\frac{j+b-a+1}{4},
    \label{eBIo}
\end{equation}
and the limit \eqref{limitpbi} gives
\begin{equation}
w_n^e(x_{2s})=(1-\sigma)\,\boldsymbol{P}^{(0)}_n(x_{2s})
+\sigma\,\boldsymbol{P}^{(0)}_{2j-n}(x_{2s}),
\end{equation}
\begin{equation}
w_n^o(x_{2s+1})=\sigma\,
\frac{\left(\frac{a-j}{2}\right)_{\left\lceil n/2\right\rceil}
}{\left(-\frac{a+j}{2}\right)_{\left\lceil n/2\right\rceil}
}\left[\boldsymbol{P}^{(0)}_n(x_{2s+1})-\boldsymbol{P}^{(0)}_{2j-n}(x_{2s+1})
\right],
\end{equation}
with $w_n$ the non-monic complementary Bannai--Ito polynomial.  Inverting,
\begin{equation}
\boldsymbol{P}^{(0)}_n(x_{2s})=\begin{cases}
w_n^e(x_{2s}), & n\leq j,\\[2mm]
w_{2j-n}^e(x_{2s}), & n>j,
\end{cases}
\end{equation}
\begin{equation}
\boldsymbol{P}^{(0)}_n(x_{2s+1})
=
\begin{cases}
\dfrac{j-2m}{j}
\dfrac{
\left(-\frac{a+j}{2}\right)_m
}{
\left(1+\frac{a-j}{2}\right)_m
}
w_{2m}^o(x_{2s+1}),
& n=2m\leq j,\\[5mm]
-\dfrac{2}{j}
\dfrac{
\left(-\frac{a+j}{2}\right)_{m+1}
}{
\left(1+\frac{a-j}{2}\right)_m
}
w_{2m+1}^o(x_{2s+1}),
& n=2m+1\leq j,\\[5mm]
\dfrac{1-\sigma}{\sigma}
(-1)^{2j-n}
\dfrac{j-2m}{j}
\dfrac{
\left(-\frac{a+j}{2}\right)_m
}{
\left(1+\frac{a-j}{2}\right)_m
}
w_{2m}^o(x_{2s+1}),
& 2j-n=2m,\ n>j,\\[5mm]
-\dfrac{1-\sigma}{\sigma}
(-1)^{2j-n}
\dfrac{2}{j}
\dfrac{
\left(-\frac{a+j}{2}\right)_{m+1}
}{
\left(1+\frac{a-j}{2}\right)_m
}
w_{2m+1}^o(x_{2s+1}),
& 2j-n=2m+1,\ n>j.
\end{cases}
\end{equation}
\subsubsection{Irreducibility of the submodules}
\label{irredPBI}
Checking if the two summands can be reduced further comes back to identifying the roots of the Dunkl-Shift operator coefficients. In the odd case, both coefficients are quadratic, and the roots associated to each one have already been identified to be the first two spectral points in the case of $G(x)$ and the last two spectral points in the case of $F(x)$. Hence, there are no other points in the spectrum where the difference operator coefficients give zero, assuring the irreducibility of the submodules.

In the even case, $A(x)$ and $B(x)$ are both quartic, $D(x)$ is cubic. All of their roots have already been associated to spectral points allowing the rearrangement seen above. $C(x)$ has the root $x_{2j}$. However, it also potentially has the roots
\begin{equation}
x
=
\pm \frac{1}{2}
\sqrt{
\frac{
2j^{2}\rho_{1}+j^{2}
+4j\rho_{1}^{2}
+4j\rho_{1}\rho_{2}
+6j\rho_{1}
+2j\rho_{2}
+4j
+8\rho_{1}^{2}\rho_{2}
+4\rho_{1}^{2}
+8\rho_{1}\rho_{2}
+4\rho_{1}
+4\rho_{2}
}{
2j+2\rho_{2}-1
}
}.
\label{exceptionPBI}
\end{equation}
If a spectral point $x_s, s\neq 2j$ is equal to one of these roots, $C(x)$ cancels elsewhere on the spectrum than has already been established, possibly implying the reducibility of the submodules.

\begin{theorem}
\label{thm:PBI}
Let $N=2j+p$ with $p=0,1$ and $j$ even.  The $(N+1)$-dimensional module
carried by the para-Bannai--Ito polynomials is the direct sum of two modules
of the same algebra: two Bannai--Ito modules of dimension $j+1$ with the
parameters \eqref{oBIe}, \eqref{oBIo} when $p=1$, and two complementary
Bannai--Ito modules of dimensions $j+1$ and $j$ with the parameters
\eqref{eBIe}, \eqref{eBIo} when $p=0$.  By Proposition \ref{prop:irredgen},
each summand is irreducible if and only if there is no integer $s \in \{0,1,\ldots,N\}\setminus\{2j\}$ such that $x_s$ is equal to \eqref{exceptionPBI}.  For $j$ odd the argument is the same, with the
parametrizations of \cite{paraBI}.
\end{theorem}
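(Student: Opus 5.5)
The proof follows the template of Theorems \ref{thm:PK}, \ref{thm:PR} and \ref{thm:qPR}. First the three hypotheses of Proposition \ref{prop:mechanism} are verified, which gives the decomposition. Then the two summands are identified with (complementary) Bannai--Ito modules with the stated parameters. Finally irreducibility is settled through Proposition \ref{prop:irredgen}.

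\textbf{Hypotheses of Proposition \ref{prop:mechanism}.} Hypothesis (i) is the bi-lattice \eqref{PBIgrid}. For $p=1$, hypothesis (ii) follows from the table giving the action of $T^{\pm}$, $R$ and $T^{+}R$ on the points $x_{2k+p}$: every image stays in the sub-lattice of $x_{2k+p}$. The same holds for each term of $L$ in \eqref{PBIdunkl}. For (iii), I would evaluate $G$ and $F$ on the grid. $G$ is a product of two linear factors whose zeros are $x_0$ and $x_1$, and $F$ has zeros at the last two points; this is checked by substituting $s=0$ and $s=j$ into \eqref{PBIgrid}.

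For $p=0$ the same is done for $\mathcal{D}_\beta$ in \eqref{eq:diff-PBP0}, using the modified coefficient $C'$. The operator $\mathcal{D}_\beta$ involves four shift terms, so the matrix of $\mathcal{D}_\beta$ is pentadiagonal in the parity-ordered basis rather than tridiagonal. The listed zeros of $A$, $B$, $C'$ and $D$ at $x_0,x_1,x_2,x_3,x_{2j-2},x_{2j-1},x_{2j}$ are exactly those needed to kill every entry joining the two halves. Once the displayed permuted matrix is checked entrywise, Proposition \ref{prop:mechanism} applies; its proof uses only the absence of cross entries, not tridiagonality.

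\textbf{Identification of the summands.} On each sub-lattice, I would substitute \eqref{PBIparam} into the Bannai--Ito Dunkl operator \eqref{BIdunkl} and compare coefficients. This gives \eqref{oBIe}--\eqref{oBIo}, which differ only by $r_1\leftrightarrow r_2$. Because the structure constants of the Bannai--Ito algebra are symmetric in $r_1,r_2$, both summands are modules of the same algebra; the complementary case \eqref{eBIe}--\eqref{eBIo} is analogous. The explicit intertwining combinations come from the $q\to-1$ limit \eqref{limitpbi} of \eqref{subevenqracah}--\eqref{suboddqracah}. The theorem only asserts the decomposition, though, and the spectral-side comparison already proves it, so these combinations are not needed.

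\textbf{Irreducibility (main obstacle).} $X$ is diagonal with distinct eigenvalues, by the no-collision conditions. Hence an invariant subspace of a summand is spanned by a subset of basis vectors. For $p=1$ the block is tridiagonal. $F$ and $G$ are quadratic and their four zeros are already used at the endpoints, so no interior coefficient vanishes and the block is irreducible. For $p=0$ the blocks are pentadiagonal, and Proposition \ref{prop:irredgen} must be adapted. A subset $S$ is invariant iff it is closed under the nonzero entries, so reducibility requires a cut in the connectivity graph. I would argue that the graph stays connected as long as all interior entries except possibly isolated $C'$ entries are nonzero. $A$ and $B$ are quartic and $D$ is cubic, and their roots are exhausted by the boundary points listed above, so only $C$ can vanish elsewhere. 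The remaining roots of $C$ are obtained by factoring out $(x-x_{2j})$ and solving the residual quadratic in $x^2$, which yields \eqref{exceptionPBI}.

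The delicate point is the converse, that such a zero actually disconnects the block. My first attempt would be to show that when $C'$ vanishes at an interior $x_s$ the corresponding band entry drops, and then check with the parity-ordered matrix that this isolates a proper subset. If this fails in general, I would state the criterion as sufficient for irreducibility, with the exceptional set \eqref{exceptionPBI} as the only possible obstruction. Finally, for $j$ odd the same steps apply with the parametrizations of \cite{paraBI}.
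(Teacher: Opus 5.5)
Your route is the paper's: hypotheses (i)--(iii) of Proposition \ref{prop:mechanism} come from the shift table and the boundary zeros, the summands are identified by matching with the Bannai--Ito Dunkl operator, and irreducibility is read off from where the coefficients vanish. For $p=1$ this is complete. There are two corrections for $p=0$.

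\begin{itemize}
\item In the entrywise check, the zero of $B$ that the odd block needs is at $x_{2j-3}$. This is site $s=j-2$, where $T^-$ would step to $s=j$, off that sub-lattice, and it is exactly where the factor $x-\rho_1-1$ of $B$ vanishes. By contrast $B(x_{2j-2})\neq0$ in general, so the list of zeros you are copying contains a misprint.
\item The $R$-entries of the blocks are $C'$, not $C$. Once $x-\rho_2$ is factored out of $C'$, the remaining numerator is even and of degree two, because the $x^4$ terms cancel. So the extra zeros form a single pair $\pm r$ that depends on $\beta$, and \eqref{exceptionPBI} is its value at $\beta=0$.
\end{itemize}

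You are right to hesitate on the converse, and the paper does not supply it either. It only says that a coincidence is ``possibly'' reducing, and Proposition \ref{prop:irredgen} is a statement about tridiagonal blocks. Your connectivity idea in fact settles the question the other way.

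Index a sub-lattice by $s$ and read off the permuted matrix:
\begin{itemize}
\item the entries joining $s$ and $s+1$ are $C'$-entries when $s$ is even and $D$-entries when $s$ is odd, present in both directions;
\item the $A$- and $B$-entries join $s$ and $s\pm2$, also in both directions;
\item the boundary zeros of $A$, $B$, $D$ only delete entries that would point off the sub-lattice.
\end{itemize}
Take $\rho_2\neq0$, so that $D\not\equiv0$. Then the $A$-, $B$- and $D$-entries alone connect any block with at least three sites. Steps of two chain the even-indexed sites and the odd-indexed sites separately, and the $D$-link between $s=1$ and $s=2$ joins the two chains. Zeros of $C'$ can therefore never split such a block.

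Consequently, for $j\geq4$ both summands are irreducible wherever $\pm r$ lands. The ``only if'' half of the statement cannot be proved as stated, because a coincidence $x_s=\pm r$ leaves the summand irreducible. The roots matter only for $j=2$. There the odd block has just the two sites $x_1$ and $x_3=-x_1$, joined solely by $C'(x_1)$ and $C'(x_3)$, which vanish together. So your fallback of stating the criterion as sufficient is the correct one, and for $j\geq4$ it can be strengthened to irreducibility with no exceptional set at all.
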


\section{The para-Krawtchouk polynomials within the Bannai--Ito families}
\label{BIreal}

The Bannai--Ito polynomials and their complementary family are the
eigenfunctions of Dunkl shift operators, built with a reflection, and arise
as $q\to-1$ limits of the Askey--Wilson polynomials \cite{BI,CBI}.  They have
already appeared in Subsection \ref{PBIsec}, where the para-Bannai--Ito
polynomials were obtained by truncating them.  The relation examined here is
a different one and should not be confused with it.  There, a truncation was
applied to these families and a new family came out.  Here, no truncation is
applied: the para-Krawtchouk polynomials are found to be, as they stand,
members of these families at particular values of the four parameters
$\rho_1,\rho_2,r_1,r_2$, two of which vanish \cite{parakrawtchouk}.  The two
relations also cross over in parity, which is worth noticing at the outset:
the para-Bannai--Ito polynomials come from the Bannai--Ito family when $N$ is
odd and from the complementary one when $N$ is even, whereas the
para-Krawtchouk polynomials sit inside the complementary family when $N$ is
odd and inside the Bannai--Ito family when $N$ is even.

The identification raises a difficulty.  Under a truncation condition, the
$(N+1)$-dimensional module carried by the complementary Bannai--Ito
polynomials is irreducible \cite{CBI}.  The para-Krawtchouk module is one of
these, and would therefore have to be irreducible --- against Theorem
\ref{thm:PK}.

Both statements are correct.  Irreducibility is said of a module over an
algebra, and there are two algebras at play here: the complementary
Bannai--Ito algebra, which contains an involution, and the Hahn algebra of
Section \ref{Hahn algebra}, which does not.  The module is irreducible over
the first and decomposes over the second.  Subsection \ref{involution} makes
this precise; the two subsections that follow first establish the
identification for each parity.

\subsection{Inside the complementary Bannai--Ito family, $N$ odd}

The complementary Bannai--Ito polynomials \cite{CBI} satisfy the recurrence relation \eqref{CBIrec} with the recurrence coefficients given by \eqref{CBItaueven}, \eqref{CBItauodd}. Setting, for $N=2j+1$,
\begin{equation}
r_2=\rho_2=0,
\qquad
r_1=\frac{N+1+\gamma}{4},
\qquad
\rho_1=\frac{\gamma-N-3}{4},
\label{CBIparam}
\end{equation}
the recurrence relation becomes
\begin{equation}
W_{n+1}(x)
+
\frac{n(N+1-n)\left((N+1-2n)^2-\gamma^2\right)}
{16(N-2n)(N+2-2n)}
W_{n-1}(x)
=
xW_n(x),
\end{equation}
which is that of the para-Krawtchouk polynomials up to an affine change of
variable:
\begin{equation}
P_n(x)
=
2^n\,
W_n\!\left(
\frac{x}{2}
-\frac{N-1+\gamma}{4}
\right).
\end{equation}
The resolution lies in the coefficients $\tau_n$.  For $N$ odd, the
truncation conditions $\tau_0=\tau_{N+1}=0$ are met \cite{CBI} when one of
\begin{equation}
\text{(i)}\quad r_1-\rho_1=\frac{N+2}{2},
\qquad
\text{(ii)}\quad r_1+r_2=\frac{N+1}{2},
\qquad
\text{(iii)}\quad r_2-\rho_1=\frac{N+2}{2}
\end{equation}
holds.  The parameters \eqref{CBIparam} satisfy the first of these, so that
the family terminates at degree $N$ and the para-Krawtchouk polynomials span
an $(N+1)$-dimensional complementary Bannai--Ito module.

\subsection{Inside the Bannai--Ito family, $N$ even}

The Bannai--Ito polynomials \cite{BI} satisfy the recurrence relations given by \eqref{BIrec} with coefficients \eqref{BIAn}, \eqref{BICn}. Setting, for $N=2j$,
\begin{equation}
r_2=\rho_2=0,
\qquad
r_1=\frac{N-\gamma+2}{4},
\qquad
\rho_1=\frac{-N-\gamma}{4},
\label{BIparam}
\end{equation}
the recurrence relation becomes
\begin{equation}
    B_{n+1}+\frac{1}{4}\left(\frac{(N+1)(\gamma-1)}
    {(2n-N-1)(2n+1-N)}-1\right)B_n
    +\frac{n(N+1-n)\left((2n-N-1)^2-(\gamma-1)^2\right)}
    {16(2n-N-1)^2}B_{n-1}=xB_n,
\end{equation}
so that
\begin{equation}
\widetilde{P}_n(x)
=
2^n\,
B_n\!\left(
\frac{x}{2}
-\frac{N+\gamma}{4}
\right).
\end{equation}
Here again the parameters are not generic.  For $N$ even, $A_NC_{N+1}=0$
holds when one of
\begin{equation}
\begin{aligned}
&\text{(i)}\quad r_1-\rho_1=\frac{N+1}{2},
\qquad
\text{(ii)}\quad r_2-\rho_1=\frac{N+1}{2}, \\
&\text{(iii)}\quad r_1-\rho_2=\frac{N+1}{2},
\qquad
\text{(iv)}\quad r_2-\rho_2=\frac{N+1}{2}\,,
\end{aligned}
\end{equation}
is satisfied, and \eqref{BIparam} satisfies the first.  The family terminates
at degree $N$ and the even para-Krawtchouk polynomials span an
$(N+1)$-dimensional Bannai--Ito module.

\subsection{The involution}
\label{involution}

We can now say what separates the two statements of the opening. The complementary Bannai--Ito algebra is generated by three elements
$\kappa_1$, $\kappa_2$, $\kappa_3$ together with an involution $r$,
$r^2=I$, subject to \cite{CBI}
\begin{equation}
\begin{aligned}
{}[\kappa_1,r]&=0, \qquad \{\kappa_2,r\}=2\delta_3, \qquad
\{\kappa_3,r\}=0, \qquad [\kappa_1,\kappa_2]=\kappa_3,\\[1ex]
[\kappa_1,\kappa_3]&=\tfrac{1}{2}\{\kappa_1,\kappa_2\}
-\delta_2\kappa_3r-\delta_3\kappa_1r+\delta_1\kappa_2-\delta_1\delta_3r,\\[1ex]
[\kappa_3,\kappa_2]&=\tfrac{1}{2}\kappa_2^2+\delta_2\kappa_2^2r
+2\delta_3\kappa_1r+2\delta_3\kappa_3r+\kappa_1+\delta_4r+\delta_5 .
\end{aligned}
\label{CBIalgebra}
\end{equation}
It is realized by $\kappa_1=\mathcal{D}_\alpha$, the second order Dunkl shift operator given in \cite{CBI}, $\kappa_2=x$, a third
operator $\kappa_3$, and the involution
$P=R+\frac{\rho_2}{x}(I-R)$, with structure constants \cite{CBI}
\begin{equation}
\delta_1=\alpha(g-\alpha+1),
\qquad
\delta_2=g-2\alpha+\tfrac{3}{2},
\qquad
\delta_3=\rho_2 .
\label{CBIstruct}
\end{equation}
Two facts settle the matter. First, the involution crosses the sub-lattices.  On the Bannai--Ito grid,
$Rx_k=x_{k-1}$ for $k$ even and $x_{k+1}$ for $k$ odd \cite{CBI}: the
reflection exchanges the even and odd sites.  It therefore violates
condition (ii) of Proposition \ref{prop:mechanism}, while $\kappa_1$ and
$\kappa_2$ satisfy it.  A module carrying $r$ has no reason to split, and it
does not. Second, at the para-Krawtchouk values the involution drops out of the
relations.  The operator that the para-Krawtchouk polynomials diagonalize is
$\mathcal{D}_\alpha$ at $\alpha=\frac{1-N}{4}$ \cite{CBI}, and the
parameters \eqref{CBIparam} give $\rho_2=0$ and
\begin{equation}
g=\rho_1+\rho_2-r_1-r_2=-\frac{N+2}{2},
\end{equation}
whence, by \eqref{CBIstruct},
\begin{equation}
\delta_2=\delta_3=0.
\end{equation}
The relations \eqref{CBIalgebra} therefore close on
$\kappa_1$, $\kappa_2$ alone:
\begin{equation}
\begin{aligned}
[\kappa_1,[\kappa_1,\kappa_2]]&=\tfrac{1}{2}\{\kappa_1,\kappa_2\}+\delta_1\kappa_2,\\
[[\kappa_1,\kappa_2],\kappa_2]&=\tfrac{1}{2}\kappa_2^2+\kappa_1+\delta_5,
\end{aligned}
\end{equation}
which are the relations of the Hahn algebra \eqref{hahnalgebra}, and $r$ no
longer appears.

The situation is thus the following.  The same $(N+1)$-dimensional space
carries a module for two algebras, one contained in the other: the
complementary Bannai--Ito algebra, and the Hahn subalgebra obtained by
dropping the involution.  Over the larger algebra the module is irreducible,
because the involution moves each sub-lattice onto the other and no proper
subspace is stable.  Over the subalgebra it is the direct sum of two
irreducible Hahn modules, by Theorem \ref{thm:PK}.  Passing from an
irreducible representation to a reducible one by restricting to a subalgebra
is a standard situation.  Here the element that carries it is the involution:
it holds the two halves of the bi-lattice together, and the decomposition of
Section \ref{reducibility} is what is left when it is removed.

For $N$ even the account is not the same, and the difference is worth
setting out.  The Bannai--Ito algebra has no involution among its
generators: it is generated by $X$, $Y$, $Z$ subject to \eqref{BIalgebra}.
The reflection is present all the same, inside $X=2L+\kappa$, since the Dunkl
shift operator \eqref{BIdunkl} is built from $R$ and $T^{+}R$.  At the values
\eqref{BIparam} one finds
\begin{equation}
\kappa=\rho_1+\rho_2-r_1-r_2+\tfrac12=-\frac{N}{2},
\end{equation}
and $X$ is diagonal on the polynomials with eigenvalues
$\mu_n=(-1)^n(n+\kappa)$ \cite{BI}.  The operator $Y$ of Section
\ref{Hahn algebra} is diagonal on the same basis with eigenvalues
$\lambda_n=2n(n-N)$, and
\begin{equation}
\lambda_n=2\mu_n^2-\frac{N^2}{2},
\qquad\text{that is}\qquad
Y=2X^2-\frac{N^2}{2}\,I .
\label{YX2}
\end{equation}
The Hahn operator is therefore the square of the Bannai--Ito generator, not
the generator itself.  Squaring disposes of the reflection, $R^2=I$, and with
it of the crossing of the sub-lattices; the Hahn algebra is here the
subalgebra generated by the multiplication operator and by $X^2$, and it does
not contain $X$.

The two parities thus reach the same end by two routes.  In both, the
operator that crosses the sub-lattices is what leaves: for $N$ odd because
the two structure constants carrying the involution vanish, for $N$ even
because one passes to the square of the generator carrying the reflection.

\section{Occurrences}
\label{occurrences}

This section sets out the constructions through which the para families
present themselves, and the settings in which they have been put to use.

\subsection{Constructions}
\label{constructions}

These families present themselves along several routes.  Three of them have
already been traveled in the preceding sections and are recalled here only
so that they may be set beside the others; the remaining three are taken up
in the subsections that follow.

\paragraph{Routes already traveled.}
The first is the inverse spectral problem, by which the para-Krawtchouk
polynomials were met: a spectrum is posited and the Jacobi matrix having it
is reconstructed by the orthogonal polynomials attached to it, the perfect
transfer condition leaving the shape of the grid free \cite{parakrawtchouk}.
Nothing hypergeometric is assumed at the outset there, and the hypergeometric
representation is found afterwards.  The second is the singular
truncation of Section \ref{generalprop} and of Section \ref{GM}, which
terminates the recurrence relation without terminating the hypergeometric
series at the same stroke.  It is singular in that it introduces a pole in
the recurrence coefficients near $n=N/2$; a parametrization followed by a
limit removes the pole, and the splitting of the polynomials according to the
degree is what remains of it.  The third consists of the limits internal to the scheme,
established in Subsections \ref{PRsec} and following: the $q$-para-Racah
polynomials stand at the top, the para-Racah polynomials are their $q\to1$
limit and the para-Bannai--Ito polynomials their $q\to-1$ limit, while the
para-Krawtchouk polynomials follow from the para-Racah ones as the quadratic
bi-lattice degenerates to a linear one.

The first two deserve a remark.  They are unrelated undertakings: one starts
from a spectrum and knows nothing of hypergeometric functions, the other
starts from a hypergeometric family and cuts it.  That they arrive at the
same polynomials is a fact about these families, not a foregone conclusion.

The three routes that follow have not been used so far in this paper.

\paragraph{Removal of a pole.}
In the construction of $XX$ chains of $q$-Racah type, the second persymmetry
condition, $\alpha\beta q^{N+1}=1$, makes the recurrence coefficients
diverge.  A suitable parametrization followed by a limit removes the pole,
and what comes out is the family of $q$-para-Racah polynomials, together with
a second family of chains with perfect state transfer \cite{qRacahPST}.  Here
the para family is the regularization of a classical one at a singular point
of its parameter space.  Subsection \ref{PSTsec} returns to this from the
side of the physical model.

\paragraph{Tridiagonal representations of quadratic algebras.}
One may instead start from an algebra with two generators and ask for the
representations in which one generator acts diagonally and the other
tridiagonally.  The overlaps between the two bases are then orthogonal
polynomials, and which family they are is decided by the parameters of the
algebra.  The deformed Jordan plane yields the para-Krawtchouk polynomials at
one value of its deformation parameter and the ordinary Hahn polynomials at
another \cite{jordan}.  A related construction, in which the generators are
assembled from a five-dimensional set of elementary difference operators,
gives the para-Krawtchouk polynomials as the basis of a finite-dimensional
representation of a degenerate Sklyanin algebra \cite{sklyanin}.  Subsections
\ref{skl} and \ref{jordanplane} set both out.

\paragraph{Darboux transformations.}
Removing the last spectral point by a Christoffel transform takes the family
with $N$ odd to the family with $N$ even, as in Section \ref{generalprop},
and preserves perfect state transfer \cite{PST_construct}.  More generally,
periodic reductions of the factorization chain, which are Darboux
transformations of the Jacobi matrix taken with a period, were seen to
produce the Hahn algebra together with sub-lattices of the spectral points
\cite{SVZ}.  The para-Krawtchouk polynomials are an instance of such a
closure, with period four.

\subsection{Perfect state transfer}
\label{PSTsec}

The para-Krawtchouk polynomials were found as a solution of the perfect state
transfer conditions \cite{parakrawtchouk}.  For the XX spin chain
\begin{equation}
H =
\frac{1}{2}\sum_{l=0}^{N-1}
J_{l+1}
\left(
\sigma_l^x \sigma_{l+1}^x
+
\sigma_l^y \sigma_{l+1}^y
\right)
+
\frac{1}{2}\sum_{l=0}^{N}
B_l
\left(
\sigma_l^z + 1
\right),
\end{equation}
the one-excitation dynamics is carried by the Jacobi matrix $J$ with entries
$J_l$ and $B_l$, and perfect state transfer means that
\begin{equation}
    e^{iTJ}\ket{e_0}=e^{i\phi}\ket{e_N}
\end{equation}
for some time $T$ and some real phase $\phi$.  This holds if and only if
\cite{kay}
\begin{enumerate}[label=(\roman*)]
    \item the eigenvalues satisfy $x_{s+1}-x_s=\frac{\pi}{T}M_s$ with $M_s$
    positive odd integers, and
    \item $J$ is mirror-symmetric, $RJR=J$, with $R$ as in
    \eqref{reflexmatrix}.
\end{enumerate}
The second condition is equivalent to either of
\begin{equation}
\text{(ii$'$)}\quad
w_s=\frac{1}{\left|P_{N+1}'(x_s)\right|}>0,
\qquad\qquad
\text{(ii$''$)}\quad
P_N(x_s)=(-1)^{N+s}\,,
\end{equation}
as shown in \cite{PST_construct}.
These two forms are
constructive: from a spectrum satisfying (i), the polynomials $P_{N+1}$ and
$P_N$ are written down using (ii$'$), and the remaining $P_n$ and the matrix
$J$ follow by Euclidean division.  Applied to the bi-lattice
\eqref{evenspec}--\eqref{oddspec}, this construction returns the
para-Krawtchouk polynomials, the transfer condition being met when
$\gamma=M_1/M_2$ with $M_1$, $M_2$ coprime positive integers and $M_1$ odd.
The same circle of ideas yields chains with almost perfect state transfer,
where the transfer probability can be brought arbitrarily close to one
without the spectral condition (i) being met exactly \cite{APST}.

The other para families enter the same problem on their own grids.  Taking
the couplings from the recurrence coefficients of the $q$-Racah polynomials
gives a chain with perfect state transfer whose one-excitation spectrum is
$q$-quadratic; the persymmetry conditions that the couplings must satisfy
come in two kinds, and the second of them makes the coefficients diverge.
The $q$-para-Racah polynomials remove that divergence and furnish a second
family of such chains, with eigenvalues and parameter constraints of their
own \cite{qRacahPST}.  This is the construction described in Subsection
\ref{constructions}, seen from the side of the physical model.

\subsection{Fractional revival}

Fractional revival is the weaker requirement that the excitation be
distributed coherently between the two ends of the chain at some time, and
reconstituted later.  Couplings realizing it had been obtained numerically
\cite{BCB}.  The recurrence coefficients of the para-Krawtchouk polynomials
provide an analytic model of the effect, with two parameters and a revival
time that does not depend on the length of the chain \cite{revival}.  The
protocol of \cite{XKT} combines fractional revival on such a chain with
dual-rail encoding to transfer a state perfectly in an average time below the
bound that holds for a single chain. Analogous
analytic models have since been obtained from the para-Racah
\cite{LemayVinetZhedanov2016} and $q$-para-Racah \cite{SchererVinetZhedanov2022}
polynomials.

\subsection{Classical mass-spring chains}

The two effects above are not intrinsically quantum.  A chain of masses
joined by springs, with masses $m_l$ and spring constants $k_l$ suitably
chosen, transmits a pulse from one end to the other without dispersion, and
redistributes momentum periodically between the two ends: a Newton's cradle
whose behaviour is exact rather than approximate.  The equations of motion of
such a chain reduce, in the harmonic regime, to an eigenvalue problem for a
Jacobi matrix, and the same inverse spectral problem is at work.  The
para-Racah polynomials determine the masses and spring constants of chains
with perfect transfer and fractional revival \cite{cradle}, and the $q$-Racah
polynomials do the same on a $q$-quadratic grid \cite{cradleq}.  The para
families thus describe classical mechanical devices as readily as quantum
spin chains, the polynomials being attached to the Jacobi matrix rather than
to any particular physical interpretation of it.

\subsection{The degenerate Sklyanin algebra}
\label{skl}

This is the setting in which the para-Krawtchouk polynomials were given the
algebraic interpretation they had been lacking \cite{sklyanin}.  We recall
the construction, and how the para-Krawtchouk polynomials enter~it.

An S-Heun operator on the linear grid \eqref{SHeundef}
is a difference operator $S=A_1T_++A_2T_-$, without diagonal term, that sends
polynomials of degree $n$ to polynomials of degree at most $n+1$.  Imposing
that condition on $1$ and on $x$ fixes $A_1$ and $A_2$ up to five free
parameters, so that the S-Heun operators form a five-dimensional linear
space.  A basis is
\begin{subequations}\label{eq:operators}
\begin{align}
L &= \tfrac{1}{2}\left[T_{+}-T_{-}\right], \\
M_{1} &= \tfrac{1}{2}\left[T_{+}+T_{-}\right], \\
M_{2} &= \tfrac{1}{2}x\left[T_{+}-T_{-}\right], \\
R_{1} &= \tfrac{1}{2}x\left[(1-2x)T_{+}+(1+2x)T_{-}\right], \\
R_{2} &= \tfrac{1}{2}x\left[T_{+}+T_{-}\right],
\end{align}
\end{subequations}
with $T_{\pm}f(x)=f(x\pm1)$.  The operator $L$ lowers the degree of a
polynomial by one, $M_1$ and $M_2$ leave it unchanged, and $R_1$, $R_2$ raise
it by one.

Take now the three operators that leave the degree unchanged.  Using the
quadratic relations among the generators, the most general quadratic
expression in $L$, $M_1$, $M_2$ can be brought to
\begin{equation}
Q = \alpha_1 L^2 + \alpha_2 LM_1 + \alpha_3 LM_2 + \alpha_4 M_1^2
  + \alpha_5 M_1M_2 + \alpha_6 M_2^2 .
\label{Qstab}
\end{equation}
Substituting the expressions \eqref{eq:operators} makes $Q$ a second-order
difference operator, and its eigenvalue equation is, after rearrangement,
the difference equation \eqref{diffc_hahn} of the continuous Hahn
polynomials, the four parameters $a,b,c,d$ being determined by the
$\alpha_i$.  This is what is meant by saying that the continuous Hahn
polynomials come from quadratic combinations of S-Heun operators: they are
the eigenfunctions of the most general quadratic expression in the three
degree-preserving generators.

The other bispectral operator is obtained just as concretely.  Multiplication
by the variable is the commutator of two of the generators,
\begin{equation}
X=[M_{2},R_{2}].
\label{XSHeun}
\end{equation}
Both bispectral operators are therefore quadratic in the S-Heun basis.  So
are the four structure operators of Kalnins and Miller --- the forward,
backward and two contiguity operators --- which are linear combinations of
the five generators; and the Heun operator attached to the continuous Hahn
polynomials is recovered as the most general quadratic combination that
raises the degree by at most one.  The five S-Heun operators are thus the
elementary blocks from which the whole apparatus of the family is assembled.

We come to the algebra.  Writing $\nu=\frac{1}{2}(N-1)$ and setting
\begin{equation}
\begin{aligned}
A &= 2(\nu+1)M_{1}-2M_{2},
&\qquad
B &= \tfrac{1}{2}(2\nu+1)(2\nu+3)L-R_{1}-(4\nu+3)R_{2},\\
C &= L,
&\qquad
D &= M_{1},
\end{aligned}
\end{equation}
one obtains the quadratic relations
\begin{equation}
\begin{aligned}
[C,D] &= 0,
& [A,C] &= \{C,D\},
& [A,D] &= \{C,C\},\\
[B,C] &= \{D,A\},
& [B,D] &= \{C,A\},
& [B,A] &= \{B,D\},
\end{aligned}
\end{equation}
which define $Skl_4$, a degeneration of the Sklyanin algebra.

The realization above involves $\nu$, and it yields a finite-dimensional
representation precisely when $\nu$ is an integer or a half-integer, that is,
when
\begin{equation}
1-(a+b+c+d)=N,
\qquad N\in\mathbb{Z}_{>0}.
\end{equation}
This is the truncation condition \eqref{truncation} of Section
\ref{generalprop}.  What was introduced there as a way of terminating the
recurrence relation of the continuous Hahn polynomials is, on the algebraic
side, the condition for $Skl_4$ to have a representation of dimension $N+1$;
and the basis of that representation is the para-Krawtchouk family.  This is
the algebraic interpretation of these polynomials, and it had been missing
until \cite{sklyanin}.

The two bispectral operators can then be written out in the S-Heun basis.
Using \eqref{parametrization} and \eqref{gamma}, the parameters are related
by
$a+d=-j-\gamma/2$, and writing $a-d=\rho$, the para-Krawtchouk difference
operator reads
\begin{equation}
    Y=N+1+ \left(N-\left(j+\tfrac{\gamma}{2}\right)^2+\rho^2
    +j\left(j+\tfrac{\gamma}{2}\right)\right)L^2
    + 2L\left(\rho(1-j)M_1-\rho M_2-\tfrac{1}{4}R_1
    -\left(j+\tfrac{3}{4}\right)R_2\right),
\end{equation}
for $N=2j+1$, while multiplication by $x$ is \eqref{XSHeun}.  The same
construction on the $q$-linear grid gives the $q$-para-Krawtchouk polynomials
as the basis of a finite-dimensional representation of $U_q(sl_2)$
\cite{sklyanin}.
\subsection{The deformed Jordan plane}
\label{jordanplane}

The para-Krawtchouk polynomials also appear among the tridiagonal
representations of quadratic algebras \cite{jordan}.  Let $\mathcal{A}$ be
generated over $\mathbb{R}$ by $X$ and $Z$ subject to
\begin{equation}
[Z,X]=Z^2+\Delta,
\end{equation}
and let the generators act on an orthonormal basis $\ket{n}$ by
\begin{equation}
\begin{aligned}
X\ket{n}
&= c_n\ket{n-1}+b_n\ket{n}+a_n\ket{n+1},\\
Z\ket{n}
&= u_n\ket{n-1}+v_n\ket{n}+w_n\ket{n+1}.
\end{aligned}
\end{equation}
The overlaps $q_n(x)=\braket{x|n}$ with the eigenvectors of $X$ are
polynomials, and their monic versions $p_n$ satisfy a recurrence relation
whose coefficients are determined by the algebra:
\begin{equation}
a_{n-1}c_n
=
(n+\phi_0)(n-\delta_0-1)
\frac{
n(n+\phi_0-\delta_0-1)
\left[
\Delta(2n+\phi_0-\delta_0-1)^2
+
v_0^2(\phi_0-\delta_0+1)^2
\right]
}{
(2n+\phi_0-\delta_0-1)^2
(2n+\phi_0-\delta_0)
(2n+\phi_0-\delta_0-2)
},
\end{equation}
\begin{equation}
b_n
=
\frac{1}{2}
\frac{
(\delta_0+\phi_0+1)
(\phi_0-\delta_0+1)
(\phi_0-\delta_0-1)v_0
}{
(2n+\phi_0-\delta_0-1)
(2n+\phi_0-\delta_0+1)
}
+\widetilde{b}_0,
\end{equation}
with $\phi_n=c_n/u_n=\phi_0+n$, $\delta_n=a_n/w_n=\delta_0-n$ and
$\widetilde{b}_0=b_0-\frac{1}{2}(\delta_0+\phi_0+1)v_0$.  Take $\Delta=-1$,
$N=2j+p$ and
\begin{equation}
\phi_0=-j-1+\frac{\gamma}{2}+t,
\quad
-\delta_0=-j+1-\frac{\gamma}{2}-p+t,
\quad
v_0=\frac{1-p}{N-1},
\quad
\widetilde{b}_0=\frac{1}{2}(N+\gamma-1).
\label{pkquad}
\end{equation}
In the limit $t\to0$ these give the para-Krawtchouk recurrence coefficients
\eqref{recurrenceB}--\eqref{recurrenceU} for either parity.

The same construction with $\Delta=-\frac{1}{4}$ and
\begin{equation}
\phi_0=\beta,
\qquad
-\delta_0=\alpha+1,
\qquad
v_0=-\frac{\alpha+\beta+2N+2}{2(\alpha+\beta+2)},
\qquad
\widetilde{b}_0=\frac{1}{4}(2N-\alpha+\beta)
\label{hahnquad}
\end{equation}
gives the Hahn polynomials $Q_n(x;\alpha,\beta,N)$.  Substituting for
$\alpha$, $\beta$ and $N$ the parameters of the submodules,
\eqref{oparame} and \eqref{oparamo}, turns \eqref{hahnquad} into
\eqref{pkquad}: the decomposition of Section \ref{reducibility} is visible in
this setting as well.

\subsection{Orthogonal polynomials on bi-lattices}
\label{bilattices}

Orthogonal polynomials on grids formed of two interlaced lattices were
studied before the para families, in connection with the Meixner and Charlier
polynomials and with birth and death processes \cite{SVA}.  What the para
families add to that setting is bispectrality: the difference equation gives
the operator $Y$ of Section \ref{Hahn algebra} and, through it, the
decomposition established here, and a family orthogonal on a bi-lattice but
not bispectral has no such operator.

The closer neighbour is the doubling of Oste and Van der Jeugt
\cite{OVdJdouble}.  Two families of one classical type whose parameters
differ by units are made to obey a common pair of relations,
\begin{equation}
a(n)y_n+b(n)y_{n+1}=\widehat{d}(x)\,\widehat{y}_n,
\qquad
\widehat{a}(n)\widehat{y}_n+\widehat{b}(n)\widehat{y}_{n+1}=d(x)\,y_{n+1},
\label{doubling}
\end{equation}
with $a,b,\widehat a,\widehat b$ depending on $n$ alone and $d,\widehat d$ on
$x$ alone; they then assemble into a single Jacobi matrix with zero diagonal
and explicit spectrum, symmetric about the origin, from which they are
recovered as the even and odd degree parts.  The doubles of the Hahn, dual
Hahn and Racah families are classified in \cite{OVdJdouble}.  That the two
members of a double are related by a Christoffel transform holds for every
quadratic decomposition: if
\begin{equation}
Q_{2n}(x)=P_n\bigl(\pi(x)\bigr),
\qquad
\pi(x)=(x-a)(x-b)+c,
\label{quaddec}
\end{equation}
then orthogonality forces $Q_{2n+1}(x)=(x-b)P^{*}_n(c;\pi(x))$, with
$P^{*}_n(c;\,\cdot\,)$ the kernel polynomials of $P_n$, that is, its
Christoffel transform at $c$ \cite{chihara}.  What the classification determines
is when that transform stays within the family, and in every case it finds a
unit shift of the parameters.  The generalized Hermite polynomials, two
Laguerre families of parameters $\alpha$ and $\alpha-1$ under one weight, are
the classical instance \cite{rosenblum}; each double underlies a finite
oscillator model whose algebra extends $su(2)$ by an involution,
$[J_+,J_-]=2J_0+2cP$, a Bannai--Ito algebra with two parameters set to zero
\cite{OVdJsu2P}.

Both constructions carry two families of one classical type on a single
Jacobi matrix, and they differ in the direction of the split.  A double splits
by the parity of the degree --- this is \eqref{quaddec} with $a=b=c=0$,
Chihara's construction for symmetric polynomials \cite{chihara},
$p_{2n}(x)=P_n(x^2)$ and $p_{2n+1}(x)=xR_n(x^2)$ --- whereas the
decomposition of Section \ref{reducibility} splits by the parity of the
spectral point.  The para-Krawtchouk polynomials with $N$ odd admit both,
since $B_n$ is constant by \eqref{recurrenceB} and the shift
$x\to x-\frac12(N-1+\gamma)$ makes them symmetric, and following the first
splitting shows where the two constructions part.  The shifted spectrum is
$\pm\left(2s-j-\frac{\gamma}{2}\right)$, $s=0,1,\ldots,j$, so the two
families of Chihara's construction are orthogonal on the squares of
$\{|2s-j-\frac{\gamma}{2}|\}$; folding a progression of step two about the
origin returns a bi-lattice, whose gaps here alternate between $\gamma$ and
$2-\gamma$, and it is a single progression only at $\gamma=1$.  The grid of a
Hahn, dual Hahn or Racah family is a polynomial of degree at most two in the
index, with constant second differences; those of the squares of the folded
set, read off the two gaps, are constant only at $\gamma=1$.  The
para-Krawtchouk polynomials therefore constitute a double only at $\gamma=1$,
where the bi-lattice \eqref{evenspec}--\eqref{oddspec} is uniform and the
family is the symmetric Krawtchouk one --- the value at which the chain of
Subsection \ref{PSTsec} carries perfect state transfer on a uniform spectrum.

The two members of a double are separated by a unit shift of the parameters,
the shift a Christoffel transform produces when it stays within the family;
the two submodules of a para family are separated by the exchange of a pair
of parameters --- $\alpha\leftrightarrow\beta$ in
\eqref{oparame}--\eqref{oparamo}, $\delta\to-\delta$ in
\eqref{racahcoeffe}--\eqref{racahcoeffo}, $\delta\to\delta^{-1}$ in
\eqref{qracahcoeffe}--\eqref{qracahcoeffo}, $r_1\leftrightarrow r_2$ in
\eqref{oBIe}--\eqref{oBIo} --- an offset that varies continuously with the
bi-lattice.  The involution sits differently as well: it is a generator of
the doubling algebra, while the bispectral algebras of Section \ref{GM}
contain none, and Subsection \ref{involution} shows what adjoining one costs.

\section{Concluding remarks}
\label{conclusion}

The para families were built one at a time, each from its own truncation and
each on its own grid.  What has been shown is that they are not new
families in the sense in which the families of the Askey scheme are new: each
of them carries two copies of a classical family, glued along a bi-lattice.
The gluing is governed by the three features of Proposition
\ref{prop:mechanism}, and the pairing $\lambda_n=\lambda_{N-n}$ of the
eigenvalues, which had been recorded for each family separately, reflects the fact that the two summands have the same spectrum. Several questions are left open.

The three features of Proposition \ref{prop:mechanism} are sufficient for the
reduction, and every para family known to us satisfies them.  Whether they
characterize the situation, that is, whether a finite bispectral family whose
module splits in two must be of this form, remains an open question. A related instance concerns the spectrum of the
parabose oscillator, which named the para-Krawtchouk polynomials, has a
natural analogue for the parafermionic and parabosonic oscillators of higher
order. Whether their spectra admit an analogous para
construction, two copies of a classical family glued along a bi-lattice,
is a question we have not investigated.

The doubles classified in \cite{OVdJdouble} join two families of one classical
type separated by a unit shift of the parameters, and Subsection
\ref{bilattices} places the para families outside that classification except
at $\gamma=1$.  Whether the classification admits a separation that varies
continuously, and what such a construction would give at the Racah and
$q$-Racah levels, remain to be examined.  A related question concerns the two families that Chihara's construction
attaches to the para-Krawtchouk polynomials: they are orthogonal on the
squares of a bi-lattice, the shape of the para-Racah grid, and they are
related, by \eqref{quaddec}, through a Christoffel transform.  Whether they
are in fact para-Racah families is a question of the same kind.

The decomposition has been established at the level of modules.  Whether it
can be seen at the level of the algebras themselves, through a map that does
not refer to a particular representation, is a further question. A related gap concerns the para-Racah and the para-Bannai-Ito polynomials, the irreducibility of the two summands fails
on the exceptional sets \eqref{PRexcept} and \eqref{exceptionPBI}.  The decomposition of the module has
not been examined. Finally, the whole discussion has been carried out for polynomials in one
variable.  Multivariate extensions of the families of the Askey scheme are
available, and whether a para construction and its decomposition exist in
that setting is open.

\section*{Acknowledgements}

The authors are grateful to Marlon Josue Rivera Valladares for fruitful discussions. NC and RN thank the Centre de Recherches Math\'ematiques (CRM) for its hospitality. RN is supported in part by the National Science Foundation under grants PHY 2310594 and PHY 2609869. LV is funded in part through a Discovery Grant of the Natural Sciences and Engineering Research Council (NSERC) of Canada; SB, QL and LM enjoy scholarships and fellowships provided by this fund. 

\section*{Conflict of interest}

The authors declare that they have no conflict of interest.

\section*{Data availability statement}

No new data were created or analysed in this study.

\appendix

\section{The classical families}
\label{app:classical}

We collect here the classical families of polynomials that appear in Sections
\ref{reducibility} and \ref{GM}, following \cite{koekoek}.  

\subsection{Askey--Wilson polynomials}
\label{app:askeywilson}

The Askey--Wilson polynomials  are sitting atop the continuous part of the $q$-Askey scheme. They admit a simple expression in terms of the $_4\phi_3$ $q$-hypergeometric function
\begin{equation}
W_n(x;a,b,c,d\,|\,q)
=
{}_4\phi_3\!\left(
\begin{matrix}
q^{-n},\; abcdq^{n-1},\; ae^{i\theta},\; ae^{-i\theta}\\
ab,\; ac,\; ad
\end{matrix}
;q,q
\right).
\label{awdef}
\end{equation}
in the variable $x=\cos\theta$. They satisfy the three-term recurrence relation
\begin{equation}
2x\,W_n(x)
=
A_n\,W_{n+1}(x)
+
\bigl(a+a^{-1}-A_n-C_n\bigr)W_n(x)
+
C_n\,W_{n-1}(x),
\label{awrec}
\end{equation}
with
\begin{align}
A_n
&=
\frac{
(1-abq^n)(1-acq^n)(1-adq^n)(1-abcdq^{n-1})
}{
a\bigl(1-abcdq^{2n-1}\bigr)\bigl(1-abcdq^{2n}\bigr)
},
\label{awA}
\\
C_n
&=
\frac{
a(1-q^n)(1-bcq^{n-1})(1-bdq^{n-1})(1-cdq^{n-1})
}{
\bigl(1-abcdq^{2n-2}\bigr)\bigl(1-abcdq^{2n-1}\bigr)
}.
\label{awC}
\end{align}
The Askey--Wilson polynomials obey the $q$-difference equation
\begin{equation}
q^{-n}(1-q^n)(1-abcdq^{n-1})\,W_n(x)
=
A(\theta)\bigl[T_+W_n(x)-W_n(x)\bigr]
+
\overline{A}(\theta)\bigl[T_-W_n(x)-W_n(x)\bigr],
\label{awdiff}
\end{equation}
where $T_\pm$ act on $e^{\pm i\theta}$ by $T_\pm e^{i\theta}=q^\pm e^{i\theta}$,
$T_\pm e^{-i\theta}=q^{\mp}e^{-i\theta}$, and
\begin{equation}
A(\theta)
=
\frac{
(1-ae^{i\theta})(1-be^{i\theta})(1-ce^{i\theta})(1-de^{i\theta})
}{
(1-e^{2i\theta})(1-qe^{2i\theta})
},
\end{equation}
with $\overline{A}(\theta)=A(-\theta)$ its complex conjugate.

\subsection{Wilson polynomials}
\label{app:wilson}

The Wilson polynomials $\widetilde W_n(x^2;a,b,c,d)$ are defined in terms of the $_4F_3$ hypergeometric function
\begin{equation}
\widetilde W_n(x^2;a,b,c,d)=
{}_4F_3\!\left(
\begin{matrix}
-n,\; n+a+b+c+d-1,\; a-ix,\; a+ix\\
a+b,\; a+c,\; a+d
\end{matrix}
\,;\,1
\right).
\label{wilsondef}
\end{equation}
They satisfy the three-term recurrence relation
\begin{equation}
-(a^2+x^2)\,\widetilde W_n(x^2)
=
A_n\,\widetilde W_{n+1}(x^2)
-(A_n+C_n)\,\widetilde W_n(x^2)
+C_n\,\widetilde W_{n-1}(x^2),
\label{wilsonrec}
\end{equation}
with
\begin{align}
&A_n=\frac{(n+a+b+c+d-1)(n+a+b)(n+a+c)(n+a+d)}{(2n+a+b+c+d-1)(2n+a+b+c+d)},\label{wilsonA}\\
&C_n=\frac{n(n+b+c-1)(n+b+d-1)(n+c+d-1)}{(2n+a+b+c+d-2)(2n+a+b+c+d-1)}.
\label{wilsonC}
\end{align}
The Wilson polynomials also the  difference equation
\begin{equation}
n(n+a+b+c+d-1)\,\widetilde W_n(x^2)
=
\overline{\mathcal D(x)}\,\widetilde W_n\!\left((x+i)^2\right)
-\bigl(\overline{\mathcal D(x)}+\mathcal D(x)\bigr)\widetilde W_n(x^2)
+\mathcal D(x)\,\widetilde W_n\!\left((x-i)^2\right),
\label{wilsondiff}
\end{equation}
with
\begin{equation}
\mathcal D(x)=\frac{(a+ix)(b+ix)(c+ix)(d+ix)}{(2ix)(2ix+1)}.
\label{wilsonD}
\end{equation}
and $\overline{\mathcal{D}(x)}=\mathcal{D}(-x)$ its complex conjugate. 

\subsection{Continuous Hahn polynomials}
\label{app:chahn}

The continuous Hahn polynomials are given in terms of the ${}_3F_2$
hypergeometric function 
\begin{equation}
p_n(x;a,b,c,d)
=
{}_3F_2\!\left(
\begin{matrix}
-n,\; n+a+b+c+d-1,\; a+ix\\
a+c,\; a+d
\end{matrix}
\,;1
\right), \qquad n\in \mathbb{Z}_{\geq 0}.
\label{hyperc_hahn}
\end{equation}
They satisfy the three-term recurrence relation
\begin{equation}
\label{eq:3t-contHahn}
(a+ix)\,p_n(x)
=
A_np_{n+1}(x)
-
(A_n+C_n)\,p_n(x)
+
C_np_{n-1}(x),
\end{equation}
with
\begin{equation}
\begin{aligned}
A_n
&=-
\frac{(n+a+b+c+d-1)(n+a+c)(n+a+d)}
{(2n+a+b+c+d-1)(2n+a+b+c+d)},
\\[1em]
C_n
&=
\frac{n(n+b+c-1)(n+b+d-1)}
{(2n+a+b+c+d-2)(2n+a+b+c+d-1)},
\end{aligned}
\label{AC}
\end{equation}
and the second-order difference equation
\begin{equation}
n(n+a+b+c+d-1)\,p_n(x)
=
B(x)\,p_n(x+i)
-
\bigl[B(x)+D(x)\bigr]\,p_n(x)
+
D(x)\,p_n(x-i),
\label{diffc_hahn}
\end{equation}
where
\begin{equation}
\begin{aligned}
B(x) &= (c-ix)(d-ix),\\
D(x) &= (a+ix)(b+ix).
\end{aligned}
\end{equation}


\subsection{Racah}
\label{app:racah}
The Racah polynomials are given in terms of the ${}_4F_3$
hypergeometric function,
\begin{equation}
H_n(\lambda(x);\alpha,\beta,\gamma,\delta)=
{}_4F_3\!\left(
\begin{matrix}
-n,\; n+\alpha+\beta+1,\; -x,\; x+\gamma+\delta+1\\
\alpha+1,\; \beta+\delta+1,\; \gamma+1
\end{matrix}
\,;1
\right),
\qquad n=0,1,\ldots,K,
\label{racahhyp}
\end{equation}
where one of $\alpha+1$, $\beta+\delta+1$, $\gamma+1$ equals $-K$ and  $\lambda(x)=x(x+\gamma+\delta+1)$.  The
difference equation is
\begin{equation}
n(n+\alpha+\beta+1)y(x)
=
B(x)y(x+1)
-
\bigl[B(x)+D(x)\bigr]y(x)
+
D(x)y(x-1),
\label{racahdiff}
\end{equation}
where $y(x)=H_n(\lambda(x);\alpha,\beta,\gamma,\delta)$ and
\begin{equation}
\begin{aligned}
B(x)
&=
\frac{
(x+\alpha+1)(x+\beta+\delta+1)(x+\gamma+1)(x+\gamma+\delta+1)
}{
(2x+\gamma+\delta+1)(2x+\gamma+\delta+2)
},
\\[2ex]
D(x)
&=
\frac{
x(x-\alpha+\gamma+\delta)(x-\beta+\gamma)(x+\delta)
}{
(2x+\gamma+\delta)(2x+\gamma+\delta+1)
}.
\end{aligned}
\label{racahBD}
\end{equation}
The recurrence
relation reads
\begin{equation}
\lambda(x)H_n(\lambda(x))
=A_nH_{n+1}(\lambda(x))-(A_n+C_n)H_n(\lambda(x))+C_nH_{n-1}(\lambda(x)),
\end{equation}
\begin{equation}
A_n=\frac{(n+\alpha+1)(n+\alpha+\beta+1)(n+\beta+\delta+1)(n+\gamma+1)}
{(2n+\alpha+\beta+1)(2n+\alpha+\beta+2)},
\end{equation}
\begin{equation}
C_n=\frac{n(n+\alpha+\beta-\gamma)(n+\alpha-\delta)(n+\beta)}
{(2n+\alpha+\beta)(2n+\alpha+\beta+1)}.
\end{equation}

\subsection{Hahn}
\label{app:hahn}
The Hahn polynomials are given in terms of the ${}_3F_2$
hypergeometric function 
\begin{equation}
\label{eq:Hahn-pol-dfn}
Q_n(x;\alpha,\beta,K)=
{}_3F_2\!\left(
\begin{matrix}
-n,\; n+\alpha+\beta+1,\; -x\\
\alpha+1,\; -K
\end{matrix}
\,;1
\right),
\qquad n=0,1,\ldots,K.
\end{equation}
The difference equation is
\begin{equation}
n(n+\alpha+\beta+1)y(x)
=
B(x)y(x+1)-\bigl[B(x)+D(x)\bigr]y(x)+D(x)y(x-1),
\label{hahndiffeq}
\end{equation}
\begin{equation}
B(x)=(x+\alpha+1)(x-K),
\qquad
D(x)=x(x-\beta-K-1),
\label{hahndiffcoeff}
\end{equation}
where $y(x)=Q_n(x;\alpha,\beta,K)$,
and the recurrence relation reads
\begin{equation}
\label{eq:3t-hahn}
-x\,Q_n(x)=A_nQ_{n+1}(x)-(A_n+C_n)Q_n(x)+C_nQ_{n-1}(x),
\end{equation}
\begin{equation}
A_n=\frac{(n+\alpha+\beta+1)(n+\alpha+1)(K-n)}
{(2n+\alpha+\beta+1)(2n+\alpha+\beta+2)},
\qquad
C_n=\frac{n(n+\alpha+\beta+K+1)(n+\beta)}
{(2n+\alpha+\beta)(2n+\alpha+\beta+1)}.
\end{equation}

\subsection{$q$-Racah}
\label{app:qracah}
The $q$-Racah polynomials are given in terms of a $_4\phi_3$ $q$-hypergeometric function,
\begin{equation}
\label{eq:qracah-defi}
\Xi_n(\mu(x);\alpha,\beta,\gamma,\delta\mid q)=
{}_4\phi_3\!\left(
\begin{matrix}
q^{-n},\; \alpha\beta q^{n+1},\; q^{-x},\; \gamma\delta q^{x+1}\\
\alpha q,\; \beta\delta q,\; \gamma q
\end{matrix}
\,;q,q
\right),
\qquad n=0,1,\ldots,K,
\end{equation}
where one of $\alpha q$, $\beta\delta q$, $\gamma q$ equals $q^{-K}$ and $\mu(x)=q^{-x}+\gamma\delta q^{x+1}$.  The
$q$-difference equation is
\begin{equation}
q^{-n}(1-q^{n})(1-\alpha\beta q^{n+1})\,y(x)
=
B(x)y(x+1)-\bigl[B(x)+D(x)\bigr]y(x)+D(x)y(x-1),
\label{qracahdiff}
\end{equation}
with $y(x)=\Xi_n(\mu(x);\alpha,\beta,\gamma,\delta\mid q)$ and
\begin{equation}
\left\{
\begin{aligned}
B(x)&=\frac{(1-\alpha q^{x+1})(1-\beta\delta q^{x+1})
(1-\gamma q^{x+1})(1-\gamma\delta q^{x+1})}
{(1-\gamma\delta q^{2x+1})(1-\gamma\delta q^{2x+2})},
\\[2ex]
D(x)&=\frac{q(1-q^{x})(1-\delta q^{x})
(\beta-\gamma q^{x})(\alpha-\gamma\delta q^{x})}
{(1-\gamma\delta q^{2x})(1-\gamma\delta q^{2x+1})}.
\end{aligned}
\right.
\label{qracahBD}
\end{equation}
It is with \eqref{qracahdiff}--\eqref{qracahBD} that the coefficients of
Subsection \ref{qPRsec} are matched.  The recurrence relation reads
\begin{equation}
-\left(1-q^{-x}\right)\left(1-\gamma\delta q^{x+1}\right)\Xi_n(\mu(x))
=A_n\Xi_{n+1}(\mu(x))-(A_n+C_n)\Xi_n(\mu(x))+C_n\Xi_{n-1}(\mu(x)),
\end{equation}
\begin{equation}
A_n=\frac{(1-\alpha q^{n+1})(1-\alpha\beta q^{n+1})
(1-\beta\delta q^{n+1})(1-\gamma q^{n+1})}
{(1-\alpha\beta q^{2n+1})(1-\alpha\beta q^{2n+2})},
\end{equation}
\begin{equation}
C_n=\frac{q(1-q^{n})(1-\beta q^{n})
(\gamma-\alpha\beta q^{n})(\delta-\alpha q^{n})}
{(1-\alpha\beta q^{2n})(1-\alpha\beta q^{2n+1})}.
\end{equation}

\subsection{Bannai--Ito and complementary Bannai--Ito}
\label{app:BI}
The monic Bannai--Ito polynomials $B_n(x;\rho_1,\rho_2,r_1,r_2)$ \cite{BI} satisfy
\begin{equation}
B_{n+1}(x)
+
(\rho_1 - A_n - C_n)B_n(x)
+
A_{n-1}C_n B_{n-1}(x)
=
xB_n(x),
\label{BIrec}
\end{equation}
with
\begin{equation}
A_n =
\begin{cases}
\dfrac{(n+2\rho_1-2r_1+1)(n+2\rho_1-2r_2+1)}
{4(n+g+1)},
& n \text{ even}, \\[1.2em]
\dfrac{(n+2g+1)(n+2\rho_1+2\rho_2+1)}
{4(n+g+1)},
& n \text{ odd},
\end{cases}
\label{BIAn}
\end{equation}
\begin{equation}
C_n =
\begin{cases}
-\dfrac{n(n-2r_1-2r_2)}
{4(n+g)},
& n \text{ even}, \\[1.2em]
-\dfrac{(n+2\rho_2-2r_2)(n+2\rho_2-2r_1)}
{4(n+g)},
& n \text{ odd},
\end{cases}
\label{BICn}
\end{equation}
and $g$ given by \eqref{PBItruncation}. It is seen from the above formulas that the positivity condition
$u_n=A_{n-1}C_n>0$ cannot be satisfied for all $n\in\mathbb{N}$. Hence it follows that the Bannai--Ito polynomials can only
form a finite set of positive-definite orthogonal polynomials
$B_0(x),\ldots,B_N(x)$, which occurs when the "local" positivity
condition $u_i>0$ for $i\in\{1,\ldots,N\}$ and the truncation conditions
$u_0=0$, $u_{N+1}=0$ are satisfied.

If $N$ is even, it follows from \eqref{BIAn} that the condition
$u_{N+1}=0$ is tantamount to one of the following requirements:
\begin{equation}
\begin{aligned}
1)\quad &r_1-\rho_1=\frac{N+1}{2},
&\qquad
2)\quad &r_2-\rho_1=\frac{N+1}{2},\\[6pt]
3)\quad &r_1-\rho_2=\frac{N+1}{2},
&\qquad
4)\quad &r_2-\rho_2=\frac{N+1}{2}.
\end{aligned}
\end{equation} 
\noindent If $N$ is odd, it follows from \eqref{BIAn} that the condition
$u_{N+1}=0$ is equivalent to one of the following restrictions:
\begin{equation}
\begin{aligned}
\text{i)}\quad &\rho_1+\rho_2=-\frac{N+1}{2},\\[6pt]
\text{ii)}\quad &r_1+r_2=\frac{N+1}{2},\\[6pt]
\text{iii)}\quad &\rho_1+\rho_2-r_1-r_2=-\frac{N+1}{2}.
\end{aligned}
\end{equation}
If these conditions are met, this leads to the standard truncated BI polynomials. They are the eigenfunctions of the Dunkl shift operator
\cite{BI}
\begin{equation}
L=F(x)(I-R)+G(x)\left(T^{+}R-I\right),
\qquad
T^{\pm}f(x)=f(x\pm1),
\quad
Rf(x)=f(-x),
\label{BIdunkl}
\end{equation}
\begin{equation}
F(x)=\frac{(x-\rho_1)(x-\rho_2)}{2x},
\qquad
G(x)=\frac{(x-r_1+\frac12)(x-r_2+\frac12)}{2x+1},
\end{equation}
with eigenvalues
\begin{equation}
\lambda_n=
\begin{cases}
\dfrac{n}{2}, & n \text{ even},\\[2ex]
r_1+r_2-\rho_1-\rho_2-\dfrac{n+1}{2}, & n \text{ odd}.
\end{cases}
\end{equation}
Setting $X=2L+\kappa$ with
$\kappa=\rho_1+\rho_2-r_1-r_2+\frac12$, $Y=2x+\frac12$ and $Z=\{X,Y\}-\omega_3$,
these operators realize the Bannai--Ito algebra \cite{BI}
\begin{equation}
\{X,Y\}=Z+\omega_3,
\qquad
\{Z,Y\}=X+\omega_1,
\qquad
\{X,Z\}=Y+\omega_2,
\label{BIalgebra}
\end{equation}
\begin{equation}
\omega_1=4(\rho_1\rho_2+r_1r_2),
\qquad
\omega_2=2\left(\rho_1^2+\rho_2^2-r_1^2-r_2^2\right),
\qquad
\omega_3=4(\rho_1\rho_2-r_1r_2),
\end{equation}
whose Casimir element is $Q=X^2+Y^2+Z^2$.

The complementary Bannai--Ito polynomials $W_n(x;\rho_1,\rho_2,r_1,r_2)$ are
the kernel polynomials of the $B_n$ with kernel parameter $\rho_1$
\cite{CBI}. The recurrence relation is
\begin{equation}
W_{n+1}(x)
+
(-1)^n \rho_2\, W_n(x)
+
\tau_n W_{n-1}(x)
=
x\,W_n(x),
\label{CBIrec}
\end{equation}
where the coefficients are given by
\begin{equation}
\tau_{2n}
=
-\frac{
n\left(n+\rho_1-r_1+\tfrac12\right)
\left(n+\rho_1-r_2+\tfrac12\right)
\left(n-r_1-r_2\right)
}{
(2n+g)(2n+g+1)
},
\label{CBItaueven}
\end{equation}
\begin{equation}
\tau_{2n+1}
=
-\frac{
(n+g+1)
\left(n+\rho_1+\rho_2+1\right)
\left(n+\rho_2-r_1+\tfrac12\right)
\left(n+\rho_2-r_2+\tfrac12\right)
}{
(2n+g+1)(2n+g+2)
},
\label{CBItauodd}
\end{equation}
and $g$ is given by \eqref{PBItruncation}. It is seen from \eqref{CBItaueven}--\eqref{CBItauodd} that the condition $\tau_n>0$
cannot be ensured for all $n$ and hence the CBI polynomials can only form a finite system of
positive-definite orthogonal polynomials
$W_0(x),\ldots,W_N(x)$, provided that the "local" positivity
$\tau_n>0$, $n\in\{1,\ldots,N\}$, and truncation conditions
$\tau_0=0$ and $\tau_{N+1}=0$ are satisfied.

When $N$ is even, the truncation conditions $\tau_0=0$ and
$\tau_{N+1}=0$ are equivalent to one of the four prescriptions
\begin{equation}
\begin{aligned}
1)\quad &\rho_2-r_1=-\frac{N+1}{2},
&\qquad
2)\quad &\rho_2-r_2=-\frac{N+1}{2},\\[6pt]
3)\quad &\rho_1+\rho_2=-\frac{N+2}{2},
&\qquad
4)\quad &g=-\frac{N+2}{2}.
\end{aligned}
\end{equation}
When $N$ is odd, the truncation conditions
$\tau_0=0$ and $\tau_{N+1}=0$ are tantamount to
\begin{equation}
\begin{aligned}
\text{i)}\quad &r_1-\rho_1=\frac{N+2}{2},\\[6pt]
\text{ii)}\quad &r_1+r_2=\frac{N+1}{2},\\[6pt]
\text{iii)}\quad &r_2-\rho_1=\frac{N+2}{2}.
\end{aligned}
\end{equation}
If these conditions are met, it leads to the standard truncated CBI polynomials.
Their algebra is \eqref{CBIalgebra}.

On the Bannai--Ito grid, the operators appearing in \eqref{BIdunkl} preserve
the set of spectral points and act on neighbouring indices: $R$ carries
$x_{2s}$ to $x_{2s-1}$ and $T^{+}R$ carries $x_{2s}$ to $x_{2s+1}$
\cite{BI,CBI}.  This is the property used in Subsection \ref{involution}.

\bibliographystyle{unsrt}
\bibliography{parafamilies}

\end{document}